\documentclass[12pt]{article}

\usepackage[T1]{fontenc}
\ifdefined\pdfminorversion\pdfminorversion=7\fi
\usepackage{lmodern}
\usepackage{amsmath,amssymb,mathtools,amsfonts}
\IfFileExists{stmaryrd.sty}{%
  \usepackage{stmaryrd}%
}{%
  \providecommand{\llbracket}{\mathopen{[\mkern-3mu[}}%
  \providecommand{\rrbracket}{\mathclose{]\mkern-3mu]}}%
}
\usepackage{tikz}
\usepackage{booktabs,array,multirow,tabularx,longtable}
\usepackage[a4paper,margin=25mm]{geometry}
\usepackage[numbers,sort&compress]{natbib}
\newcolumntype{Y}{>{\raggedright\arraybackslash}X}
\usepackage{graphicx}
\usepackage{microtype}
\usepackage{xurl}
\usepackage{xcolor}
\usepackage{enumitem}
\usepackage{pifont}
\usepackage{verbatim}
\usepackage{listings}
\usepackage{float}
\usepackage[section]{placeins}
\usepackage{amsthm}
\usepackage[hidelinks]{hyperref}
\usepackage[capitalize]{cleveref}

\setlist[itemize]{leftmargin=*,itemsep=2pt,topsep=3pt}
\setlist[enumerate]{leftmargin=*,itemsep=2pt,topsep=3pt}
\theoremstyle{plain}
\newtheorem{theorem}{Theorem}[section]
\newtheorem{proposition}[theorem]{Proposition}
\newtheorem{lemma}[theorem]{Lemma}
\newtheorem{corollary}[theorem]{Corollary}
\theoremstyle{definition}
\newtheorem{definition}[theorem]{Definition}
\newtheorem{assumption}[theorem]{Assumption}
\theoremstyle{remark}
\newtheorem{remark}[theorem]{Remark}

\newcommand{\Bool}{\mathbb{B}}
\newcommand{\Real}{\mathbb{R}}

\newcommand{\one}{\mathbf{1}}
\newcommand{\Law}{\operatorname{Law}}

\newcommand{\Risk}{\operatorname{Risk}}
\newcommand{\Delay}{\operatorname{Delay}}

\newcommand{\Nor}{\operatorname{NOR}}
\newcommand{\Bad}{\operatorname{Bad}}
\newcommand{\Good}{\operatorname{Good}}

\newcommand{\supp}{\operatorname{supp}}

\newcommand{\eps}{\varepsilon}
\newcommand{\thetaSet}{\Theta}
\newcommand{\tech}{\mathcal{T}}
\newcommand{\context}{\mathcal{H}}
\newcommand{\witness}{\mathcal{W}}
\newcommand{\cert}{\Pi}

\newcommand{\Uadm}{\mathcal{U}_{\mathrm{adm}}}

\newcommand{\tube}{\mathcal{Q}}

\newcommand{\gen}{\mathcal{L}}
\newcommand{\checker}{\mathsf{PCSCheck}}
\newcommand{\Env}{\mathcal{G}_{\mathrm{env}}}
\newcommand{\epsenv}{\varepsilon_{\mathrm{env}}}
\newcommand{\compiler}{\mathsf{PCSCompile}}

\newcommand{\starv}{\mathord{\ast}}
\newcommand{\low}{\mathsf{L}}
\newcommand{\high}{\mathsf{H}}

\usetikzlibrary{arrows.meta,backgrounds,calc,fit,positioning,shapes.geometric}

\definecolor{navy}{HTML}{264653}
\definecolor{teal}{HTML}{2A9D8F}
\definecolor{gold}{HTML}{E9C46A}
\definecolor{orange}{HTML}{F4A261}
\definecolor{red}{HTML}{E76F51}
\definecolor{blue}{HTML}{457B9D}
\definecolor{paleBlue}{HTML}{EAF2F8}
\definecolor{paleTeal}{HTML}{E7F4F1}
\definecolor{paleGold}{HTML}{FBF4DE}
\definecolor{paleRed}{HTML}{FBECE8}
\definecolor{ink}{HTML}{1D2730}
\definecolor{midgray}{HTML}{6B737B}
\definecolor{lightgray}{HTML}{D9DEE2}
\definecolor{verylight}{HTML}{F7F8F9}

\tikzset{
  font=\sffamily\fontsize{7.45}{8.65}\selectfont,
  panel/.style={draw=lightgray, line width=0.55pt, rounded corners=2.2pt, fill=white},
  title/.style={font=\sffamily\bfseries\fontsize{8.6}{9.2}\selectfont, text=navy},
  tag/.style={font=\sffamily\bfseries\fontsize{9.4}{10}\selectfont, text=white, fill=navy, rounded corners=1.2pt, inner xsep=3pt, inner ysep=1.2pt},
  box/.style={draw=navy, line width=0.55pt, rounded corners=1.4pt, fill=white, align=center, inner xsep=4pt, inner ysep=3pt},
  softbox/.style={draw=lightgray, line width=0.5pt, rounded corners=1.4pt, fill=verylight, align=center, inner xsep=3pt, inner ysep=2.5pt},
  active/.style={draw=teal, line width=1.1pt, fill=paleTeal},
  masked/.style={draw=midgray, line width=0.5pt, dashed, fill=verylight, text=midgray},
  arrow/.style={-{Latex[length=2.2mm,width=1.4mm]}, line width=0.65pt, draw=navy},
  activearrow/.style={-{Latex[length=2.2mm,width=1.4mm]}, line width=1.05pt, draw=teal},
  faintarrow/.style={-{Latex[length=1.8mm,width=1.1mm]}, line width=0.5pt, draw=midgray, dashed},
  gate/.style={draw=navy, line width=0.65pt, circle, minimum size=7.7mm, fill=paleBlue, align=center, inner sep=0pt},
  activegate/.style={draw=teal, line width=1.1pt, circle, minimum size=7.7mm, fill=paleTeal, align=center, inner sep=0pt},
  input/.style={draw=navy, line width=0.55pt, rounded corners=1pt, minimum width=8.7mm, minimum height=5.2mm, fill=white, align=center, inner sep=1pt},
  activeinput/.style={draw=teal, line width=1.05pt, rounded corners=1pt, minimum width=8.7mm, minimum height=5.2mm, fill=paleTeal, align=center, inner sep=1pt},
  smalltext/.style={font=\sffamily\fontsize{6.65}{7.4}\selectfont, text=ink},
  micro/.style={font=\sffamily\fontsize{6.05}{6.65}\selectfont, text=ink},
  mathline/.style={font=\fontsize{7.0}{8.0}\selectfont, text=ink},
  checker/.style={draw=navy, line width=0.8pt, rounded corners=2pt, fill=paleBlue, minimum width=23mm, minimum height=13mm, align=center},
  accepted/.style={draw=teal, line width=0.9pt, rounded corners=1.6pt, fill=paleTeal, minimum width=29mm, minimum height=11mm, align=center},
}

\begin{document}
\title{A proof-carrying architecture for synthetic genetic logic circuits under stochastic temporal contracts}
\author{Arman Ferdowsi\textsuperscript{1,*}, Laura Kovács\textsuperscript{2}\\
\small \textsuperscript{1}Faculty of Computer Science, University of Vienna\\
\small \textsuperscript{2}Faculty of Informatics, TU Wien\\
\small \textsuperscript{*}Corresponding author\\
\small arman.ferdowsi@univie.ac.at, laura.kovacs@tuwien.ac.at}
\date{}
\maketitle
\begin{abstract}
Genetic design automation maps Boolean specifications to regulatory networks
and DNA sequences, but a successful mapping does not establish the probability
that every output remains correct over time. This paper presents
Proof-Carrying Synthetic Biology (PCS-Bio), a formal architecture for acyclic
genetic logic circuits operated under fixed inputs. An ideal checker connects
Boolean equivalence, sequence and model provenance, local stochastic
contracts, and any surrogate-to-target discrepancy bounds. Under explicit
semantic and checker assumptions, these obligations imply a joint temporal
refinement guarantee without assuming independence among molecular events.
Assignment-specific sufficient cubes select the branches needed to prove each
output value. This can reduce risk and delay accounting provided that the retained local
contracts remain uniform over the physically present but logically masked
inputs. The reported evaluation exercises three separate interfaces: an
archived Cello execution with sequence reconstruction, a restricted exact
rational checker for a synthetic finite-state instance, and diagnostic IEEE~754
binary64 replay. Across 60 deterministic Boolean formula graphs, slicing
reduced the worst-assignment risk-charge sum in 55 cases and the maximum
accounting delay in 54. These results concern proof accounting and selected
implementation components. They do not establish an integrated sequence-bound
temporal certificate or physical validation.
\end{abstract}

\noindent\textbf{Keywords}\quad Synthetic biology, genetic circuits, design automation, formal verification, stochastic modeling, proof-carrying compilation
\medskip
\section{Introduction}
\label{sec:introduction}
Synthetic genetic circuits implement decision rules inside cells by controlling
which genes are expressed. For example, let $A$, $B$, and $C$ indicate whether
three sensed chemicals are present under specified detection conditions. The
rule $F=(A\lor\neg B)\land C$ requests production of a reporter protein when
$C$ is present and either $A$ is present or $B$ is absent. A reporter provides
an observable output, such as fluorescence. Writing $\Bool=\{0,1\}$, we use
$\llbracket F\rrbracket(u)$ for the Boolean value of $F$ under input assignment
$u\in\Bool^3$. A circuit should represent that value by an agreed low or high
output regime, rather than by an exact protein count.

The physical implementation uses regulatory DNA parts in a host cell, often
called the \emph{chassis}. A \emph{promoter} is a DNA region that controls
transcription, the production of RNA from DNA. A \emph{repressor} reduces
transcription from a compatible promoter. Sensors, repressors, and output
genes can therefore implement interconnected logic gates. Cello and
Cello~2.0 established automated Boolean-to-DNA design using characterized gate
libraries~\cite{Nielsen2016,Jones2022}. Their User Constraint Files (UCFs)
record gate functions, parts, response models, and implementation
constraints~\cite{CelloUCF2021}. Such records provide a starting point for
checking what was assembled. They do not by themselves establish its
probability of correct operation over time.

\paragraph{Challenges and limitations.}
A genetic logic circuit produces a fluctuating molecular signal. Random
reaction events and promoter switching can cause variability even under fixed
inputs~\cite{Golding2005,Raj2008}. The intended output must first reach its
required regime and then remain there during measurement. Shared cellular
resources can couple different gates, so multiplying individual success
probabilities is generally unjustified. Further, a proof about a convenient
surrogate model does not automatically establish a property of the designated
target model. Boolean correctness, consistency of the assembled sequence and
model, stochastic composition, and quantified approximation error must be
connected explicitly.

This paper introduces \emph{Proof-Carrying Synthetic Biology} (PCS-Bio), a
formal architecture for acyclic genetic logic circuits operated under fixed
Boolean inputs. The central PCS-Bio question is what independently checkable
evidence connects a Boolean specification to a joint stochastic path guarantee. An untrusted producer supplies a design and its supporting
proof objects. An ideal checker reconstructs the claimed relationships and
checks the obligations against a verifier-controlled request and technology
library. The guarantee is relative to a declared stochastic model and its
assumptions. Whether that model adequately describes the assembled biological
system remains an empirical question.

For a precise target, let $\varnothing\ne\Uadm\subseteq\Bool^n$ contain the
admitted input assignments and let $\mathbf F=(F_1,\ldots,F_m)$ specify the
named outputs. For design $D$, write $Y_{D,j}$ for the observed output process
and $Q_{y_j}^b$ for the expression regime representing Boolean value $b$. The
parameter set $\thetaSet$ and family of joint path laws
$\mathfrak P_D(u,\theta)$ encode the declared uncertainty, initial conditions,
and allowed cellular contexts. Given a required probability $p$ and a
measurement window $[\tau,T]$, the target is
\begin{equation}
 \inf_{u\in\Uadm}
 \inf_{\theta\in\thetaSet}
 \inf_{\mathbb P\in\mathfrak P_D(u,\theta)}
 \mathbb P\!\left[
   \bigcap_{j=1}^{m}
   \left\{
   Y_{D,j}(t)\in Q_{y_j}^{\llbracket F_j\rrbracket(u)}
   \text{ for every }t\in[\tau,T]
   \right\}
 \right]
 \ge p.
 \label{eq:goal-intro}
\end{equation}
Thus every output must be correct throughout the same measurement window on
the same sample path. The deadline $\tau$ allows the circuit to settle, and
$T-\tau$ is the required persistence duration. The bound applies uniformly
over the admitted inputs, parameters, and model laws. Separate per-output guarantees at level $p$ do not in general establish
the same joint guarantee.

Boolean structure helps control the cost of proving this property. In the
running formula, $C=0$ forces the output to zero regardless of $A$ and $B$.
A \emph{sufficient cube} records only the input values needed for such a
Boolean conclusion. PCS-Bio uses these partial assignments to select smaller
stochastic proof witnesses. The omitted branches remain physically present,
and their signals must still satisfy the envelopes assumed by the retained
local contracts. Slicing can reduce a union-bound \emph{risk charge} or a
composed \emph{accounting delay}. It does not alter the circuit or demonstrate
an improvement in its physical reliability or response time.

\paragraph{Our contributions.}
The contribution is an explicit proof architecture connecting established
Boolean, stochastic, and compilation techniques. First, Section~\ref{sec:compiler} gives a constructive
not-OR (NOR) translation and a reconstructed Boolean comparison that connect source
formulas to mapped logic. A closed-world binding judgment then relates that
logic, exact sequence bytes, design structure, and the declared target model.
Second, Section~\ref{sec:slicing} gives sufficient-cube contracts that specify the uniformity needed to omit
logically masked branches, and a joint composition theorem counts shared
obligations once without assuming independence. Third, Sections~\ref{sec:local-certificates} and~\ref{sec:transfer} give local
reach-and-stay certificates and synchronous surrogate-transfer conditions,
providing the
stochastic premises required by that composition. Finally, Section~\ref{sec:checker} specifies the checker obligations under which acceptance entails
Equation~\eqref{eq:goal-intro}. Figure~\ref{fig:pipeline} presents these
relationships and their trust boundary.

\begin{figure}[t]
\centering
\resizebox{\linewidth}{!}{%
\begin{tikzpicture}[x=1mm,y=1mm]
  \def\PW{61.5}
  \def\PH{48.0}
  \def\GX{3.0}
  \def\GY{3.0}

  \node[panel, minimum width=\PW mm, minimum height=\PH mm, anchor=north west] (pa) at (0,\PH+\GY) {};
  \node[panel, minimum width=\PW mm, minimum height=\PH mm, anchor=north west] (pb) at (\PW+\GX,\PH+\GY) {};
  \node[panel, minimum width=\PW mm, minimum height=\PH mm, anchor=north west] (pc) at (0,0) {};
  \node[panel, minimum width=\PW mm, minimum height=\PH mm, anchor=north west] (pd) at (\PW+\GX,0) {};

  \node[tag, anchor=north west] at (2.0,\PH+\GY-1.6) {a};
  \node[title, anchor=north west] at (9.0,\PH+\GY-2.2) {Compile};

  \node[box, minimum width=43mm, minimum height=7mm, anchor=north] (formula) at (30.75,\PH+\GY-7.0) {$F=(A\lor\neg B)\land C$};

  \node[input] (aA) at (8.0,\PH+\GY-19.0) {$A$};
  \node[input] (aNB) at (8.0,\PH+\GY-28.0) {$\neg B$};
  \node[gate] (aq) at (24.5,\PH+\GY-23.5) {\scriptsize NOR\\[-1pt]$q$};
  \node[input] (aNC) at (37.5,\PH+\GY-28.0) {$\neg C$};
  \node[gate] (ay) at (50.5,\PH+\GY-23.5) {\scriptsize NOR\\[-1pt]$y$};
  \draw[arrow] (aA.east) -- (aq.north west);
  \draw[arrow] (aNB.east) -- (aq.south west);
  \draw[arrow] (aq.east) -- (ay.west);
  \draw[arrow] (aNC.east) -- (ay.south west);
  \node[micro, anchor=south] at (24.5,\PH+\GY-18.5) {$q=\neg(A\lor\neg B)$};
  \node[micro, anchor=south] at (50.5,\PH+\GY-18.5) {$y=\neg(q\lor\neg C)=F$};

  \draw[arrow] (30.75,\PH+\GY-33.7) -- (30.75,\PH+\GY-36.7);
  \node[smalltext, text=midgray, anchor=south] at (30.75,\PH+\GY-34.6) {technology mapping};

  \begin{scope}[shift={(5.0,\PH+\GY-39.5)}]
    \draw[draw=navy, line width=0.7pt] (0,0) -- (51.5,0);
    \fill[gold] (1.0,-2.0) rectangle (8.0,2.0);
    \fill[orange] (8.7,-2.0) rectangle (13.0,2.0);
    \fill[blue] (13.7,-2.0) rectangle (24.0,2.0);
    \fill[red] (24.7,-2.0) rectangle (28.0,2.0);
    \fill[gold] (30.0,-2.0) rectangle (37.0,2.0);
    \fill[orange] (37.7,-2.0) rectangle (42.0,2.0);
    \fill[teal] (42.7,-2.0) rectangle (48.0,2.0);
    \fill[red] (48.7,-2.0) rectangle (51.0,2.0);
    \node[micro, anchor=south] at (14.2,2.5) {$q$ cassette};
    \node[micro, anchor=south] at (40.5,2.5) {$y$ cassette};
    \node[micro, text=midgray, anchor=north] at (25.75,-3.0) {schematic sequence, SBOL structure, and model identifiers};
  \end{scope}

  \begin{scope}[shift={(\PW+\GX,0)}]
    \node[tag, anchor=north west] at (2.0,\PH+\GY-1.6) {b};
    \node[title, anchor=north west] at (9.0,\PH+\GY-2.2) {Slice};
    \node[softbox, fill=paleGold, draw=gold, minimum width=29mm, minimum height=6.5mm, anchor=north] at (30.75,\PH+\GY-7.0) {assignments with $C=0$};

    \node[input, masked] (bA) at (7.0,\PH+\GY-19.5) {$A$};
    \node[input, masked] (bNB) at (7.0,\PH+\GY-28.5) {$\neg B$};
    \node[gate, masked] (bq) at (23.5,\PH+\GY-24.0) {\scriptsize NOR\\[-1pt]$q$};
    \node[activeinput] (bNC) at (37.0,\PH+\GY-28.5) {$\neg C=1$};
    \node[activegate] (by) at (50.5,\PH+\GY-24.0) {\scriptsize NOR\\[-1pt]$y=0$};
    \draw[faintarrow] (bA.east) -- (bq.north west);
    \draw[faintarrow] (bNB.east) -- (bq.south west);
    \draw[faintarrow] (bq.east) -- (by.west);
    \draw[activearrow] (bNC.east) -- (by.south west);
    \node[micro, text=midgray, anchor=south] at (15.5,\PH+\GY-17.0) {masked branch};
    \node[micro, text=teal, anchor=south] at (44.7,\PH+\GY-17.0) {proof witness};

    \node[box, active, minimum width=24mm, minimum height=7mm, anchor=north west] at (4.5,\PH+\GY-34.0) {$c_y(q,\neg C)=(\starv,1)$};
    \node[box, active, minimum width=24mm, minimum height=7mm, anchor=north east] at (57.0,\PH+\GY-34.0) {$\mathcal W(u)=\{\neg C,y\}$};
    \node[font=\fontsize{6.15}{6.8}\selectfont, text=ink, anchor=south, align=center] at (30.75,\PH+\GY-46.5) {$\Risk(u)=\epsenv+\eps_{\neg C,1}+\eps_{y,(\starv,1)}$};
  \end{scope}

  \node[tag, anchor=north west] at (2.0,-1.6) {c};
  \node[title, anchor=north west] at (9.0,-2.2) {Certify};

  \node[micro, font=\sffamily\bfseries\fontsize{6.2}{6.7}\selectfont, text=navy, anchor=south] at (20.5,-11.0) {local reach and stay};
  \begin{scope}[shift={(4.8,-39.0)}]
    \draw[->, draw=ink, line width=0.55pt] (0,0) -- (32.0,0) node[micro, below=-0.1mm] {$t$};
    \draw[->, draw=ink, line width=0.55pt] (0,0) -- (0,27.5) node[micro, above=-0.3mm] {$o_v(Z_v)$};
    \fill[paleTeal] (0,10.8) rectangle (31.3,26.0);
    \fill[teal!28] (0,14.8) rectangle (31.3,20.9);
    \draw[draw=teal, line width=0.45pt] (0,10.8) -- (31.3,10.8);
    \draw[draw=teal, line width=0.45pt, dashed] (0,14.8) -- (31.3,14.8);
    \draw[draw=teal, line width=0.45pt, dashed] (0,20.9) -- (31.3,20.9);
    \node[micro, text=teal, anchor=west] at (1.0,23.0) {high-output tube};
    \node[micro, text=teal, anchor=west] at (1.0,19.1) {inner core};
    \draw[draw=navy, line width=1.0pt] plot[smooth] coordinates {(0.5,3.8) (3,5.2) (5.5,7.8) (8,12.2) (10.5,16.4) (13,17.7) (15.5,16.9) (18,18.8) (21,17.4) (24,18.3) (27,17.1) (30.3,18.0)};
    \draw[draw=red, line width=0.6pt, dashed] (12.0,0) -- (12.0,26.0);
    \node[micro, text=red, anchor=south] at (12.0,0.8) {$s+d$};
    \node[micro, text=ink, anchor=south] at (30.3,0.8) {$T$};
    \node[micro, text=navy, anchor=south west] at (5.2,6.0) {reach};
    \node[micro, text=navy, anchor=south west] at (19.0,19.0) {stay};
  \end{scope}

  \begin{scope}[shift={(40.5,-34.0)}]
    \node[micro, font=\sffamily\bfseries\fontsize{6.2}{6.7}\selectfont, text=navy, anchor=south] at (8.3,21.0) {synchronous transfer};
    \fill[paleBlue] (0,7.0) rectangle (16.6,19.0);
    \fill[blue!20] (0,9.1) rectangle (16.6,19.0);
    \draw[draw=blue, line width=0.45pt] (0,7.0) -- (16.6,7.0);
    \draw[draw=blue, line width=0.45pt, dashed] (0,9.1) -- (16.6,9.1);
    \draw[draw=navy, line width=0.9pt] plot[smooth] coordinates {(0.5,12.2) (3,13.0) (5.5,12.0) (8,13.4) (10.5,12.5) (13,13.3) (16,12.4)};
    \draw[draw=orange, line width=0.9pt] plot[smooth] coordinates {(0.5,11.4) (3,12.3) (5.5,11.2) (8,12.6) (10.5,11.7) (13,12.5) (16,11.7)};
    \draw[<->, draw=red, line width=0.55pt] (12.5,12.4) -- (12.5,11.7);
    \node[micro, text=red, anchor=west] at (12.9,12.05) {$\rho$};
    \node[micro, text=navy, anchor=west, yshift=0.6mm] at (1.0,16.6) {exact};
    \node[micro, text=orange, anchor=west] at (1.0,14.8) {surrogate};
    \node[micro, text=blue, anchor=west, xshift=-1.5mm] at (1.0,8.0) {\tiny target boundary};
    \node[micro, text=blue, anchor=west, xshift=-1.5mm] at (1.0,9.9) {\tiny eroded boundary};
  \end{scope}

  \node[font=\fontsize{5.9}{6.4}\selectfont, text=ink, anchor=south] at (19.0,-45.8) {$\varepsilon_{v,c}\leq\alpha_{v,c}+\beta_{v,c}$};
  \node[font=\fontsize{5.65}{6.2}\selectfont, text=ink, anchor=south, align=center] at (49,-39.4) {$\gamma[d_I>\rho]\leq\zeta_{v,c}$};
  \node[font=\fontsize{5.65}{6.2}\selectfont, text=ink, anchor=south, align=center] at (49,-45.8) {$\varepsilon_{v,c}^{\mathrm{exact}}\leq\widehat\varepsilon_{v,c}+\zeta_{v,c}$};

  \begin{scope}[shift={(\PW+\GX,0)}]
    \node[tag, anchor=north west] at (2.0,-1.6) {d};
    \node[title, anchor=north west] at (9.0,-2.2) {Check};

    \node[box, text width=23mm, minimum width=26mm, minimum height=12mm, anchor=north west, align=left, inner xsep=4pt, font=\sffamily\fontsize{5.9}{6.7}\selectfont] (trusted) at (3.5,-9.5) {%
      \textbf{trusted inputs}\\[-1pt]
      request and domain\\
      technology package};

    \node[box, text width=23mm, minimum width=26mm, minimum height=19mm, anchor=north west, align=left, inner xsep=4pt, font=\sffamily\fontsize{5.9}{6.7}\selectfont] (bundle) at (3.5,-24.5) {%
      \textbf{producer bundle}\\[-1pt]
      graph, DNA, model\\
      local/transfer proofs\\
      context/start/interface\\
      joint witnesses};

    \node[checker, minimum width=21mm] (check) at (47.0,-23.0) {ideal proof\\checker};
    \draw[arrow] (trusted.east) -- ++(4.0,0) |- (check.west);
    \draw[arrow] (bundle.east) -- ++(4.0,0) |- (check.west);

    \node[micro, text=red, text width=24mm, align=center, anchor=south] at (47.0,-14.2) {otherwise reject or inconclusive};
    \node[accepted, text width=25mm, align=center, minimum width=27mm, minimum height=12mm, anchor=north] (accept) at (47.0,-31.0) {\textbf{accept}\\[-1pt]$D\preceq_{\eta}\mathbf F$\\[-1pt]{\scriptsize conditional model theorem}};
    \draw[activearrow] (check.south) -- (accept.north);
    \draw[draw=midgray, line width=0.5pt, dashed] (3.0,-44.0) -- (58.0,-44.0);
    \node[micro, text=midgray, anchor=north] at (30.5,-45.0) {physical adequacy remains external};
  \end{scope}
\end{tikzpicture}
}
\caption{\scriptsize PCS-Bio proof-carrying workflow. \textbf{a}, A Boolean specification
is lowered to dual-rail not-OR (NOR) logic and bound, through a
verifier-controlled technology package, to a schematic sequence-resolved
Synthetic Biology Open Language (SBOL) design and a declared target model.
\textbf{b}, For $C=0$, the cube
$c_y(q,\neg C)=(\starv,1)$ fixes only the second input. $\starv$ denotes an
unfixed input. The $A$, $\neg B$, and $q$ branch is omitted from the Boolean
proof witness, but its physical signals remain constrained by the local
envelope. \textbf{c}, Reach-and-stay evidence and, when needed, synchronous
surrogate-transfer evidence provide target-model local risk charges.
\textbf{d}, The ideal checker reconstructs every semantic link and the joint
witness. Acceptance proves conditional model-level refinement
$D\preceq_\eta\mathbf F$. Empirical adequacy of the model for the assembled
biology remains an external assumption.}
\label{fig:pipeline}
\end{figure}

The reported implementation evidence exercises three separate interfaces: an
archived Cello execution with mapping and sequence reconstruction, a restricted
exact rational checker for a synthetic finite-state instance, and diagnostic
64-bit floating-point replay. Across a fixed corpus of 60 formula graphs,
slicing reduced the reported worst-assignment risk-charge sum in 55 cases and
maximum accounting delay in 54. These are accounting results for synthetic
models. No reported execution provides an integrated certificate binding the
Cello-derived sequence to a checked stochastic temporal guarantee.

Section~\ref{sec:related-work} positions the architecture.
Sections~\ref{sec:problem}--\ref{sec:signals} define the request, compilation,
model, and signal contracts. Section~\ref{sec:slicing} develops proof slicing
and composition, and Sections~\ref{sec:local-certificates}
and~\ref{sec:transfer} establish the local and transfer obligations.
Section~\ref{sec:checker} states conditional checker soundness.
Section~\ref{sec:example} works through one proof witness.
Sections~\ref{sec:computational-methods} and~\ref{sec:computational-evaluation}
present the methods and reported evidence, followed by the remaining
limitations and conclusion in Sections~\ref{sec:limitations}
and~\ref{sec:conclusion}.

\section{Related Work}
\label{sec:related-work}
The relevant literature supplies three parts of the architecture: biological
compilation, compositional specifications, and stochastic proof techniques.
We distinguish those established components from the integrated obligations
proposed here.

Genetic design automation already spans high-level specification, part
selection, technology mapping, and sequence construction. Early programming
and design systems include GEC, GenoCAD, and Eugene~\cite{Pedersen2009GEC,Czar2009GenoCAD,Bilitchenko2011Eugene}.
End-to-end workflows, automated part selection, and sequence-level design
extended these capabilities~\cite{Beal2012EndToEnd,Yaman2012Automated,Rodrigo2013AutoBioCAD}.
Mapping based on directed acyclic graphs and genetic logic synthesis connect Boolean structure to
available devices~\cite{Roehner2014DAG,Vaidyanathan2015Framework}, while SBOL,
model-generation tools, and BioCRNpyler connect design descriptions to
mechanistic models~\cite{McLaughlin2020,Roehner2015SBOLSBML,Poole2022}.
PCS-Bio builds on this separation of source logic, biological implementation,
and model construction.

Robust and context-aware design addresses structural alternatives, parameter
uncertainty, and interactions between genetic parts~\cite{Schladt2021,Engelmann2023}.
Noise-aware and risk-averse optimization incorporates stochastic behavior
into design objectives~\cite{Sequeiros2023,Kobiela2026}.
These approaches motivate quantitative guarantees, but optimizing a design
criterion and checking the complete temporal claim in
Equation~\eqref{eq:goal-intro} are different tasks. The present work focuses on
the latter's proof obligations rather than on a new biological optimization
objective.

Formal methods have treated uncertain gene networks, modular temporal
specifications, stochastic model checking, and stochastic
hazards~\cite{Yordanov2011,Bartocci2013,Madsen2014SMC,Buecherl2021}.
iBioSim and full-stack synthetic-biology environments integrate modeling,
analysis, and compilation~\cite{Watanabe2019iBioSim,Konur2021,Konur2023}.
The preprint of Pandey et al.~\cite{Pandey2022} formulates assume-guarantee
contracts for synthetic biology using ordinary differential equations.
Incer et al.~\cite{Incer2024} establish steady-state input--output guarantees
under resource-sharing context effects. These are relevant compositional
foundations, but their contract semantics differ from the stochastic path
events used here. Pacti supplies algebraic
operations for compositional contract analysis~\cite{Incer2025}. Our contracts
specialize the relevant uniformity and joint-event requirements for
assignment-specific proof witnesses and sequence-bound target models.

The individual proof ingredients also have established origins. Sufficient
cubes are implicants of a Boolean function or its complement~\cite{Hachtel1996}.
Barrier and supermartingale methods support stochastic verification and
composition~\cite{Lavaei2022,Anand2022,Abate2024,Abate2025,Nejati2022,Anand2024}.
Their model classes and temporal properties differ. In particular, quantitative
supermartingale certificates in~\cite{Abate2025} address discrete-time Markov
chains, so they are not a substitute for the continuous-time stopping arguments
given below.
Strong approximations and error-controlled hybrid models relate reaction
processes to tractable surrogates~\cite{Kurtz1978,Ganguly2015}. Proof-carrying
code and clausal proof logging separate proof production from
checking~\cite{Necula1997,Wetzler2014}. PCS-Bio specifies how these ingredients
must fit together to connect Boolean source meaning, exact sequence identity,
masked-input stochastic contracts, and joint path guarantees. Its proposed
contribution is this conjunction of checked relationships, with the present
implementation covering only the restricted fragments described above.

\section{Problem Statement}
\label{sec:problem}

The verification problem has three inputs: a Boolean specification, a technology
library, and a temporal contract. We define their representations and the
stochastic laws over which the guarantee must hold. The final definition
separates correctness of the declared model from its adequacy for a physical
circuit. Subsequent sections construct evidence for that model-level claim.

\subsection{Source specifications and compilation request}
We use standard Boolean-function, circuit, and decision-diagram terminology
\cite{Hachtel1996}. Let $X=\{x_1,\ldots,x_n\}$ be the Boolean sensor
variables. The core source grammar is
\begin{equation}
 \varphi ::= 0\mid 1\mid x_i\mid\neg\varphi
 \mid(\varphi\land\varphi)\mid(\varphi\lor\varphi).
 \label{eq:source-grammar}
\end{equation}
A source specification is a vector
$\mathbf F=(F_1,\ldots,F_m)$ of formulas, where $m\ge1$. The formulas are
stored in a shared directed acyclic graph, so a subformula used in several
places is represented once. A directed acyclic graph (DAG) is a directed graph
with no directed cycle. It therefore admits an order in which every gate is
evaluated after its parents. A derived connective, such as implication or
exclusive OR, is accepted only after the checker verifies its translation into
the core grammar in~\eqref{eq:source-grammar}. For an assignment $u\in\Bool^n$, the notation
$\llbracket\varphi\rrbracket(u)$ denotes the Boolean value obtained by
evaluating formula $\varphi$ at $u$. Evaluation of a vector is componentwise:
\[
 \llbracket\mathbf F\rrbracket(u)
 =\bigl(\llbracket F_1\rrbracket(u),\ldots,
         \llbracket F_m\rrbracket(u)\bigr).
\]

The trusted input contract specifies a nonempty admissible set
$\varnothing\ne\Uadm\subseteq\Bool^n$. In the ideal proof language, this set
may be represented by a canonical binary decision diagram (BDD), meaning a
graph representation under a declared variable order with the standard reduction rules, or by a Boolean
predicate with a clause-level equivalence proof that the checker replays. The
restricted implementation instead requires an explicit, canonically ordered
list. Assignments outside $\Uadm$ are don't-cares: synthesis may choose their
behavior freely, and no correctness claim is made for them. The checker proves
Boolean equivalence and computes worst-case risk-charge and accounting-delay
values only on $\Uadm$.

A \emph{compilation instance} collects these inputs:
\begin{equation}
 \mathcal I=(\mathbf F,X,\Uadm,\mathbf y,\tech,\thetaSet,\context,
 \mathcal I_0,\eta,B_{\max}).
 \label{eq:instance}
\end{equation}
Here $\mathbf y=(y_1,\ldots,y_m)$ names the output ports, $\tech$ is the
technology library, and $\eta$ is the temporal contract defined below. The
nonempty set $\thetaSet$ contains admissible global parameter vectors. If
technology entries use aliases for one parameter, those aliases refer to the
same coordinate of one vector $\theta\in\thetaSet$.

The \emph{trusted initial-condition schema} $\mathcal I_0$ in
\eqref{eq:instance} is instantiated after the
checker reconstructs a candidate state space. For each $u\in\Uadm$, it induces
a finitely represented nonempty set
\[
\mathcal I_{0,D}(u)\subseteq\mathcal P(\mathsf X_D)
\]
of initial probability laws. Here $\mathcal P(S)$ denotes the set of Borel
probability measures on $S$. A deterministic initial state is represented by
its Dirac, or point-mass, law.

The component $\context$ of \eqref{eq:instance} is a finitely represented
nonempty set of \emph{admissible context policies}. These describe how the
host cell and its environment may influence the circuit, for example through
available transcriptional resources. A policy may use only information available through the
current time. Whenever a context value enters a stochastic coefficient or
intensity at time $t$, the model uses a predictable version, measurable with
respect to information immediately before $t$. Random exogenous context is
represented through declared initial coordinates or driving noise and remains
part of the same global joint path law.

Finally, the optional vector $B_{\max}$ in \eqref{eq:instance} gives componentwise upper bounds on
checked quantities such as resource use, cellular burden, fan-out (the number
of downstream users of a signal), and construct size.

Each named input or output is a typed port recording its molecular carrier,
direction, state space, physical unit, and observation map. For output $y_j$,
$\mathsf Y_{y_j}$ is a Polish state space, meaning that its topology is induced
by some complete separable metric. The continuous map
$o_{y_j}\colon\mathsf Y_{y_j}\to\mathbb R$ converts a molecular state to the
scalar quantity compared with the low and high thresholds in the same unit.
The ideal checker compares this port declaration with the generated model. In
the restricted finite-state implementation, the observation map is an exact
rational table.

The temporal output contract is
\begin{equation}
 \eta=\bigl((\low_j,\high_j)_{j=1}^{m},p,\tau,T\bigr),
 \label{eq:eta}
\end{equation}
with $\low_j<\high_j$ for every output, $0<p\le1$, and $0\le\tau<T<\infty$. The probability $p$ applies to the joint event that every output is correct throughout the same measurement window. A marginal contract for one output is obtained by setting $m=1$. Surrogate-transfer radii are positive obligation-specific quantities stored in the corresponding transfer evidence. They are not global components of $\eta$. Every set or function discharged by checker replay must have a finite representation in an admitted proof language. A well-posedness or
integrability clause without such a representation remains an explicit
external mathematical assumption.
\subsection{Technology entries and the trust boundary}
This subsection identifies which biological information the checker may trust and which claims it must reconstruct. The distinction is essential because sequence provenance can establish that the intended bytes were used, but it cannot establish that a kinetic model is empirically adequate for those bytes.

A \emph{technology entry} $e\in\tech$ is a tuple
\begin{equation}
 e=(\mathsf{id},\mathsf{seq},b_e,\mathsf{ports},\mathsf{grammar},
       \mathcal M_e,\thetaSet_e,\context_e,\mathsf{certs}_e).
 \label{eq:technology-entry}
\end{equation}
In the ideal architecture, the nucleotide word $\mathsf{seq}\in\{A,C,G,T\}^{*}$ is supplied in a versioned, verifier-controlled technology package. Each record retains its raw source bytes and a separately parsed canonical uppercase nucleotide word. The checker verifies the raw-record digest, validates every nucleotide symbol, and constructs designs from canonical words. Raw-record equality and canonical-sequence equality are reported separately. The package is loaded from declared files before compilation, and a cryptographic digest identifies the exact bytes of each referenced object. The compiler neither generates these library words nor retrieves undeclared replacements. The function $b_e$ is the ideal Boolean gate. Ports bind regulatory inputs and molecular outputs using the interface definition above. The grammar specifies how DNA parts may be assembled. A promoter initiates
transcription, an operator is a regulatory binding site, a ribosome-binding
site (RBS) controls translation initiation, a coding sequence (CDS) specifies
the protein, and a terminator ends transcription. Insulation elements reduce
unintended interactions between neighboring parts. The grammar also declares
orientation, topology, and placement constraints. The model template $\mathcal M_e$, uncertainty interface $\thetaSet_e$, and context interface $\context_e$ define the designated nonsurrogate mathematical semantics, called the exact target model below. The word exact here distinguishes that target from a surrogate. It does not assert empirical exactness of biology. The field $\mathsf{certs}_e$ identifies local obligations and admitted proof backends. A referenced item becomes evidence only after checker replay.

The compiler does not infer a trustworthy kinetic model from raw nucleotide sequence alone. The ideal checker establishes artifact provenance and consistency. Exact sequence bytes, model identifiers, certificate identifiers, and typed ports must refer to the same technology record. Trust-boundary artifacts are byte-hashed or canonically parsed and hash-bound, while semantic identifiers are cross-checked within those objects. This is a syntactic and model-construction property rather than an empirical conclusion. No single current executable schema instantiates the complete tuple in~\eqref{eq:technology-entry}. The Cello layer binds real library and sequence evidence without a temporal model, the restricted formal checker receives a separate synthetic finite model without a sequence, and binary64 replay uses a synthetic technology file.

\subsection{Trace abstraction and refinement}
A static Boolean function does not specify when a molecular output becomes
correct or how long it remains correct. We therefore interpret a compiled
model through its sample paths, using standard stochastic-process
constructions~\cite{EthierKurtz1986}, and define correctness over the whole
measurement interval.

For a generated design $D$, let $\mathsf X_D$ be its declared Polish global
state space, including every wire, mode, and context coordinate used by the
model or a certificate. Define the canonical sample-path space by
\begin{equation}
 \Omega_D:=\mathbb D_T(\mathsf X_D)
 :=\left\{
 \omega\colon[0,T]\to\mathsf X_D
 \;\middle|\;
 \omega\text{ is right-continuous and has left limits}
 \right\}.
 \label{eq:canonical-path-space}
\end{equation}
Such paths are called c\`adl\`ag~\cite{EthierKurtz1986}. Equip $\Omega_D$ with
the Skorokhod $J_1$ topology and write
$\mathcal F_D=\mathcal B(\Omega_D)$ for its Borel sigma algebra. The canonical
coordinate process reads the state from a path:
$\mathbf X_t(\omega)=\omega(t)$. Under each law $\mathbb P$, use the complete,
right-continuous augmentation $\{\mathcal F_t^{\mathbb P}\}$ of its canonical
filtration.

Let $\operatorname{Sol}(\mathcal M_D,u,\theta,\nu_0,h)$ be the set of admitted
joint weak-solution laws on $\Omega_D$ for the designated model, input,
parameter vector, initial law, and context policy. A weak-solution law
specifies the probability distribution of a stochastic process satisfying the
model equations. It need not construct that process on a predetermined
probability space. Then
\begin{equation}
 \mathfrak P_D(u,\theta)
 =\bigcup_{\nu_0\in\mathcal I_{0,D}(u)}
  \bigcup_{h\in\context}
 \operatorname{Sol}(\mathcal M_D,u,\theta,\nu_0,h).
 \label{eq:exact-law-family}
\end{equation}
The outer unions in \eqref{eq:exact-law-family} collect every solution law
allowed by any admitted initial distribution $\nu_0$ and context policy $h$,
while holding $D$, $u$, and $\theta$ fixed. Thus $\mathfrak P_D(u,\theta)$
is a set of probability laws, not a single probability. Its elements are
joint laws of the assembled process, so shared stochastic dependence is
retained without taking products of gate-wise laws. The output vector is a measurable map
\[
 \mathbf Y_D\colon\Omega_D\longrightarrow
 \mathbb D_T\!\left(\prod_{j=1}^{m}\mathsf Y_{y_j}\right),
\]
and its pushforward law is
\begin{equation}
 \Law_{\mathbb P}(\mathbf Y_D)
 :=\mathbb P\circ\mathbf Y_D^{-1}.
 \label{eq:output-path-law}
\end{equation}

A Hill response curve describes a steady-state input--output relation. It does
not bound threshold-crossing events over time. Even agreement of stationary
output distributions would leave settling and persistence unresolved. The
following events express the required path property directly.

For output $j$, the low and high \emph{tubes} are the molecular states whose
observations satisfy the corresponding thresholds:
\begin{equation}
 Q_{y_j}^0=\{z\in\mathsf Y_{y_j}\mid o_{y_j}(z)\le\low_j\},
 \qquad
 Q_{y_j}^1=\{z\in\mathsf Y_{y_j}\mid o_{y_j}(z)\ge\high_j\}.
 \label{eq:output-tubes}
\end{equation}
Because $\low_j<\high_j$, a state strictly between the thresholds represents
neither Boolean value. For $b\in\Bool$ and a compact interval
$I\subseteq[0,T]$, define
\begin{align}
 \Good_{y_j}^b(I)
 &=\left\{\omega\in\Omega_D\;\middle|\;
 \begin{aligned}
  Y_{D,j}(t,\omega)&\in Q_{y_j}^b\\[-0.2em]
  &\text{for every }t\in I
 \end{aligned}
 \right\},\nonumber\\
 \Bad_{y_j}^b(I)&=\Omega_D\setminus\Good_{y_j}^b(I).
 \label{eq:good-output}
\end{align}
For $\mathbf b=(b_1,\ldots,b_m)$, let
\begin{equation}
 \Good_{\mathbf y}^{\mathbf b}(I)
 =\bigcap_{j=1}^{m}\Good_{y_j}^{b_j}(I),
 \qquad
 \Bad_{\mathbf y}^{\mathbf b}(I)
 =\Omega_D\setminus\Good_{\mathbf y}^{\mathbf b}(I).
 \label{eq:joint-output-event}
\end{equation}
Continuity of the observation maps makes the tubes closed. For a c\`adl\`ag
path, membership in a closed tube throughout a compact interval is determined
by a countable dense set of times together with the right endpoint, so these
events are measurable.

\begin{assumption}[Assembly-level physical model adequacy]
\label{ass:physical-adequacy}
For every successfully compiled design $D$ and every $u\in\Uadm$, let
\begin{equation}
 \varnothing\ne\mathfrak P_{\mathrm{phys}}(D,u)
 \subseteq
 \mathcal P\!\left(
   \mathbb D_T\!\left(\prod_{j=1}^{m}\mathsf Y_{y_j}\right)
 \right)
\end{equation}
be the nonempty family of joint output-path laws of the assembled physical
implementation operated within the declared chassis, initial-condition, and
environmental regime. The adequacy requirement is
\begin{equation}
 \mathfrak P_{\mathrm{phys}}(D,u)
 \subseteq
 \left\{
   \Law_{\mathbb P}(\mathbf Y_D)
   \;\middle|\;
   \theta\in\thetaSet,\ 
   \mathbb P\in\mathfrak P_D(u,\theta)
 \right\}.
 \label{eq:physical-law-inclusion}
\end{equation}
Global consistency means that one parameter vector satisfies all cross-entry aliases and constraints, one admitted initial law is used for the assembled state, and every context component is generated by one jointly admissible policy and noise construction. Gate-wise choices cannot be combined if they do not arise from the same global law.
\end{assumption}

Assumption~\ref{ass:physical-adequacy} is not proved by the compiler. It is the empirical biological trust boundary and is needed only for the physical-realization corollary.

\begin{definition}[Robust stochastic temporal refinement]
\label{def:refinement}
The exact compiled model family of $D$ refines $\mathbf F$ under $\eta$, written $D\preceq_\eta\mathbf F$, when
\begin{equation}
 \inf_{u\in\Uadm}
 \inf_{\theta\in\thetaSet}
 \inf_{\mathbb P\in\mathfrak P_D(u,\theta)}
 \mathbb P\!\left[
   \Good_{\mathbf y}^{\llbracket\mathbf F\rrbracket(u)}([\tau,T])
 \right]
 \ge p.
 \label{eq:refinement}
\end{equation}
\end{definition}

The infima make the statement robust to every declared parameter, initial condition, context policy, and permitted dependence structure. Refinement requires agreement for every output on every admissible assignment. It is not an existential satisfiability claim or an average score.
The notation $D\preceq_\eta\mathbf F$ suppresses the other fixed components of instance $\mathcal I$, including $\Uadm$, $\thetaSet$, $\mathcal I_0$, $\context$, and the output bindings. The present contract is the bounded persistence property that all requested outputs remain in their tubes on $[\tau,T]$. It is not a syntax or semantics for arbitrary temporal-logic formulas.

Write $\compiler$ for the PCS-Bio proof producer. It searches for a design and
its evidence, but need not decide every compilation instance. When it
terminates, it returns one of the following outcomes:
\begin{equation}
 \compiler(\mathcal I)\in
 \left\{
 \begin{array}{l}
 \mathsf{success}(D,G_D,\mathsf{SBOL}_D,\mathcal M_D,\cert),\\
 \mathsf{unrealizable}(\cert_{\mathrm{unr}}),\\
 \mathsf{unknown}
 \end{array}
 \right\}.
 \label{eq:compiler-output}
\end{equation}
The present soundness theorem concerns only a returned $\mathsf{success}$ tuple. A claimed $\mathsf{unrealizable}$ result requires a separate independently checkable infeasibility certificate. Failure to find a mapping is not such a proof. Write $\checker$ for the
ideal PCS-Bio verifier, which checks the producer's output against the trusted
instance. Its mathematical specification is given in Section~\ref{sec:checker}.
The required soundness implication is
\begin{equation}
 \checker(\mathcal I,D,G_D,\mathsf{SBOL}_D,\mathcal M_D,\cert)
 =\mathsf{accept}
 \quad\Longrightarrow\quad
 D\preceq_\eta\mathbf F.
 \label{eq:target-soundness}
\end{equation}
A separate corollary uses Assumption~\ref{ass:physical-adequacy} to transfer this model-relative conclusion to the assembled physical implementation.

\section{Boolean compilation and sequence binding}
\label{sec:compiler}
To establish refinement in Definition~\ref{def:refinement}, the proof must
first identify what was compiled. Section~\ref{sec:nor-translation} constructs
a NOR graph and checks its Boolean equivalence to the source.
Section~\ref{sec:binding-stage} then binds the mapped graph to DNA records
and a model derivation. These structural checks identify the model to which
the stochastic obligations in the following sections apply.

\subsection{Dual-rail NOR compilation}\label{sec:nor-translation}
NOR, defined by $\Nor(a,b)=\neg(a\lor b)$, is functionally complete.
Characterized Cello libraries contain repressor-based entries whose Boolean
abstraction is NOR~\cite{Nielsen2016}. PCS-Bio uses a \emph{dual-rail
encoding}: for every source subformula $\varphi$, it constructs a positive
rail $p_\varphi$, intended to represent $\varphi$, and a negative rail
$n_\varphi$, intended to represent $\neg\varphi$. These meanings are proved in
Proposition~\ref{prop:boolean-preservation}.

For constants, the intended rail values are $(p_0,n_0)=(0,1)$ and
$(p_1,n_1)=(1,0)$. Any constant rail that remains after checker-replayed
simplification must be implemented by a characterized constant source. If the
technology library has no compatible source, the checker rejects that
candidate mapping. This does not by itself prove the complete compilation
instance unrealizable.

For a variable $x_i$, characterized sensor outputs provide the positive and
negative rails. If only one polarity is available, one duplicated-input NOR
produces the other:
\begin{equation}
 n_{x_i}=\Nor(p_{x_i},p_{x_i})
 \quad\text{or}\quad
 p_{x_i}=\Nor(n_{x_i},n_{x_i}),
\end{equation}
respectively. If neither polarity can be obtained from a characterized sensor,
the candidate mapping is rejected. Negation exchanges the rails:
\begin{equation}
 p_{\neg\varphi}=n_\varphi,
 \qquad
 n_{\neg\varphi}=p_\varphi.
\end{equation}
For disjunction and conjunction, define
\begin{align}
 q_{\lor}&=\Nor(p_\varphi,p_\psi),
 &p_{\varphi\lor\psi}&=\Nor(q_{\lor},q_{\lor}),
 &n_{\varphi\lor\psi}&=q_{\lor},
 \label{eq:orcompile}\\
 p_{\varphi\land\psi}&=\Nor(n_\varphi,n_\psi),
 &n_{\varphi\land\psi}&=\Nor(p_{\varphi\land\psi},p_{\varphi\land\psi}).
 \label{eq:andcompile}
\end{align}

The shared source graph stores each directed-acyclic-graph node once, and every
occurrence uses the same rail pair. Each generated NOR gate receives a distinct
auxiliary variable in the later clausal encoding. If the compiler applies
constant propagation, structural sharing, or another semantics-preserving
rewrite, the checker independently replays it.

\begin{proposition}[Boolean preservation]
Let $C_{\mathbf F}$ be the dual-rail NOR directed acyclic graph produced by the
rules above. For every $u\in\Bool^n$ and every source subformula $\varphi$,
\begin{equation}
 \llbracket p_\varphi\rrbracket_{C_{\mathbf F}}(u)
 =\llbracket\varphi\rrbracket(u),
 \qquad
 \llbracket n_\varphi\rrbracket_{C_{\mathbf F}}(u)
 =1-\llbracket\varphi\rrbracket(u).
 \label{eq:boolean-preservation}
\end{equation}
If the shared source graph has $N$ nodes, the construction introduces at most
$2N$ NOR gates before simplification and therefore has linear size.
\label{prop:boolean-preservation}
\end{proposition}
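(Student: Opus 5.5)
We prove both identities in~\eqref{eq:boolean-preservation} simultaneously by structural induction over the shared source DAG, processing nodes in a topological order so that the claim already holds for the children of a node when that node is handled. Because every occurrence of a shared subformula uses the same rail pair, each node contributes one rail pair and the induction hypothesis is stated per node rather than per occurrence. The only semantic fact needed is $\llbracket\Nor(a,b)\rrbracket=1-\max(a,b)$ on $\Bool$, hence $\Nor(a,a)=1-a$.

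\emph{Base cases.} For constants the rails are fixed to $(0,1)$ and $(1,0)$ by characterized constant sources, which match $\llbracket 0\rrbracket$ and $\llbracket 1\rrbracket$ directly. For a variable $x_i$ we assume, as the interface contract intends, that the characterized sensor's positive and negative rails evaluate to $u_i$ and $1-u_i$. If only one polarity is available, the duplicated-input gate gives $\Nor(p_{x_i},p_{x_i})=1-u_i$, and symmetrically for the other polarity.

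\emph{Inductive cases.} For negation, swapping the rails gives $p_{\neg\varphi}=n_\varphi=1-\llbracket\varphi\rrbracket(u)=\llbracket\neg\varphi\rrbracket(u)$, and likewise $n_{\neg\varphi}=p_\varphi$. For disjunction, \eqref{eq:orcompile} yields $q_\lor=1-\max(\llbracket\varphi\rrbracket(u),\llbracket\psi\rrbracket(u))=1-\llbracket\varphi\lor\psi\rrbracket(u)$. So $n_{\varphi\lor\psi}=q_\lor$ has the required value, and $p_{\varphi\lor\psi}=\Nor(q_\lor,q_\lor)=1-q_\lor$. For conjunction, \eqref{eq:andcompile} and De~Morgan give $p_{\varphi\land\psi}=1-\max(1-\llbracket\varphi\rrbracket(u),1-\llbracket\psi\rrbracket(u))=\min(\llbracket\varphi\rrbracket(u),\llbracket\psi\rrbracket(u))$, and $n_{\varphi\land\psi}$ is its complement. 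Semantics-preserving rewrites are outside this statement, since it concerns the graph before simplification; any later rewrite is covered by checker replay.

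\emph{Size bound.} We charge gates to source nodes. Constants, and variables with both sensor polarities, cost $0$ gates. A variable with one polarity and an $\lor$ or $\land$ node each cost at most $2$ gates. A negation node costs $0$ gates, since it only swaps rails. Since each DAG node is translated once, the total is at most $2N$, which is linear.

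The main obstacle is not the algebra but making the sharing argument precise. We must argue that the well-founded topological order on the DAG justifies the induction, and that a node reached through several parents receives a single rail pair, so the values are well defined and the gate count is not multiplied by the number of occurrences. We handle this by defining the rail assignment as a function on DAG nodes, following the construction rules, and inducting on the node's position in the topological order.
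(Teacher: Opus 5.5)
Your proof is correct and follows essentially the same route as the paper: structural induction over the shared DAG, with the OR and AND rules verified as De Morgan identities, and the linear size bound obtained by charging at most two gates to each source node while sharing prevents any node from being generated twice. The only difference is a minor slack in the count: the paper charges a single-polarity variable one inverter rather than your ``at most $2$,'' and your looser charge still yields the $2N$ bound.
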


\begin{proof}
Proceed by structural induction. The constant rails have the stated values.
For a variable, a characterized sensor supplies at least one correct rail, and
a duplicated-input NOR, when needed, computes its complement. Negation
exchanges two correct rails. Equations~\eqref{eq:orcompile} and
\eqref{eq:andcompile} are De Morgan identities. A variable introduces at most
one inverter, a binary connective at most two NOR gates, and constants and
negations none. Sharing prevents repeated generation for the same source-graph
node.
\end{proof}

After technology mapping, the checker reconstructs the mapped Boolean output
vector $\mathbf A_D=(A_{D,1},\ldots,A_{D,m})$ from the typed graph and the
validated truth table of each selected technology entry. It also reconstructs
a predicate $A_{\mathrm{adm}}$ satisfying
\[
 A_{\mathrm{adm}}(u)=1
 \quad\Longleftrightarrow\quad
 u\in\Uadm.
\]
The domain-restricted Boolean miter is
\begin{equation}
 \Phi_D^{\mathrm{mit}}(X)
 =A_{\mathrm{adm}}(X)\land
 \bigvee_{j=1}^{m}
 \bigl(F_j(X)\oplus A_{D,j}(X)\bigr).
 \label{eq:boolean-miter}
\end{equation}
A satisfying assignment is therefore an admissible input on which at least one
mapped output differs from its source formula.

The ideal checker converts the miter to conjunctive normal form (CNF) by a
fixed checker-defined construction and validates a deletion resolution
asymmetric tautology (DRAT) refutation~\cite{Wetzler2014}. A DRAT refutation is
a machine-checkable sequence of clause steps that derives the empty clause and
thereby certifies unsatisfiability. The current executable paths accept only
proofs containing reverse-unit-propagation (RUP) clause additions, a
deletion-free fragment accepted by DRUP/DRAT checkers. The checker applies the
proof only to CNF clauses that it reconstructed itself. Acceptance therefore
proves equivalence on $\Uadm$ and makes no claim about inputs outside that set.

\subsection{Technology mapping and sequence binding}\label{sec:binding-stage}
Boolean equivalence identifies the computed function but leaves the biological
implementation unspecified. The next judgment connects the mapped components,
assembled sequence, and designated model so that a later stochastic proof
cannot be attached to an unrelated design.

A mapped netlist assigns each logical node $v$ to a technology entry
$\mu(v)\in\tech$. The ideal verifier $\checker$ verifies that the selected entry implements
the required Boolean function, its ports have compatible carriers, directions,
and units, its declared orthogonality and fan-out constraints hold, its output
ranges satisfy downstream thresholds, and the design satisfies the declared
chassis, placement, and resource bounds. These checks establish consistency
with the technology package. They do not infer empirical performance from a DNA
sequence alone.

For every selected entry $e$, let $\pi_e$ be the declared projection from the
global parameter vector to the coordinates used by $e$. The checker proves
\begin{equation}
 \pi_e(\thetaSet)\subseteq\thetaSet_e.
\end{equation}
If two entries use an aliased parameter, both projections refer to the same
global coordinate. The proof producer $\compiler$ may search by satisfiability modulo theories
(SMT), mixed-integer optimization, or another method, but acceptance depends
on a concrete mapping and
independently replayable evidence for its constraints.

Write $w_D\in\{A,C,G,T\}^{*}$ for the exact ordered nucleotide sequence of
design $D$. Each mapped gate is expanded by the selected assembly grammar into
promoter, operator, ribosome-binding-site, coding-sequence, terminator,
insulation, and other required elements. The checker constructs the canonical
sequence from validated library records and compares it directly with $w_D$.
Cryptographic digests identify and authenticate the source records but do not
replace this sequence comparison. For a circular construct, the design records
its topology and orientation and specifies a canonical cut point that yields a
unique linear representation.

\begin{definition}[Closed-world sequence, graph, and model binding]
\label{def:sequence-binding}
Here \emph{closed world} means that every sequence feature, regulatory edge,
and model component relevant to the claim has a declared origin, and that no
required element may be silently omitted. The judgment
\begin{equation}
 \tech\vdash
 \mathsf{bound}(w_D,\mathsf{SBOL}_D,G_D,\mathcal M_D)
 \label{eq:sequence-binding}
\end{equation}
is relative to the verifier-controlled technology library $\tech$. It holds
exactly when the checker establishes all of the following conditions:
\begin{enumerate}
 \item $w_D$ is the exact grammar-directed concatenation, or canonical
 linearization of a circular assembly, of the selected library sequences, with
 checked topology, orientation, and coordinates,
 \item the canonical nucleotide sequence represented by $\mathsf{SBOL}_D$ is
 exactly $w_D$, and its sequence-bearing features and locations agree with the
 grammar-derived assembly,
 \item every SBOL feature and location is valid under the selected SBOL version
 and assembly grammar, an overlap is accepted exactly when both permit it,
 \item checker-replayed regulatory-graph extraction from $\mathsf{SBOL}_D$
 yields exactly $G_D$, together with explicitly declared chassis or context
 interactions, so no relevant edge is missing and no unexplained edge is
 present,
 \item the checker regenerates $\mathcal M_D$ from mechanism templates
 identified by cryptographic digests, or replays a complete derivation yielding
 the same canonical model,
 \item every species, reaction, mode, parameter, and observation map required
 by the selected parts, mechanisms, chassis, and context appears in
 $\mathcal M_D$, and every such model element has one of those declared origins,
 \item parameter aliases, uncertainty constraints, initial conditions,
 resource and context interfaces, port types and units, output bindings,
 observation maps, and Boolean thresholds are mutually consistent across all
 artifacts.
\end{enumerate}
\end{definition}

Definition~\ref{def:sequence-binding} states the complete judgment required by
the ideal architecture. No present execution verifies all of its clauses. The
Cello track reconstructs evidence from a pinned User Constraint File (UCF),
which records the gate library and design constraints, together with mapping
and sequence evidence. The
restricted exact checker separately verifies finite truth tables and temporal
obligations for a synthetic finite-state test instance. Because no
checker-verified derivation links the Cello-derived sequence to that model, the
current evidence does not establish~\eqref{eq:sequence-binding} and cannot
support an integrated acceptance verdict.

The binding judgment establishes consistency among the declared artifacts. It
does not establish Assumption~\ref{ass:physical-adequacy}. A characterized non-NOR implementation may replace a NOR subnetwork only if
its technology entry exposes every state variable needed to determine its declared dynamics,
specifies the admitted initialization, supplies a checked typed Boolean
abstraction, and discharges the same stochastic interface obligations. At the
circuit level, the exposed input--output abstraction must still fit the acyclic
combinational graph. Stateful recombinase memory is outside the present
theorem. Covering it would require extending both the source semantics and the
refinement theorem to an automaton-valued contract.

\section{Sequence-bound stochastic semantics}
\label{sec:semantics}

The binding judgment identifies the declared model, but does not yet bound its
behavior. This section describes the molecular processes that model may use
and the regularity conditions required by the proofs. The logical graph
organizes the composition argument. It does not eliminate physical coupling
between modules.

\subsection{Feed-forward molecular modules}
The checked logical structure is the output-indexed directed acyclic graph
$G_D=(V,E,\mathbf r)$, where $V$ is the component set, $E$ contains directed
signal-dependency edges, and $\mathbf r=(r_1,\ldots,r_m)$ binds named outputs
to graph roots. The same root may occur more than once. Each occurrence is then
a distinct named output bound to the same internal node. The checker verifies
compatible units, observation maps, tubes, and Boolean interpretations for all
such occurrences.

Acyclicity concerns only the Boolean dependency graph. It does not imply
stochastic independence of the molecular modules. The target model may couple
modules through explicit host or shared-resource variables, but every such
coupling must occur in the global process or in a certified context envelope.
Write $X_v$ for the molecular coordinates of module $v$. They may be discrete,
continuous, or mixed, and the module may also have a finite promoter mode
$M_v$. A well-mixed stochastic chemical reaction network with integer molecule
counts is naturally modeled as a continuous-time Markov chain
(CTMC)~\cite{Gillespie1977}. More generally, the target may be a declared
nonexplosive c\`adl\`ag jump, diffusion, or hybrid process. Here
\emph{nonexplosive} means that it remains defined on $[0,T]$ without
finite-time blow-up or an accumulation of event times.

A promoter can switch between an inactive state and an active state that
permits transcription. Transcription produces messenger RNA (mRNA), and
translation of mRNA produces protein. Random switching and molecule production
make the protein count fluctuate even under a fixed logical input. The
following telegraph mechanism represents this stochastic expression process~\cite{Golding2005,Raj2008}.
\begin{equation}
 \begin{aligned}
 G_{\mathrm{off}}
 &\xrightleftharpoons[k_{\mathrm{off}}(U_{\mathrm{rep}},H)]{k_{\mathrm{on}}(U_{\mathrm{rep}},H)}
 G_{\mathrm{on}},
 &G_{\mathrm{on}}&\xrightarrow{k_{\mathrm{tx}}}G_{\mathrm{on}}+X_{\mathrm{mRNA}},\\
 X_{\mathrm{mRNA}}&\xrightarrow{\delta_m}\varnothing,
 &X_{\mathrm{mRNA}}&\xrightarrow{k_{\mathrm{tl}}}X_{\mathrm{mRNA}}+X_{\mathrm{prot}},\\
 X_{\mathrm{prot}}&\xrightarrow{\delta_p}\varnothing.
 \end{aligned}
 \label{eq:telegraph}
\end{equation}
In \eqref{eq:telegraph}, $G_{\mathrm{off}}$ and $G_{\mathrm{on}}$ are the two
promoter modes. The upper rate $k_{\mathrm{on}}$ governs the transition from
off to on, and the lower rate $k_{\mathrm{off}}$ governs the reverse
transition. The variables $X_{\mathrm{mRNA}}$ and $X_{\mathrm{prot}}$ denote
messenger RNA and protein, and $\varnothing$ denotes removal from the modeled
system. The signal $U_{\mathrm{rep}}$ is the repressor input, while $H$ is the
realized host or resource context generated by a policy $h\in\context$. The
rates $k_{\mathrm{tx}}$ and $k_{\mathrm{tl}}$ govern transcription and
translation, $\delta_m$ and $\delta_p$ degradation, and the switching rates
may depend on both the repressor input and the context. With molecule-count
states and one promoter copy, the transcription propensity is
$k_{\mathrm{tx}}\mathbf 1_{\{G_{\mathrm{on}}\}}$, the translation propensity
is $k_{\mathrm{tl}}X_{\mathrm{mRNA}}$, and the degradation propensities are
$\delta_mX_{\mathrm{mRNA}}$ and $\delta_pX_{\mathrm{prot}}$. Each propensity
has units of events per unit time.

For proof search and rapid counterexample detection, one may use the following
finite-activity switching jump-diffusion as a surrogate:
\begin{align}
 dX_t={}&f_{M_{t^-}}(t,X_{t^-},U_{t^-},H_{t^-},\theta)\,dt
 +B_{M_{t^-}}(t,X_{t^-},U_{t^-},H_{t^-},\theta)\,dW_t\nonumber\\
 &+\int_{\mathcal Z}
 g_{M_{t^-}}(t,X_{t^-},U_{t^-},H_{t^-},\theta,z)\,
 N(dt,dz),
 \label{eq:sjd}
\end{align}
Such a surrogate supports a claim about the designated target model only after a checked
transfer certificate bounds their path discrepancy, as specified in \cref{sec:transfer}.
Here $X_t$ is the continuous surrogate state, $M_t$ is a finite-valued mode,
$U_t$ and $H_t$ are the input and context processes, and $W$ is a standard
Brownian motion of the declared dimension. The Brownian integral uses the It\^o convention. The functions $f$, $B$, and $g$ are
the drift, diffusion coefficient, and marked jump increment. We use standard
jump-process notation~\cite{Applebaum2009}.

The integral in~\eqref{eq:sjd} uses the raw (uncompensated) integer-valued
random measure $N$. Its compensator is not subtracted from the jump term or
absorbed into $f$. Its predictable compensator is
\begin{equation}
 \nu_{M_{t^-}}(t,X_{t^-},U_{t^-},H_{t^-},\theta,dz)\,dt.
 \label{eq:jump-compensator}
\end{equation}
For this representation, $N$ is simple in time, so at most one mark occurs
at any event time, almost surely. Its event times and the mode-switching times
are disjoint almost surely. Their rates may depend on the same state, inputs,
and context, so these conventions do not imply independence of the resulting
processes. An event that changes both $X$ and $M$ must instead be represented
by one joint transition with its own landing map and compensator. Several
simultaneous marks are aggregated into one mark.

For distinct modes $\mu$ and $\mu'$, let $\lambda_{\mu\mu'}$ be the
predictable transition rate from $\mu$ to $\mu'$. Finite activity requires,
under every admitted surrogate law,
\begin{align}
 &\int_0^T
 \nu_{M_{s^-}}(s,X_{s^-},U_{s^-},H_{s^-},\theta,\mathcal Z)\,ds<\infty,
 \nonumber\\
 &\int_0^T\sum_{\mu'\ne M_{s^-}}
 \lambda_{M_{s^-}\mu'}(s,X_{s^-},U_{s^-},H_{s^-},\theta)\,ds<\infty
 \qquad\text{almost surely}.
 \label{eq:finite-activity}
\end{align}
Thus the jump measure and promoter mode have finitely many events on each
bounded interval. An infinite-activity L\'evy surrogate requires a separately
declared decomposition and integrability conditions.

\begin{assumption}[Existence and regularity of target and surrogate laws]
\label{ass:wellposed}
For every successfully compiled $D$, every $u\in\Uadm$,
$\theta\in\thetaSet$, $\nu_0\in\mathcal I_{0,D}(u)$, and
$h\in\context$,
\[
 \operatorname{Sol}(\mathcal M_D,u,\theta,\nu_0,h)\ne\varnothing.
\]
Uniqueness in law is not required. If several admitted weak-solution laws
exist, all belong to $\mathfrak P_D(u,\theta)$. Under every admitted target or
surrogate law, as applicable, the following conditions hold:
\begin{enumerate}
 \item The canonical process is adapted, c\`adl\`ag, nonexplosive on $[0,T]$,
 and preserves the declared state space or follows its stated boundary rule.
 \item Inputs and exogenous context are adapted. Whenever they enter a
 coefficient or intensity, the model uses a declared predictable version, such
 as the left limit of an adapted c\`adl\`ag process.
 \item Every observation map used in a threshold event or discrepancy bound is
 continuous.
 \item Every invoked hitting, exit, and assumption-violation time is a stopping
 time under the topology and prefix-measurability conditions stated for that
 construction.
 \item Every certificate function belongs to the relevant extended-generator
 domain. The required It\^o--Dynkin identity and all stated integrability and
 tail bounds hold uniformly over the declared uncertainty sets.
 \item After the stopping operation used in a certificate proof, each local
 martingale term is a true uniformly integrable martingale, so the stated
 expectation and optional-stopping steps are valid.
\end{enumerate}
Any synchronous coupling used for transfer is defined on a complete filtered
probability space and satisfies the corresponding measurability,
integrability, and common-data conditions.
\end{assumption}

A diffusion surrogate does not automatically preserve nonnegative molecular
coordinates. For an unreflected model on a nonnegative orthant, a
state-preservation argument must cover diffusion at the boundary, the drift
direction, and every jump or reset landing. Reflection, clipping, or absorption
changes the boundary rule and must be declared in the model, its generator,
and any transfer proof. These choices cannot be inferred from a biological
interpretation of the state variable. The finite-state grid process evaluated
in Section~\ref{sec:computational-methods} is a separate model and is not
changed by these continuous-state requirements.

\cref{tab:regularity-status} separates the status of the clauses in Assumption~\ref{ass:wellposed}. The ideal theorem permits a clause to be discharged by a restricted-model theorem or an auxiliary checked certificate. If neither route is present, the clause remains an external mathematical assumption. The floating-point prototype does not implement these discharge routes.
These assumptions specify which process laws the proofs may use. We next
express Boolean values as measurable events on those laws.

\begin{table}[t]
\centering
\small
\caption{Status of the semantic and regularity obligations used by the soundness theorem. TCB denotes the trusted computing base.}
\label{tab:regularity-status}
\begin{tabularx}{\linewidth}{@{}>{\raggedright\arraybackslash}p{0.28\linewidth}>{\raggedright\arraybackslash}p{0.24\linewidth}Y@{}}
\toprule
Obligation & Formal status & Accepted basis in the ideal architecture \\
\midrule
Nonexplosion and state preservation & Model theorem or external assumption & A theorem for a restricted admitted model class or a separately checked auxiliary proof \\
Adapted c\`adl\`ag paths and filtration & Model semantics and TCB & The declared process class and its trusted semantic implementation \\
Policy nonanticipation & Checker-replayed syntax & Predictable policy kernels and predictable coefficient arguments \\
Observation continuity and set topology & Checker-replayed syntax & Admitted continuous expressions and declared closed or open sets \\
State hitting and exit times & Mathematical consequence & Path regularity and the topology of declared state sets \\
Assumption-violation times & Checker-replayed syntax and mathematical consequence & Prefix measurability and the interval-consistency conditions in \eqref{eq:assumption-prefix} and \eqref{eq:assumption-stopping-event} \\
Generator domain and martingale integrability & Backend proof or external assumption & A backend-specific theorem, checked integrability certificate, or explicit external assumption \\
Jump and tail integrability & Checker-replayed bound or external assumption & Exact finite sums, a checked truncation and tail bound, or an explicit external assumption \\
\bottomrule
\end{tabularx}
\end{table}

\section{Signal tubes and source-leaf contracts}
\label{sec:signals}
Composition needs a shared meaning for a Boolean value on every molecular
wire. We first specify that meaning by low and high observation thresholds,
then state the source contracts that initiate the gate-by-gate proof.
Source leaves include sensors and any constant signals that remain after
Boolean simplification.
For each molecular wire $v$, let $Z_v$ be the corresponding wire-state
process, taking values in a state space $\mathsf X_v$. The wire has a
continuous scalar observation map
$o_v\colon\mathsf X_v\to\mathbb R$. Define the closed low and high sets in
the physical observation space by
\begin{equation}
 \mathsf T_v^0=(-\infty,L_v],
 \qquad
 \mathsf T_v^1=[H_v,\infty),
 \qquad
 L_v<H_v.
 \label{eq:observation-tubes}
\end{equation}
Their pullbacks to the exact wire state are
\begin{equation}
 \tube_v^b=o_v^{-1}(\mathsf T_v^b),
 \qquad b\in\Bool.
 \label{eq:wire-tubes}
\end{equation}
Values in the open band $(L_v,H_v)$ receive no Boolean interpretation. For a compact interval $I\subseteq[0,T]$, define the path events
\begin{equation}
 \Good_v^b(I)=
 \{\omega\mid Z_v(t,\omega)\in\tube_v^b
 \text{ for every }t\in I\},
 \qquad
 \Bad_v^b(I)=\Omega_D\setminus\Good_v^b(I).
 \label{eq:wire-events}
\end{equation}

Input sensors are clamped logically but may remain noisy physically. Define the
finite source-label set
\begin{equation}
 \mathcal L_{\mathrm{src}}
 =\{0,1\}\cup\{x_j,\neg x_j\mid 1\le j\le n\}.
\end{equation}
Let $\mathcal S_D$ be the physical source leaves, including sensor rails and
any residual characterized constant sources. Each leaf $i$ carries a label
$\ell_i\in\mathcal L_{\mathrm{src}}$ and requests
$b_i(u)=\llbracket\ell_i\rrbracket(u)$. Value-specific source-leaf data
$(\tau_{i,b},\eps_{i,b})\in[0,T]\times[0,1]$
satisfy the following bound for each $b\in\Bool$ attained by $b_i$ on $\Uadm$,
\begin{equation}
 \sup_{\substack{u\in\Uadm,\ b_i(u)=b\\
                  \theta\in\thetaSet\\
                  \mathbb P\in\mathfrak P_D(u,\theta)}}
 \mathbb P\!\left[
   \Bad_i^{b}([\tau_{i,b},T])
 \right]
 \le \eps_{i,b}.
 \label{eq:sensor-contract}
\end{equation}
Complement sensors have separate low and high contracts. A generated complement is an ordinary gate. A residual constant leaf must have its own characterized source contract for
the requested constant value. These bounds provide the base cases for the
composition proof. Internal gates require an additional condition describing
which parent values are sufficient to force their output.

\section{Sufficient-cube contracts and proof slicing}
\label{sec:slicing}
For each requested gate output, this section identifies a partial Boolean input
assignment that already forces that output. It then associates this Boolean
condition with a stochastic gate contract. The resulting witness follows only
the branches needed to force the output, while still constraining every
physically present but logically masked input through a verified envelope.

\subsection{Sufficient cubes}
Let $g_v\colon\Bool^{k_v}\to\Bool$ be the local Boolean function of gate $v$.
A cube is a vector $c\in\{0,1,\starv\}^{k_v}$, where $\starv$ denotes an
unfixed, or ``don't-care,'' coordinate. A Boolean vector
$a\in\Bool^{k_v}$ completes $c$, written $a\models c$, if $a_j=c_j$ at every
fixed coordinate.

\begin{definition}[Sufficient cube]
A cube $c$ is sufficient for output $b\in\Bool$ of gate $v$ if
\[
 \forall a\in\Bool^{k_v},\qquad a\models c\Longrightarrow g_v(a)=b.
\]
Its support is $\supp(c)=\{j\mid c_j\ne\starv\}$. A sufficient cube is
minimal if changing any one fixed coordinate to $\starv$ makes it
insufficient.
\end{definition}

For a NOR gate, the unique minimal cube for output one fixes every input to
zero. Each minimal cube for output zero fixes one input to one. For an AND gate, the unique minimal cube for output one fixes every input to
one, while each minimal cube for output zero fixes one input to zero. For a characterized finite-input gate that
is not decomposed into primitives, the checker validates cubes directly
against the checked truth table or a binary decision diagram (BDD)
reconstructed from that table.

For a source leaf $i$, set $b_i(u)=\llbracket\ell_i\rrbracket(u)$. Acyclic evaluation then defines, for every internal gate $v$,
\begin{equation}
 a_v(u)=\bigl(b_{\operatorname{pa}_1(v)}(u),\ldots,
 b_{\operatorname{pa}_{k_v}(v)}(u)\bigr),
 \qquad
 b_v(u)=g_v(a_v(u)).
 \label{eq:node-boolean-value}
\end{equation}
For assignment $u$, the PCS-Bio producer $\compiler$ selects a cube $c_v(u)$ completed by the
evaluated Boolean input $a_v(u)$:
\begin{equation}
 a_v(u)\models c_v(u)
 \quad\text{and}\quad
 c_v(u)\text{ is sufficient for }b_v(u)=g_v(a_v(u)).
 \label{eq:cube-selection-valid}
\end{equation}
For example, a local heuristic may enumerate the sufficient cubes compatible
with $a_v(u)$ and choose one with the smallest support, breaking ties by a
fixed port order. At a NOR gate with two high inputs, this chooses one of
$(1,\starv)$ and $(\starv,1)$. Such a rule need not minimize the total risk
or delay. Alternatively, a global optimizer may select the family
$\{c_v(u)\}$ jointly to account for reconvergent branches, the longest
witness path, and charges shared by several outputs. These are possible search
strategies, not an assertion about the heuristic used by the frozen runs. Optimality is not needed for soundness: the checker validates every
selected cube and reconstructs the resulting risk and delay functions.

\subsection{Masked-input robust gate contracts}
Let $\operatorname{pa}_j(v)$ be the parent connected to input port $j$ of gate
$v$. The binding checker identifies the local input process $U_{v,j}$ with the
process carried by this parent. If the physical connection transforms the
signal, that transformation must appear as an explicit modeled node.

Fix a selected cube $c$ that requests output $b$. For every
$j\in\supp(c)$, the parent process must remain in the tube representing the
fixed value $c_j$. An input with $j\notin\supp(c)$ is logically masked but is
still physically present. The contract must therefore hold uniformly over
every adapted c\`adl\`ag path for that input inside its declared envelope.

For an interval $[s,t]$, let $\mathcal E_{v,j}[s,t]$ be the allowed masked-input
path restrictions and let $\mathcal E_v^{\mathrm{ctx}}[s,t]$ be the
corresponding set for local context $H$. Whenever an input or context value
enters a jump coefficient, switching rate, or stochastic integral, use its
predictable left limit. Thus the coefficient at time $t$ depends only on
information available immediately before $t$. Define
\begin{align}
 A_{v,c}(s,t)={}&
 \bigcap_{j\in\supp(c)}
 \Good_{\operatorname{pa}_j(v)}^{c_j}([s,t])\nonumber\\
 &\cap
 \bigcap_{j\notin\supp(c)}
 \{U_{v,j}|_{[s,t]}\in\mathcal E_{v,j}[s,t]\}
 \cap
 \{H|_{[s,t]}\in\mathcal E^{\mathrm{ctx}}_v[s,t]\}.
 \label{eq:assumption-event}
\end{align}
The assumption family has two consistency properties. First,
$A_{v,c}(s,t)$ is determined by the path prefix through $t$, and extending the
right endpoint can only add requirements:
\begin{equation}
 A_{v,c}(s,t_2)\subseteq A_{v,c}(s,t_1)
 \quad\text{whenever }s\le t_1\le t_2\le T.
 \label{eq:assumption-prefix}
\end{equation}
Second, discarding an initial part of an interval can only remove requirements.
For every
$0\le s\le r\le t\le T$,
\begin{equation}
 A_{v,c}(s,t)\subseteq A_{v,c}(r,t).
 \label{eqassumptionrestriction}
\end{equation}
These properties follow because every masked-input and context envelope is
closed under restriction to subintervals.
The first violation time is
\begin{equation}
 \tau_A^s(\omega)=
 \inf\{t\in[s,T]\mid \omega\notin A_{v,c}(s,t)\},
 \qquad \inf\varnothing=\infty.
 \label{eq:assumption-violation-time}
\end{equation}
The checker verifies that $\tau_A^s$ is a stopping time and that
\begin{equation}
 \{\tau_A^s>t\}\subseteq A_{v,c}(s,t)
 \subseteq\{\tau_A^s\ge t\}.
 \label{eq:assumption-stopping-event}
\end{equation}
The two inclusions in~\eqref{eq:assumption-stopping-event} do not impose one
endpoint convention: at the first violation time, the assumption may already
have failed, or it may hold there and fail immediately afterward.

The construction must also support a random left endpoint. Let $\sigma$ be a
stopping time with $\sigma(\omega)\in[s,t]$ almost surely. Define
$A_{v,c}(\sigma,t)$ and its first violation time by applying the same interval
predicate pathwise on $[\sigma(\omega),t]$:
\begin{equation}
 \tau_A^\sigma(\omega)
 =\inf\{r\in[\sigma(\omega),T]\mid
 \omega\notin A_{v,c}(\sigma(\omega),r)\},
 \qquad \inf\varnothing=\infty,
 \label{eq:random-assumption-violation-time}
\end{equation}
The checker verifies the random-start restriction property
\begin{equation}
 A_{v,c}(s,t)\subseteq A_{v,c}(\sigma,t)
 \quad\text{almost surely under every applicable law}.
 \label{eqassumptionstoppedrestriction}
\end{equation}
It also verifies
\begin{equation}
 \{\tau_A^\sigma>t\}\subseteq A_{v,c}(\sigma,t)
 \subseteq\{\tau_A^\sigma\ge t\}
 \quad\text{almost surely under every applicable law}.
 \label{eq:random-assumption-stopping-event}
\end{equation}
The first property allows a stay certificate to begin at the random time at
which the process first reaches its core.

Choose global wire and context envelopes whose restrictions satisfy every
invoked local masked-input and context envelope. Let $\Env$ be the event that
these global envelopes hold throughout $[0,T]$. A global envelope cannot be
broader than a local admissibility condition unless additional evidence
establishes the required implication. The invariant
certificate establishes
\begin{equation}
 \sup_{u\in\Uadm}
 \sup_{\theta\in\thetaSet}
 \sup_{\mathbb P\in\mathfrak P_D(u,\theta)}
 \mathbb P[\Env^c]\le\epsenv,
 \label{eq:envelope-contract}
\end{equation}
where $\epsenv\in[0,1]$ is the failure-probability allowance for the global
envelope. For every invoked pair $(v,c)$ and every $0\le s\le t\le T$, the
checker also verifies that $\Env$ supplies the local envelope clauses in
$A_{v,c}(s,t)$:
\begin{equation}
 \Env\subseteq
 \bigcap_{j\notin\supp(c)}
 \{U_{v,j}|_{[s,t]}\in\mathcal E_{v,j}[s,t]\}
 \cap
 \{H|_{[s,t]}\in\mathcal E^{\mathrm{ctx}}_v[s,t]\}.
 \label{eqenvsupplieslocal}
\end{equation}
If $\Env$ is an exact invariant, then $\epsenv=0$. Otherwise its allowance
is included once in the global risk sum.

Let $I_{v,c}$ be the local start set for this gate obligation. For fixed $s$, define the measurable start event
\begin{equation}
 B_{v,c}(s)=\{Z_v(s)\in I_{v,c}\}.
 \label{eq:start-event}
\end{equation}
It belongs to $\mathcal F_s^{\mathbb P}$ under every admitted global law $\mathbb P$.

The contract $\mathcal C_{v,c}$ supplies the local failure charge used in the
global union bound. It bounds the probability of the joint event that the gate
starts in $I_{v,c}$, the selected-parent, masked-input, and context assumptions
hold, and the requested output nevertheless leaves its tube after the declared
delay. For an applicable local law $P$, it therefore bounds
\[
 P\!\left[B_{v,c}^{\mathrm{loc}}(s)\cap A_{v,c}^{\mathrm{loc}}(s,T)
          \cap\Bad_v^{b,\mathrm{loc}}([s+d_{v,c},T])\right],
\]
not the conditional probability of output failure given the start and
assumption events. This distinction is used by the composition proof below.

For an invoked obligation $(v,c,s)$, let $\Omega_{v,c}^{s}$ be its complete
local canonical path space on $[s,T]$. Let
\[
 \mathcal R_{v,c,s}\colon\Omega_D\longrightarrow\Omega_{v,c}^{s}
\]
be the measurable map that restricts a global path to the complete local state
used by that obligation. This state includes $Z_v$, every local input and
context coordinate, and every resource, mode, or hidden coordinate used by the
local characteristics or by the start, assumption, or output events.

Let $B_{v,c}^{\mathrm{loc}}(s)$,
$A_{v,c}^{\mathrm{loc}}(s,t)$, and
$\Bad_v^{b,\mathrm{loc}}(I)$ denote the canonical local events. The
model-interface proof verifies
\begin{equation}
 \begin{aligned}
 B_{v,c}(s)
   &=\mathcal R_{v,c,s}^{-1}\!\left(B_{v,c}^{\mathrm{loc}}(s)\right),\\
 A_{v,c}(s,t)
   &=\mathcal R_{v,c,s}^{-1}\!\left(A_{v,c}^{\mathrm{loc}}(s,t)\right),\\
 \Bad_v^b(I)
   &=\mathcal R_{v,c,s}^{-1}\!\left(\Bad_v^{b,\mathrm{loc}}(I)\right)
 \end{aligned}
 \label{eq:global-local-event-pullbacks}
\end{equation}
for every interval used by the obligation.

Every applicable global law $\mathbb P$ induces the local law
$\mathbb P\circ\mathcal R_{v,c,s}^{-1}$. A reusable technology certificate
declares a family $\mathfrak P_{v,c}^{s}$ of admissible local laws. After the
global parameters and ports have been bound to the local model, this family
must contain every induced law for which $(v,c,s)$ is invoked. It may contain
additional laws. This is conservative but permits certificate reuse. For each
$P\in\mathfrak P_{v,c}^{s}$, the local canonical process has the complete,
right-continuous filtration $\{\mathcal G_t^P\}_{t\in[s,T]}$ and satisfies the
applicable clauses of Assumption~\ref{ass:wellposed}.

\begin{definition}[Delayed stochastic gate contract]
\label{def:gate-contract}
Gate implementation $v$ satisfies
\begin{equation}
 \mathcal C_{v,c}=(b,c,I_{v,c},d_{v,c},\eps_{v,c},\tube_v^b)
 \label{eq:gate-contract-tuple}
\end{equation}
when $d_{v,c}\in[0,T]$, $\eps_{v,c}\in[0,1]$, and, for every $s\in[0,T-d_{v,c}]$ and every $P\in\mathfrak P_{v,c}^{s}$,
\begin{equation}
 P\!\left[
 B_{v,c}^{\mathrm{loc}}(s)\cap A_{v,c}^{\mathrm{loc}}(s,T)\cap
 \Bad_v^{b,\mathrm{loc}}([s+d_{v,c},T])
 \right]
 \le\eps_{v,c}.
 \label{eq:gate-contract}
\end{equation}
The bound is uniform over every stochastic dependence structure represented
in $\mathfrak P_{v,c}^{s}$. Inputs and context remain random adapted processes. The contract does not condition on a completely specified future trajectory.
\end{definition}

Write $\operatorname{out}(\mathcal C_{v,c})$ for the requested output bit $b$ in~\eqref{eq:gate-contract-tuple}. For every $u\in\Uadm$ and every internal gate whose contract is selected for $u$, the checker requires
\begin{equation}
 \operatorname{out}(\mathcal C_{v,c_v(u)})=b_v(u).
 \label{eq:selected-contract-output}
\end{equation}

Masked inputs are not omitted physically. They are universally quantified within their envelopes. If a masked input can antagonize a controlling input, the corresponding contract is invalid unless that input is fixed by the cube or its envelope excludes the dangerous regime.

\subsection{Joint witness subgraphs}
Use the Boolean node values from~\eqref{eq:node-boolean-value}. For
$u\in\Uadm$, let $\witness(u)$ be the smallest subset of $V$ such that
\begin{equation}
 \{r_1,\ldots,r_m\}\subseteq\witness(u)
 \quad\text{and}\quad
 \left.
 \begin{array}{c}
  v\in\witness(u),\ v\text{ is internal},\\
  j\in\supp(c_v(u))
 \end{array}
 \right\}
 \Longrightarrow
 \operatorname{pa}_j(v)\in\witness(u).
 \label{eq:witness-recursion}
\end{equation}
Thus only parents fixed by the selected cube are followed from a given gate.
A parent omitted at one gate may nevertheless enter through another selected
branch. The recursion terminates at source leaves because $G_D$ is acyclic.
Since $\witness(u)$ is a set, every reconvergent node is counted once.

Define witness timing recursively by
\begin{align}
 \tau_i(u)&=\tau_{i,b_i(u)}
 &&\text{for a source leaf }i,
 \label{eq:leaf-delay}\\
 s_v(u)&=\max_{j\in\supp(c_v(u))}
          \tau_{\operatorname{pa}_j(v)}(u),
 \qquad
 \tau_v(u)=s_v(u)+d_{v,c_v(u)}
 &&\text{for an internal gate }v,
 \label{eq:gate-delay}
\end{align}
The maximum over an empty support is zero. Thus $s_v(u)$ is the time at which
all parents selected by the cube are available, and $\tau_v(u)$ is the end of
the gate's own certified delay.
The model-interface proof must establish, for every $u\in\Uadm$, $\theta\in\thetaSet$, $\mathbb P\in\mathfrak P_D(u,\theta)$, and internal $v\in\witness(u)$,
\begin{equation}
 \mathbb P\circ\mathcal R_{v,c_v(u),s_v(u)}^{-1}
 \in\mathfrak P_{v,c_v(u)}^{s_v(u)}.
 \label{eq:global-local-law-inclusion}
\end{equation}
The joint delay and one-time-counted risk are
\begin{align}
 \Delay(u)&=\max_{1\leq j\leq m}\tau_{r_j}(u),
 \label{eq:joint-witness-delay}\\
 \Risk(u)&=\epsenv+
 \sum_{i\in\witness(u)\cap\mathcal S_D}
 \eps_{i,b_i(u)}
 +\sum_{v\in\witness(u)\setminus\mathcal S_D}
 \eps_{v,c_v(u)}.
 \label{eq:witness-risk}
\end{align}
The quantity $\Risk(u)$ is an uncapped union-bound budget, not an independently
estimated physical failure probability. The envelope allowance and every node
shared among branches or outputs are included once. Counting a shared
allowance more than once would remain sound but would weaken the bound.

The interface and context evidence must also show that every invoked gate
starts in the set for which its local contract was proved. Specifically, for
$u\in\Uadm$, an internal $v\in\witness(u)$, $c=c_v(u)$,
$s=s_v(u)$, $\theta\in\thetaSet$, and
$\mathbb P\in\mathfrak P_D(u,\theta)$, it must establish
\begin{equation}
 \mathbb P\!\left[\Env\cap B_{v,c}(s)^c\right]=0.
 \label{eq:start-set-coverage}
\end{equation}
Equivalently, whenever the global envelope holds, the gate is in its declared
start set almost surely at the invocation time.

\begin{theorem}[Mask-aware joint compositional soundness]
\label{thm:composition}
Assume that every selected cube satisfies~\eqref{eq:cube-selection-valid} and
that each selected contract requests the corresponding Boolean value. Assume
that the global envelope and all source leaves satisfy
\eqref{eq:envelope-contract}, \eqref{eqenvsupplieslocal}, and \eqref{eq:sensor-contract}. For every
selected internal-gate obligation, assume the exact-model gate contract, event
pullbacks, induced-law inclusion, and start-set coverage in
\eqref{eq:gate-contract}, \eqref{eq:global-local-event-pullbacks}, \eqref{eq:global-local-law-inclusion}, and \eqref{eq:start-set-coverage}.

Then, for every $u\in\Uadm$, $\theta\in\thetaSet$, and
$\mathbb P\in\mathfrak P_D(u,\theta)$ with $\Delay(u)\le T$,
\begin{equation}
 \mathbb P\!\left[
  \bigcup_{j=1}^{m}
  \Bad_{r_j}^{b_{r_j}(u)}([\tau_{r_j}(u),T])
 \right]
 \le\Risk(u).
 \label{eq:composition-bound}
\end{equation}
Thus, when $\Risk(u)\le1$, every root remains correct after its own witness
delay with probability at least $1-\Risk(u)$. No independence assumption is
required.
\end{theorem}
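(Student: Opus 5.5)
The plan is to show that the bad root event is contained in the union of per-node "first-failure" events: the envelope failure $\Env^c$, the source-leaf failures $\Bad_i^{b_i(u)}([\tau_i(u),T])$ for $i\in\witness(u)\cap\mathcal S_D$, and the local-contract events $\mathcal K_v:=B_{v,c}(s)\cap A_{v,c}(s,T)\cap\Bad_v^{b_v(u)}([\tau_v(u),T])$ for internal $v\in\witness(u)$, where $c=c_v(u)$ and $s=s_v(u)$. Subadditivity of $\mathbb P$ then yields \eqref{eq:composition-bound}, with each witness node counted once because $\witness(u)$ is a set. Only inclusions of events and the union bound are used, so no independence assumption is needed.

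\textbf{Step 1.} Fix $u,\theta,\mathbb P$ with $\Delay(u)\le T$. Every $\tau_v(u)$ for $v\in\witness(u)$ is at most $T$: the recursion \eqref{eq:gate-delay} is monotone along witness edges, and every witness node lies below some root, so $\tau_v(u)\le\tau_{r_j}(u)\le\Delay(u)$. Hence $s_v(u)\in[0,T-d_{v,c}]$ and the contracts apply. Define the good event
\[
\mathcal G=\Env\cap\bigcap_{i}\Good_i^{b_i(u)}([\tau_i(u),T])\cap\bigcap_{v}\bigl(B_{v,c_v(u)}(s_v(u))\cap(\mathcal K_v)^c\bigr).
\]

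\textbf{Step 2.} On $\mathcal G$, prove by induction along a topological order of $\witness(u)$ that $\omega\in\Good_v^{b_v(u)}([\tau_v(u),T])$ for every $v\in\witness(u)$. Leaves hold by definition of $\mathcal G$. For an internal $v$, each $j\in\supp(c)$ has its parent in $\witness(u)$ by \eqref{eq:witness-recursion}, and by induction the parent is good on $[\tau_{\operatorname{pa}_j(v)}(u),T]\supseteq[s,T]$. Its value $b_{\operatorname{pa}_j(v)}(u)$ equals $c_j$ because $a_v(u)\models c$ by \eqref{eq:cube-selection-valid}. The masked and context clauses of $A_{v,c}(s,T)$ follow from $\Env$ via \eqref{eqenvsupplieslocal}, so $\omega\in A_{v,c}(s,T)$. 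Since $\omega\in B_{v,c}(s)$ and $\omega\notin\mathcal K_v$, we get $\omega\in\Good_v^{b}([s+d_{v,c},T])$ with $b=\operatorname{out}(\mathcal C_{v,c})=b_v(u)$ by \eqref{eq:selected-contract-output}. Applying this to the roots shows that the union in \eqref{eq:composition-bound} is contained in $\mathcal G^c$.

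\textbf{Step 3.} Bound $\mathbb P[\mathcal G^c]$ by the sum of the following terms:
\begin{itemize}
\item $\mathbb P[\Env^c]\le\epsenv$, by \eqref{eq:envelope-contract};
\item the leaf terms, each at most $\eps_{i,b_i(u)}$, by \eqref{eq:sensor-contract};
\item $\mathbb P[\Env\cap B_{v,c}(s)^c]=0$, by \eqref{eq:start-set-coverage};
\item $\mathbb P[\mathcal K_v]$, estimated as follows.
\end{itemize}
It is harmless to intersect $\mathcal K_v$ with $\Env$, since the union already contains $\Env^c$. For the last term, \eqref{eq:global-local-event-pullbacks} writes $\mathcal K_v=\mathcal R^{-1}(B^{\mathrm{loc}}\cap A^{\mathrm{loc}}\cap\Bad^{\mathrm{loc}})$. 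Then $\mathbb P[\mathcal K_v]$ equals the local probability under $\mathbb P\circ\mathcal R^{-1}$, which lies in $\mathfrak P_{v,c}^{s}$ by \eqref{eq:global-local-law-inclusion}. Hence $\mathbb P[\mathcal K_v]\le\eps_{v,c}$ by \eqref{eq:gate-contract}. Summing these bounds gives $\Risk(u)$.

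\textbf{Main obstacle.} The delicate point is the index bookkeeping in Step 2. One must check that the pullback identities are invoked at exactly the random-free start $s=s_v(u)$ and on the interval $[s+d_{v,c},T]=[\tau_v(u),T]$. One must also check that a parent's good interval $[\tau_{\operatorname{pa}}(u),T]$ covers $[s,T]$, which holds because $s$ is a maximum over the parents. A further subtlety is that the assumption event is required on the whole interval $[s,T]$ rather than being stopped at a violation time. This is legitimate because on $\mathcal G$ the parents are good throughout $[s,T]$, so the full-interval form \eqref{eq:gate-contract} suffices without the stopping-time machinery.
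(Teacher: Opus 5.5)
Your proposal is correct and follows essentially the same approach as the paper's proof. The paper unfolds the root failure through the one-step inclusion \eqref{eq:composition-one-step} and acyclic substitution into $\Env^c$, the local events $E_v$ (your $\mathcal K_v$), and the leaf failures. You prove the contrapositive by induction along a topological order on the good event $\mathcal G$, and you account for start-set coverage as an explicit zero-probability term, where the paper instead works modulo $\mathbb P$-null sets.
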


\begin{proof}
Fix $u$, $\theta$, and $\mathbb P$. Every witness node lies on a selected
path to a root, and all delays are nonnegative. Thus $\Delay(u)\le T$
implies $s_v(u)+d_{v,c_v(u)}\le T$ for each invoked gate, as required by its
contract. By~\eqref{eq:cube-selection-valid} and \eqref{eq:selected-contract-output}, each selected contract requests the Boolean value $b_v(u)$. For every internal witness node $v$, set
\begin{equation}
 s_v=s_v(u),
 \qquad
 J_v=[s_v+d_{v,c_v(u)},T],
 \label{eq:composition-node-interval}
\end{equation}
and define
\begin{equation}
 E_v=B_{v,c_v(u)}(s_v)\cap A_{v,c_v(u)}(s_v,T)
 \cap\Bad_v^{b_v(u)}(J_v).
 \label{eq:composition-local-event}
\end{equation}
Global-to-local law inclusion~\eqref{eq:global-local-law-inclusion} makes
$\mathbb P\circ\mathcal R_{v,c_v(u),s_v}^{-1}$ a member of
$\mathfrak P_{v,c_v(u)}^{s_v}$. The event $E_v$ is the pullback of the
corresponding local event under $\mathcal R_{v,c_v(u),s_v}$, so the gate
contract gives
\begin{equation}
 \mathbb P[E_v]\le\eps_{v,c_v(u)}.
 \label{eq:composition-local-bound}
\end{equation}

Suppose that $\Env$ holds and that each selected parent $w$ of $v$ is good on
$[\tau_w(u),T]$. Since $\tau_w(u)\le s_v$, each such parent is good on
$[s_v,T]$. The envelope event supplies every masked-input and context clause in
$A_{v,c_v(u)}(s_v,T)$. Moreover, \eqref{eq:start-set-coverage} implies that
$B_{v,c_v(u)}(s_v)$ holds on $\Env$ outside a $\mathbb P$-null set. All event
inclusions in the remainder of this proof are therefore understood modulo
$\mathbb P$-null sets. Consequently,
\begin{equation}
 \Env\cap\Bad_v^{b_v(u)}(J_v)
 \subseteq
 E_v\cup
 \bigcup_{j\in\supp(c_v(u))}
 \Bad_{\operatorname{pa}_j(v)}^{c_{v,j}(u)}
 \bigl([\tau_{\operatorname{pa}_j(v)}(u),T]\bigr).
 \label{eq:composition-one-step}
\end{equation}
Repeated substitution terminates because $G_D$ is acyclic. Applying the result to every root and taking their union gives
\begin{align}
 &\bigcup_{j=1}^{m}
 \Bad_{r_j}^{b_{r_j}(u)}([\tau_{r_j}(u),T])\nonumber\\
 &\quad\subseteq
 \Env^c
 \cup\bigcup_{v\in\witness(u)\setminus\mathcal S_D}E_v
 \cup\bigcup_{i\in\witness(u)\cap\mathcal S_D}
 \Bad_i^{b_i(u)}([\tau_{i,b_i(u)},T]).
 \label{eq:composition-expanded-inclusion}
\end{align}
Subadditivity, the source-leaf and envelope bounds, and~\eqref{eq:composition-local-bound} prove~\eqref{eq:composition-bound}. Shared random variables may occur in several events, which does not affect the union bound.
\end{proof}

\begin{corollary}[Uniform joint refinement from witness bounds]
\label{cor:witness-refinement}
Under all hypotheses of \cref{thm:composition}, assume that the model interface defines output $Y_{D,j}$ as the declared projection associated with root $r_j$. In particular, for every admitted global law and every
$j\in\{1,\ldots,m\}$,
\begin{equation}
 \mathbb P\!\left[
 o_{y_j}(Y_{D,j}(t))=o_{r_j}(Z_{r_j}(t))
  \text{ for every }t\in[0,T]
 \right]=1,
 \label{eq:root-output-process-binding}
\end{equation}
and the root and named output use the same low and high thresholds. Assume also that
\begin{equation}
 b_{r_j}(u)=\llbracket F_j\rrbracket(u)
 \quad\text{for every }u\in\Uadm
 \text{ and }j\in\{1,\ldots,m\}.
 \label{eq:root-boolean-equivalence}
\end{equation}
If
\begin{equation}
 \max_{u\in\Uadm}\Delay(u)\le\tau
 \qquad\text{and}\qquad
 \max_{u\in\Uadm}\Risk(u)\le1-p,
 \label{eq:global-witness-bounds}
\end{equation}
then $D\preceq_\eta\mathbf F$ for the exact target family used by the local contracts.
\end{corollary}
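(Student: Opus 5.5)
Fix $u\in\Uadm$, $\theta\in\thetaSet$, and $\mathbb P\in\mathfrak P_D(u,\theta)$. The plan is to apply \cref{thm:composition} and then move from the per-root windows to the common measurement window $[\tau,T]$. The first hypothesis in~\eqref{eq:global-witness-bounds} gives $\Delay(u)\le\tau<T$, so the theorem applies. It yields
\[
\mathbb P\Bigl[\bigcup_{j}\Bad_{r_j}^{b_{r_j}(u)}([\tau_{r_j}(u),T])\Bigr]\le\Risk(u)\le 1-p.
\]
By~\eqref{eq:joint-witness-delay}, every root satisfies $\tau_{r_j}(u)\le\Delay(u)\le\tau$. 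Hence $[\tau,T]\subseteq[\tau_{r_j}(u),T]$, and goodness on the larger interval implies goodness on the smaller one:
\[
\Good_{r_j}^{b}([\tau_{r_j}(u),T])\subseteq\Good_{r_j}^{b}([\tau,T]).
\]
Taking complements and the union over $j$, the probability that some root is bad on $[\tau,T]$ is at most $1-p$.

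The second step transfers the statement from root wire events to named output events. Let $N_j$ be the $\mathbb P$-null set on which~\eqref{eq:root-output-process-binding} fails. Off $N_j$, the observations $o_{y_j}(Y_{D,j}(t))$ and $o_{r_j}(Z_{r_j}(t))$ coincide for all $t\in[0,T]$. Since the thresholds agree ($\low_j=L_{r_j}$ and $\high_j=H_{r_j}$), comparing~\eqref{eq:output-tubes} with~\eqref{eq:wire-tubes} shows that $Y_{D,j}(t)\in Q_{y_j}^b$ if and only if $Z_{r_j}(t)\in\tube_{r_j}^b$. Therefore $\Good_{y_j}^b([\tau,T])$ and $\Good_{r_j}^b([\tau,T])$ differ only by a subset of $N_j$. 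A finite union of null sets is null, so
\[
\mathbb P\bigl[\Bad_{\mathbf y}^{\mathbf b}([\tau,T])\bigr]\le 1-p
\quad\text{with}\quad
\mathbf b=(b_{r_1}(u),\ldots,b_{r_m}(u)).
\]

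The third step replaces $\mathbf b$ with the source semantics. By~\eqref{eq:root-boolean-equivalence}, $\mathbf b=\llbracket\mathbf F\rrbracket(u)$. Hence $\mathbb P[\Good_{\mathbf y}^{\llbracket\mathbf F\rrbracket(u)}([\tau,T])]\ge p$. This holds for arbitrary $u$, $\theta$, and $\mathbb P$, so taking the three infima gives~\eqref{eq:refinement}. Measurability of all these events was already established after~\eqref{eq:joint-output-event}.

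The main obstacle is the bookkeeping in the second step rather than any probabilistic estimate. One has to confirm that the wire-level tube at the root is exactly the pullback of the output tube. This relies on equal thresholds and on the closed-set convention, which matches ($\le\low_j$ and $\ge\high_j$ on both sides). One also has to confirm that identifying the two processes almost surely, uniformly in $t$, is enough: the binding holds for all $t$ simultaneously off one null set, so no countable-dense-time argument is needed. A repeated root occurrence just gives two output indices tied to the same $Z_{r_j}$, and the argument goes through unchanged.
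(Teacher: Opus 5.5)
Your proof is correct and follows the paper's argument. Root binding with matching thresholds and \eqref{eq:root-boolean-equivalence} identify joint output failure on $[\tau,T]$ with the union of root failures, the bound $\tau_{r_j}(u)\le\Delay(u)\le\tau$ enlarges these to the per-root windows of \cref{thm:composition}, and the theorem together with $\Risk(u)\le 1-p$ and the infima finish; you only apply these steps in a different order and spell out the null-set and tube-pullback bookkeeping that the paper leaves implicit.
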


\begin{proof}
Fix $u\in\Uadm$. Root binding and~\eqref{eq:root-boolean-equivalence} identify failure of the joint output event on $[\tau,T]$ with $\bigcup_{j=1}^{m}\Bad_{r_j}^{b_{r_j}(u)}([\tau,T])$. Since $\tau_{r_j}(u)\leq\Delay(u)\leq\tau$, this union is contained in $\bigcup_{j=1}^{m}\Bad_{r_j}^{b_{r_j}(u)}([\tau_{r_j}(u),T])$. \cref{thm:composition} and~\eqref{eq:global-witness-bounds} bound its probability by $1-p$. Taking the infima in \cref{def:refinement} proves the claim.
\end{proof}

For the computational comparison, \emph{full-cone accounting} uses the set
$\witness_{\mathrm{full}}$ containing every ancestor of the output roots, with
each node included once. For each assignment, it uses the same selected local
contract $\mathcal C_{v,c_v(u)}$ as the sliced calculation, but includes every
node in $\witness_{\mathrm{full}}$. Its delay recurrence takes the maximum over
all parents, and its risk sum replaces $\witness(u)$ by
$\witness_{\mathrm{full}}$. This baseline changes only proof accounting. It does not change the physical
circuit or assert that its biochemical reliability or latency has changed.
To interpret full-cone accounting as a certified probability bound, start-set
coverage and the local-law interface must also hold at its own, potentially
later invocation times. The numerical comparison alone does not establish
those additional obligations.

\begin{remark}
A cube has proper support when it leaves at least one input unfixed.
Sufficient-cube slicing can reduce proof-accounting risk only when a valid
masked-input-robust contract exists and at least one omitted node has positive
risk charge. It reduces accounting delay only when an omitted branch
contributes to a maximizing path in the full-cone recurrence. A controlling
Boolean value alone therefore does not imply a strict numerical improvement.
\end{remark}

The composition theorem reduces the joint path property to a finite family of
local, source, envelope, and interface obligations. The next two sections
show how local stochastic evidence can supply those obligations directly or
through a checked surrogate transfer.

\section{Local certificates for switching jump-diffusions}
\label{sec:local-certificates}

The semantic gate contract~\eqref{eq:gate-contract} can be proved using
nonnegative certificate functions whose stopped values are supermartingales
while the local assumptions hold. A supermartingale is a stochastic process
whose conditional expected future value does not exceed its present value.
Here, that property holds before the relevant stopping time. The same
construction can certify
a stricter, observation-eroded tube for a surrogate model. A separate transfer
argument must then relate that result to the exact target.

Each application declares a certificate-state space $\mathsf S_{v,c}$,
an adapted coordinate process $Z$, a continuous observation map $q_{v,c}$,
a Borel initial set $I_{v,c}$, a closed state success set
$\mathcal Q_{v,c}^{b}$, and a family $\mathfrak Q_{v,c}^{s}$ of complete
local path laws. The complete local path includes $Z$, the inputs, context,
modes, and every resource or hidden coordinate needed by the obligation.
The certificate functions below depend on $Z$ only. Other local coordinates
may enter their generator through predictable coefficients. If a certificate
also depends on such a coordinate, that coordinate must be included in $Z$,
and its dynamics must be included in the generator, with any cross-diffusion
and joint-jump terms. A direct exact-model proof uses
\[
 \mathsf S_{v,c}=\mathsf X_v,\quad Z=Z_v,\quad q_{v,c}=o_v,\quad
 \mathfrak Q_{v,c}^{s}=\mathfrak P_{v,c}^{s},\quad
 \mathcal Q_{v,c}^{b}=\tube_v^b.
\]
A surrogate proof uses its own corresponding objects and an eroded success set
in the common observation space. Transfer is permitted only when exact and
surrogate start-and-assumption events are the pullbacks of the same checked
common-data condition.

Within this section, $B_{v,c}$ and $A_{v,c}$ denote the canonical local events
with the superscript $\mathrm{loc}$ omitted. Define
$\Bad_{v,\mathcal Q}^{b}(I)$ as the event that $Z(t)$ lies outside
$\mathcal Q_{v,c}^{b}$ at some $t\in I$.

\subsection{Operating, safe, core, and assumption sets}

Fix a gate $v$, sufficient cube $c$, requested output $b$, and delay
$d=d_{v,c}$. Write the certificate coordinate as $Z_t=(X_t,M_t)$. Its complete local law
also carries the predictable inputs and context. Let
$O_{v,c}$ be an open operating region with compact closure. All local generator
inequalities for the reach phase are checked there. Let
$I_{v,c}\subset O_{v,c}$ be a Borel initial set whose closure is compactly
contained in $O_{v,c}$. Define the open safe region
\begin{equation}
 D_{v,c}^b
 =O_{v,c}\cap
 \operatorname{int}_{\mathsf S_{v,c}}(\mathcal Q_{v,c}^b),
 \label{eq:open-safe-domain}
\end{equation}
and choose a nonempty compact core $K_{v,c}^b\Subset D_{v,c}^b$. Here
$\Subset$ means compact containment. The reach certificate must drive the
state into the core by the declared delay. The stay certificate must then
prevent exit from the safe region. Because a jump may overshoot a boundary,
the certificate functions must also be defined at every certified reachable
post-jump or post-switch landing state.

For a deterministic start time $s\in[0,T-d]$, define
\begin{align}
 \tau_K^s&=\inf\{t\in[s,T]\mid Z_t\in K_{v,c}^b\},\\
 \tau_O^s&=\inf\{t\in[s,T]\mid Z_t\notin O_{v,c}\}.
 \label{eq:reach-stopping-times}
\end{align}
For a stopping time $\sigma$, define
\begin{equation}
 \tau_D^\sigma(\omega)
 =\inf\{t\in[\sigma(\omega),T]\mid
 Z_t(\omega)\notin D_{v,c}^b\}.
 \label{eq:safe-exit-time}
\end{equation}
Thus $\tau_K^s$ is the first time after $s$ at which the process reaches the
core, $\tau_O^s$ is the first time it leaves the operating region, and
$\tau_D^\sigma$ is the first time after $\sigma$ at which it leaves the safe
region. We use $\inf\varnothing=\infty$ and $\tau_D^\infty=\infty$. Since
$D_{v,c}^b$ uses the interior of the closed success set, reaching the threshold
boundary is conservatively counted as a safe-region exit.

\subsection{Reach certificate}

Write $\gen_t^{\theta,U,H}$ for the local extended generator of the declared process under parameter $\theta$ and predictable input and context paths $U$ and $H$. For the switching jump-diffusion, its drift, diffusion, jump, and mode-switching terms are given in Appendix~\ref{app:generator}.

A reach certificate consists of a nonnegative function
\begin{equation}
 R:[0,T-d]\times[0,d]\times\widetilde O_{v,c}
 \longrightarrow\Real_{\ge0}
 \label{eq:reach-function-domain}
\end{equation}
and $\alpha\in[0,1]$, where $\widetilde O_{v,c}$ contains $\overline O_{v,c}$ and every certified reachable operating-exit landing state. Let $\mathcal L_{v,c}^{O}\subseteq\widetilde O_{v,c}$ denote the certified operating-exit landing set. It contains continuous boundary points, compound-Poisson overshoots, and promoter-mode transitions that leave the operating set. The first argument of $R$ is the global start time $s$, and the second is elapsed time $r$. Uniformly over every admissible mode, parameter, predictable input, and predictable context value while the local assumptions hold, require
\begin{align}
 R(s,0,z)&\le\alpha
 &&s\in[0,T-d],\ z\in I_{v,c},
 \label{eq:reach-init}\\
 R(s,d,z)&\ge1
 &&s\in[0,T-d],\ z\in\overline O_{v,c}\setminus K_{v,c}^b,
 \label{eq:reach-deadline}\\
 R(s,r,z')&\ge1
 &&\substack{s\in[0,T-d],\ r\in[0,d],\ z'\in\mathcal L_{v,c}^{O}},
 \label{eq:reach-boundary}\\
 \bigl(\partial_r+\gen_{s+r}^{\theta,U,H}\bigr)R(s,r,z)&\le0
 &&\substack{s\in[0,T-d],\ r\in[0,d],\\ z\in O_{v,c}\setminus K_{v,c}^b}.
 \label{eq:reach-generator}
\end{align}
The initial inequality bounds the certificate by $\alpha$ on the start set.
The deadline and exit-landing inequalities make it
at least one when the core is not reached in time or the process first leaves
the operating region. The generator inequality makes the stopped certificate
a supermartingale. The exit-landing set includes continuous boundary points,
jump overshoots, and mode-transition states outside the operating region. If
jump marks have unbounded support, the checker must combine a bounded-region
calculation with a rigorous tail bound.

\begin{lemma}[Conditional reach bound]
\label{lem:reach}
Under the clauses of Assumption~\ref{ass:wellposed} applicable to $\mathfrak Q_{v,c}^{s}$, conditions~\eqref{eq:reach-init} through~\eqref{eq:reach-generator} imply, for every $s\in[0,T-d]$ and every $\mathbb P\in\mathfrak Q_{v,c}^{s}$,
\begin{equation}
 \mathbb P\!\left[
 A_{v,c}(s,s+d)\cap
 \bigl(
   \{\tau_K^s>s+d\}
   \cup\{\tau_O^s<\tau_K^s\}
 \bigr)
 \;\middle|\;\mathcal G_s^{\mathbb P}
 \right]
 \le\alpha
 \quad\text{a.s. on }B_{v,c}(s).
 \label{eq:reach-bound}
\end{equation}
In words, conditional on the information available at time $s$, the
probability that the assumptions hold through $s+d$ but the process either
misses the core by that deadline or leaves the operating region first is at
most $\alpha$.
\end{lemma}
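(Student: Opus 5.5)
We will prove the conditional reach bound by a standard stopped-supermartingale argument. Fix $s\in[0,T-d]$ and $\mathbb P\in\mathfrak Q_{v,c}^{s}$, and consider the space--time process $W_r=R(s,r,Z_{s+r})$ for $r\in[0,d]$. The stopping time is
\[
 \rho=\min\bigl(\tau_K^s,\ \tau_O^s,\ \tau_A^s,\ s+d\bigr)-s .
\]
Here $\tau_A^s$ is the first assumption-violation time from \eqref{eq:assumption-violation-time}, which the checker has verified to be a stopping time. By clause~4 of Assumption~\ref{ass:wellposed}, $\tau_K^s$ and $\tau_O^s$ are stopping times for the complete right-continuous filtration $\{\mathcal G_t^{\mathbb P}\}$, so $\rho$ is a stopping time of the shifted filtration.

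The first step is to apply the It\^o--Dynkin identity (clause~5) to $R$ on $[0,\rho]$. Strictly before $\rho$, the state lies in $O_{v,c}\setminus K_{v,c}^b$ and the assumptions hold. By the generator inequality \eqref{eq:reach-generator}, the drift term $\int_0^{r\wedge\rho}(\partial_r+\gen_{s+u})R\,du$ is therefore nonpositive. We use the predictable-left-limit convention, so the integrand is evaluated with $Z_{u^-}$ and the coefficients at those points also satisfy the hypotheses. Clause~6 then makes the local martingale term a true uniformly integrable martingale after stopping. Conditional optional stopping at $\rho$ gives
\[
 \mathbb E\bigl[R(s,\rho,Z_{s+\rho})\mid\mathcal G_s^{\mathbb P}\bigr]\le R(s,0,Z_s)\le\alpha \quad\text{a.s. on }B_{v,c}(s).
\]
The last inequality uses \eqref{eq:reach-init}, together with the facts that $B_{v,c}(s)=\{Z_s\in I_{v,c}\}$ and that $B_{v,c}(s)\in\mathcal G_s^{\mathbb P}$. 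This lets us restrict to $B_{v,c}(s)$ inside the conditional expectation.

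The second step is the pathwise lower bound. On the event $\mathcal E=A_{v,c}(s,s+d)\cap(\{\tau_K^s>s+d\}\cup\{\tau_O^s<\tau_K^s\})$ we want $R(s,\rho,Z_{s+\rho})\ge1$. Nonnegativity of $R$ then gives $\mathbf 1_{\mathcal E}\le R(s,\rho,Z_{s+\rho})$, and conditional Markov completes the proof. There are two cases.
\begin{enumerate}
\item Suppose $\tau_K^s>s+d$ and $\tau_O^s>s+d$. The assumptions hold, so $\tau_A^s\ge s+d$ by \eqref{eq:assumption-stopping-event}, and hence $\rho=d$. The state $Z_{s+d}$ lies in $\overline O_{v,c}\setminus K_{v,c}^b$. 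Actually it lies in $O_{v,c}$ itself, but only the closure is needed. So \eqref{eq:reach-deadline} gives $R\ge1$.
\item Suppose $\tau_O^s<\tau_K^s$ with $\tau_O^s\le s+d$. Then $\rho=\tau_O^s-s$, and $Z_{\tau_O^s}$ is a certified operating-exit landing state: either a boundary point reached continuously, a jump overshoot, or a mode transition. So \eqref{eq:reach-boundary} gives $R\ge1$.
\end{enumerate}
The overlap case $\tau_O^s<\tau_K^s$ with $\tau_O^s>s+d$ is covered by the first case.

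We expect the main obstacle to be the endpoint conventions around $\tau_A^s$. If the assumption holds on $[s,s+d]$ but $\tau_A^s=s+d$ exactly, the inclusion $A\subseteq\{\tau_A^s\ge t\}$ still yields $\rho$ equal to the deadline or to the exit, so the case analysis survives. A second delicate point is that on $\{\tau_A^s<\rho\text{-competitors}\}$ nothing is claimed, which is harmless because such paths lie outside $\mathcal E$. A related technical point is the right-continuity at $\tau_O^s$: for a c\`adl\`ag path leaving an open set, $Z_{\tau_O^s}\notin O_{v,c}$ or it is a boundary limit point, and the certified landing set must contain it. This is exactly what the definition of $\mathcal L_{v,c}^{O}$ provides. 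Finally, if $\tau_K^s\le s+d$, the path is simply excluded unless the exit came first, in which case the second case applies.
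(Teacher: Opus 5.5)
Your proposal is correct and follows essentially the same route as the paper. Both arguments stop the certificate at $\tau_K^s\wedge\tau_O^s\wedge\tau_A^s\wedge(s+d)$, use the conditional It\^o--Dynkin identity with \eqref{eq:reach-generator} and \eqref{eq:reach-init} to bound the conditional expectation by $\alpha$ on $B_{v,c}(s)$, and then use $\tau_A^s\ge s+d$ on the event together with \eqref{eq:reach-deadline} and \eqref{eq:reach-boundary} to make the stopped value at least one. Your explicit case split (missing the core by the deadline versus exiting before reaching it) simply spells out the paper's one-line claim that the stopped state is either outside the core at the deadline or in the certified exit-landing set.
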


\begin{proof}
Let
\begin{equation}
 \eta_s=\tau_K^s\wedge\tau_O^s\wedge\tau_A^s\wedge(s+d).
 \label{eq:reach-stop}
\end{equation}
On $B_{v,c}(s)$, condition~\eqref{eq:reach-init} gives $R(s,0,Z_s)\le\alpha$. The conditional It\^o-Dynkin identity, stopped at $\eta_s$, and~\eqref{eq:reach-generator} give
\begin{equation}
 \mathbb E\!\left[
 R(s,\eta_s-s,Z_{\eta_s})
 \;\middle|\;\mathcal G_s^{\mathbb P}
 \right]
 \le R(s,0,Z_s)\le\alpha
 \quad\text{a.s. on }B_{v,c}(s).
 \label{eq:reach-dynkin-conditional}
\end{equation}
On the event inside~\eqref{eq:reach-bound}, one has $\tau_A^s\ge s+d$ by~\eqref{eq:assumption-stopping-event}. The stopped state is therefore either outside the core at the deadline or in a certified operating-exit landing set. A tie between $\tau_A^s$ and the deadline is covered by the deadline inequality. Conditions~\eqref{eq:reach-deadline} and~\eqref{eq:reach-boundary} make the stopped certificate at least one on that event. Nonnegativity and~\eqref{eq:reach-dynkin-conditional} prove~\eqref{eq:reach-bound}.
\end{proof}

\subsection{Stay certificate}
A stay certificate is a nonnegative function of absolute time,
\begin{equation}
 S:[0,T]\times\widetilde D_{v,c}^b\longrightarrow\Real_{\ge0},
 \label{eq:stay-function-domain}
\end{equation}
together with $\beta\in[0,1]$. Absolute time is used because the stay phase can
begin at a random core-hitting time. The set $\widetilde D_{v,c}^b$ contains
$\overline D_{v,c}^b$ and every certified state at which the process can land
when it exits the safe region, including continuous boundary points, jump
overshoots, and mode-transition states. Let
$\mathcal L_{v,c}^{D,b}\subseteq\widetilde D_{v,c}^b$ denote this certified
safe-exit landing set. Uniformly over every admitted parameter and every
predictable input and context value, require
\begin{align}
 S(t,z)&\le\beta
 &&t\in[0,T],\ z\in K_{v,c}^b,
 \label{eq:stay-init}\\
 S(t,z')&\ge1
 &&t\in[0,T],\ z'\in\mathcal L_{v,c}^{D,b},
 \label{eq:stay-boundary}\\
 (\partial_t+\gen_t^{\theta,U,H})S(t,z)&\le0
 &&t\in[0,T],\ z\in D_{v,c}^b.
 \label{eq:stay-generator}
\end{align}
The first inequality bounds the certificate by $\beta$ whenever the stay phase
starts in the core. The second makes it at least one at every safe-exit landing
state. The generator inequality makes the stopped certificate a supermartingale
until the assumptions fail or the process exits the safe region.

\begin{lemma}[Conditional stay bound]
\label{lem:stay}
Under the clauses of Assumption~\ref{ass:wellposed} applicable to $\mathfrak Q_{v,c}^{s}$, fix an invocation time $s$ and $\mathbb P\in\mathfrak Q_{v,c}^{s}$. Let $\sigma$ be a stopping time taking values in $[s,T]$, and define
\begin{equation}
 G_\sigma=\{Z_\sigma\in K_{v,c}^b\}.
 \label{eq:stay-start-event}
\end{equation}
Under $\mathbb P$,
\begin{equation}
 \mathbb P\!\left[
 A_{v,c}(\sigma,T)\cap\{\tau_D^\sigma\le T\}
 \;\middle|\;\mathcal G_\sigma^{\mathbb P}
 \right]
 \le\beta
 \quad\text{a.s. on }G_\sigma.
 \label{eq:stay-bound}
\end{equation}
\end{lemma}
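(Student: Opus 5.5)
The argument mirrors the proof of Lemma~\ref{lem:reach}, with the random start $\sigma$ in place of the deterministic start $s$ and absolute time in the certificate. Define the stopping time
\[
 \kappa_\sigma=\tau_D^\sigma\wedge\tau_A^\sigma\wedge T,
\]
with $\tau_A^\sigma$ the random-start violation time from~\eqref{eq:random-assumption-violation-time}. The first step is to check that $\kappa_\sigma$ is a stopping time with $\sigma\le\kappa_\sigma\le T$. For $\tau_D^\sigma$ this follows from path regularity and openness of $D_{v,c}^b$ (Assumption~\ref{ass:wellposed}, clause 4). For $\tau_A^\sigma$ it is the checker-verified property behind~\eqref{eq:random-assumption-stopping-event}.

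The second step applies the Itô--Dynkin identity on the stochastic interval $[\sigma,\kappa_\sigma]$. On $[\sigma,\kappa_\sigma)$ the state lies in $D_{v,c}^b$ and the local assumptions hold, so~\eqref{eq:stay-generator} applies at every predictable coefficient value. By clauses 5 and 6 of Assumption~\ref{ass:wellposed}, the stopped local martingale is a true uniformly integrable martingale. Hence $t\mapsto S(t\wedge\kappa_\sigma,Z_{t\wedge\kappa_\sigma})$ is a supermartingale after $\sigma$. Optional stopping between $\sigma$ and $\kappa_\sigma$ then gives
\[
 \mathbb E\!\left[S(\kappa_\sigma,Z_{\kappa_\sigma})\,\middle|\,\mathcal G_\sigma^{\mathbb P}\right]
 \le S(\sigma,Z_\sigma)\le\beta
 \quad\text{a.s. on }G_\sigma,
\]
where the last inequality is~\eqref{eq:stay-init}. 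Here $G_\sigma\in\mathcal G_\sigma^{\mathbb P}$, because $Z$ is adapted and càdlàg and $K_{v,c}^b$ is closed.

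The third step is a pathwise lower bound on the event $E=A_{v,c}(\sigma,T)\cap\{\tau_D^\sigma\le T\}$. By~\eqref{eq:random-assumption-stopping-event}, on $A_{v,c}(\sigma,T)$ we have $\tau_A^\sigma\ge T\ge\tau_D^\sigma$, so $\kappa_\sigma=\tau_D^\sigma$. At that time $Z$ occupies a certified safe-exit landing state in $\mathcal L_{v,c}^{D,b}$. This covers a continuous boundary point, since $D_{v,c}^b$ is open and a càdlàg path exiting continuously sits at the boundary. It also covers a jump overshoot or a mode transition, which are included by certification. Condition~\eqref{eq:stay-boundary} therefore gives $S(\kappa_\sigma,Z_{\kappa_\sigma})\ge1$ on $E$. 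Combining this with nonnegativity of $S$ yields $\mathbf 1_E\le S(\kappa_\sigma,Z_{\kappa_\sigma})$. Taking conditional expectations then proves~\eqref{eq:stay-bound}.

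The main obstacle is the tie case $\tau_A^\sigma=\tau_D^\sigma$ together with the endpoint conventions. The difficulty is that $\kappa_\sigma$ must coincide with the safe-exit time on $E$ even if the assumption is violated at exactly that instant. I would handle this as follows. Since $E\subseteq A_{v,c}(\sigma,T)$, the second inclusion in~\eqref{eq:random-assumption-stopping-event} gives $\tau_A^\sigma\ge T$. So on $E$ the minimum is attained by $\tau_D^\sigma$ regardless of which stopping time wins a tie. The generator inequality is used only strictly before $\kappa_\sigma$, where both the safe-region and assumption conditions hold. A secondary subtlety is the degenerate case $\sigma=\tau_D^\sigma$, which cannot occur on $G_\sigma$ because $K_{v,c}^b\subset D_{v,c}^b$. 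The Dynkin step at a random start also needs the strong-Markov-free formulation, namely the identity applied to the process itself rather than a restarted law. This is covered by the assumed regularity.
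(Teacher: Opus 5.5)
Your proposal is correct and follows the paper's proof essentially step for step. You use the same stopping time $\tau_D^\sigma\wedge\tau_A^\sigma\wedge T$, the conditional It\^o--Dynkin/optional-stopping bound combined with \eqref{eq:stay-init}, and \eqref{eq:random-assumption-stopping-event} to obtain $\tau_A^\sigma\ge T$ on the event, so that the stopped state is a certified safe-exit landing where \eqref{eq:stay-boundary} gives $S\ge1$ (your extra remarks on $G_\sigma\in\mathcal G_\sigma^{\mathbb P}$, $\tau_D^\sigma>\sigma$ on $G_\sigma$, and the tie at $T$ are consistent with the paper, though the stopping-time property of $\tau_A^\sigma$ comes most directly from clause~4 of Assumption~\ref{ass:wellposed}).
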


\begin{proof}
On $G_\sigma$, stop the absolute-time process at
\begin{equation}
 \eta_\sigma=\tau_D^\sigma\wedge\tau_A^\sigma\wedge T.
 \label{eq:stay-stop}
\end{equation}
The conditional It\^o-Dynkin identity and~\eqref{eq:stay-generator} give
\begin{equation}
 \mathbb E\!\left[
 S(\eta_\sigma,Z_{\eta_\sigma})
 \;\middle|\;\mathcal G_\sigma^{\mathbb P}
 \right]
 \le S(\sigma,Z_\sigma)\le\beta
 \quad\text{a.s. on }G_\sigma.
 \label{eq:stay-dynkin-conditional}
\end{equation}
On the event in~\eqref{eq:stay-bound}, \eqref{eq:random-assumption-stopping-event} gives $\tau_A^\sigma\ge T$. The stopped state is therefore a certified safe-exit landing state. A tie at $T$ is covered by the exit-landing inequality. Condition~\eqref{eq:stay-boundary} makes the stopped certificate at least one there. Nonnegativity and~\eqref{eq:stay-dynkin-conditional} prove~\eqref{eq:stay-bound}. The argument uses the uniform predictable-control generator inequality and does not invoke a strong-Markov property for an incomplete local state.
\end{proof}

\begin{theorem}[Local reach-and-stay contract]
\label{thm:local-contract}
Assume the applicable clauses of Assumption~\ref{ass:wellposed}. Assume also
the deterministic and random-start consistency conditions in
\eqref{eq:assumption-prefix}, \eqref{eqassumptionrestriction}, \eqref{eqassumptionstoppedrestriction}, \eqref{eq:assumption-stopping-event}, and \eqref{eq:random-assumption-stopping-event}.
If the reach conditions
\eqref{eq:reach-init}, \eqref{eq:reach-deadline}, \eqref{eq:reach-boundary}, and \eqref{eq:reach-generator}
and the stay conditions
\eqref{eq:stay-init}, \eqref{eq:stay-boundary}, and \eqref{eq:stay-generator} hold, then every
$s\in[0,T-d]$ and $\mathbb P\in\mathfrak Q_{v,c}^{s}$ satisfy
\begin{equation}
 \mathbb P\!\left[
 B_{v,c}(s)\cap A_{v,c}(s,T)\cap
 \Bad_{v,\mathcal Q}^{b}([s+d,T])
 \right]
 \le \min\{1,\alpha+\beta\}.
 \label{eq:local-certified-bound}
\end{equation}
In particular, the direct exact-model choice $\mathfrak Q_{v,c}^{s}=\mathfrak P_{v,c}^{s}$ and $\mathcal Q_{v,c}^{b}=\tube_v^b$ establishes~\eqref{eq:gate-contract} with
\begin{equation}
 \eps_{v,c}=\min\{1,\alpha+\beta\}.
 \label{eq:local-epsilon}
\end{equation}
\end{theorem}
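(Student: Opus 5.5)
The plan is to split the failure event according to whether the reach phase succeeds, and to bound the two pieces with Lemma~\ref{lem:reach} and Lemma~\ref{lem:stay}. Fix $s\in[0,T-d]$ and $\mathbb P\in\mathfrak Q_{v,c}^{s}$. Write $E=B_{v,c}(s)\cap A_{v,c}(s,T)\cap\Bad_{v,\mathcal Q}^{b}([s+d,T])$ and $\sigma=\tau_K^s\wedge T$. This is a stopping time in $[s,T]$ by the hitting-time clause of Assumption~\ref{ass:wellposed}, using that $K_{v,c}^b$ is compact and paths are c\`adl\`ag. I would decompose
\[
E\subseteq E_1\cup E_2,\qquad
E_1=B_{v,c}(s)\cap A_{v,c}(s,s+d)\cap\bigl(\{\tau_K^s>s+d\}\cup\{\tau_O^s<\tau_K^s\}\bigr),
\]
\[
E_2=B_{v,c}(s)\cap\{\tau_K^s\le s+d\}\cap G_\sigma\cap A_{v,c}(\sigma,T)\cap\{\tau_D^\sigma\le T\}.
\]
The inclusion uses~\eqref{eq:assumption-prefix} to pass from $A_{v,c}(s,T)$ to $A_{v,c}(s,s+d)$. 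On $E\setminus E_1$ we have $\tau_K^s\le s+d$, and by right-continuity and closedness of the compact core, $Z_\sigma\in K_{v,c}^b$, so $G_\sigma$ holds. The random-start restriction~\eqref{eqassumptionstoppedrestriction} gives $A_{v,c}(\sigma,T)$. Finally, $\Bad_{v,\mathcal Q}^b([s+d,T])$ requires $Z_t\notin\mathcal Q_{v,c}^b\supseteq D_{v,c}^b$ for some $t\ge s+d\ge\sigma$, hence $\tau_D^\sigma\le T$. We do not even need $\tau_O^s\ge\tau_K^s$ for this step.

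For the bounds, take expectations of the conditional statements. By Lemma~\ref{lem:reach}, $\mathbb P[E_1]=\mathbb E[\mathbf 1_{B_{v,c}(s)}\mathbb P[\cdots\mid\mathcal G_s^{\mathbb P}]]\le\alpha$, since $B_{v,c}(s)\in\mathcal G_s^{\mathbb P}$. For $E_2$, the event $G_\sigma$ is $\mathcal G_\sigma^{\mathbb P}$-measurable, so Lemma~\ref{lem:stay} gives $\mathbb P[E_2]\le\mathbb E[\mathbf 1_{G_\sigma}\mathbb P[A_{v,c}(\sigma,T)\cap\{\tau_D^\sigma\le T\}\mid\mathcal G_\sigma^{\mathbb P}]]\le\beta$. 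Subadditivity yields $\mathbb P[E]\le\alpha+\beta$, and the trivial bound by $1$ gives $\min\{1,\alpha+\beta\}$.

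For the particular case, substitute $\mathfrak Q_{v,c}^{s}=\mathfrak P_{v,c}^{s}$, $Z=Z_v$, and $\mathcal Q_{v,c}^b=\tube_v^b$. Then $\Bad_{v,\mathcal Q}^b$ coincides with the canonical local event $\Bad_v^{b,\mathrm{loc}}$, and the displayed bound is exactly~\eqref{eq:gate-contract} with $\eps_{v,c}=\min\{1,\alpha+\beta\}$.

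I expect the main obstacle to be the measure-theoretic bookkeeping at the random start, in three places. First, one must justify that $Z_\sigma\in K_{v,c}^b$ on $\{\tau_K^s\le s+d\}$; this relies on the closedness of the core and right-continuity, which give attainment of the infimum. Second, one must check that $\sigma=\tau_K^s\wedge T$ is a legitimate input to Lemma~\ref{lem:stay}; on $\{\tau_K^s=\infty\}$ it equals $T$, and there the event is excluded anyway. Third, one must invoke~\eqref{eqassumptionstoppedrestriction} only almost surely. I would handle all three by working modulo $\mathbb P$-null sets, as in the proof of Theorem~\ref{thm:composition}.
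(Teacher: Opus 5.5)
Your proof is correct and takes essentially the same route as the paper. You split the failure event into a reach-failure piece, bounded by Lemma~\ref{lem:reach} and the tower property, and a post-hit piece, bounded by Lemma~\ref{lem:stay} at $\sigma=\tau_K^s\wedge T$ via the random-start restriction, then apply a union bound. The only difference is minor: you condition the stay piece on $G_\sigma\in\mathcal G_\sigma^{\mathbb P}$ directly rather than on the paper's successful-hit event $\mathcal H_s=\{\tau_K^s\le s+d,\ \tau_K^s\le\tau_O^s\}$, which spares you the stopping-time comparison $\mathcal H_s\in\mathcal G_{\sigma_T}^{\mathbb P}$ and the unneeded clause $\tau_K^s\le\tau_O^s$.
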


\begin{proof}
Fix $s\in[0,T-d]$ and write $B_s=B_{v,c}(s)$. Define the reach-failure event
\begin{equation}
 \mathcal E_s^{\mathrm{reach}}=
 A_{v,c}(s,s+d)\cap
 \left(
   \{\tau_K^s>s+d\}
   \cup\{\tau_O^s<\tau_K^s\}
 \right).
 \label{eq:local-reach-failure}
\end{equation}
and the successful-hit event
\begin{equation}
 \mathcal H_s=
 \{\tau_K^s\le s+d,\ \tau_K^s\le\tau_O^s\}.
 \label{eq:local-hit-event}
\end{equation}
If $B_s$, $A_{v,c}(s,T)$, and the $\mathcal Q_{v,c}^{b}$ output failure on $[s+d,T]$ all occur, then either $\mathcal E_s^{\mathrm{reach}}$ occurs or the process hits the core on $\mathcal H_s$ and later exits $D_{v,c}^b$ by time $T$. Interval consistency gives $A_{v,c}(\tau_K^s,T)$ on the latter event. Hence
\begin{align}
 &B_s\cap A_{v,c}(s,T)\cap\Bad_{v,\mathcal Q}^b([s+d,T])\nonumber\\
 &\quad\subseteq
 (B_s\cap\mathcal E_s^{\mathrm{reach}})
 \cup
 \bigl(
 \mathcal H_s\cap A_{v,c}(\tau_K^s,T)
 \cap\{\tau_D^{\tau_K^s}\le T\}
 \bigr).
 \label{eq:local-failure-decomposition}
\end{align}
Let $\one_{B_s}$ denote the indicator of $B_s$. Then \cref{lem:reach} and the
tower property give
\begin{equation}
 \mathbb P[B_s\cap\mathcal E_s^{\mathrm{reach}}]
 =\mathbb E\!\left[
 \one_{B_s}\mathbb P[\mathcal E_s^{\mathrm{reach}}\mid\mathcal G_s^{\mathbb P}]
 \right]
 \le\alpha.
 \label{eq:local-reach-probability}
\end{equation}
Set $\sigma_T=\tau_K^s\wedge T$. This is an $[s,T]$-valued stopping time. By the standard comparison properties for stopping times,
$\mathcal H_s\in\mathcal G_{\sigma_T}^{\mathbb P}$. On $\mathcal H_s$, one has
$\sigma_T=\tau_K^s\le T$ and $Z_{\sigma_T}\in K_{v,c}^b$. \cref{lem:stay} applied at $\sigma_T$ and the tower property therefore give
\begin{equation}
 \mathbb P\!\left[
 \mathcal H_s\cap A_{v,c}(\tau_K^s,T)
 \cap\{\tau_D^{\tau_K^s}\le T\}
 \right]
 \le\beta.
 \label{eq:local-stay-probability}
\end{equation}
Applying a union bound to~\eqref{eq:local-failure-decomposition} proves~\eqref{eq:local-certified-bound}. The final statement follows by the direct exact-model choice.
\end{proof}

\subsection{Machine-checkable local obligations}
The procedure that searches for certificate functions is outside the trusted
core. It may use analytic derivations, sum-of-squares optimization, Bernstein
representations, neural synthesis, or interval branch-and-bound. Acceptance
depends only on replay in a fixed exact certificate language.

The proof object represents each candidate by an expression directed acyclic
graph with rational coefficients. It stores rational boxes, a checked cover of
every verification region, and exact identifiers for all model functions. The
checker recomputes interval enclosures with outward rounding, checks the drift,
diffusion, switching, jump, boundary, and nonnegativity inequalities, and
verifies coverage of every reachable continuous, jump, or mode-transition exit
state. For unbounded jump marks it also checks a truncation and rigorous tail
or moment bound. A transcendental function is used only through a separately
verified rational enclosure.

Shared resources must either be explicit coordinates of the local state or be
bounded by a separately certified invariant envelope. In the latter case, all
local inequalities hold uniformly over that envelope and its failure allowance
is counted once globally. Declared regulator orthogonality is not evidence of stochastic independence.
Once replay establishes the local bound, it may be used directly for its
certified model. If that model is only a surrogate, a separate discrepancy
argument is still needed before its charge can enter the target-model sum.

\section{Transfer from fast surrogates to exact target semantics}
\label{sec:transfer}
A certificate proved for a computational surrogate does not by itself imply
\cref{def:refinement}. A separate discrepancy certificate must transfer the
relevant path event to the exact target while preserving the assignment,
initialization, inputs, and context on which the local obligation depends.

\subsection{Margin erosion and synchronous coupling}
Fix a compact interval $I=[a,b]\subseteq[0,T]$. Let $X$ be an exact process
and $\widehat X$ a surrogate process. Their continuous observation maps $o_X$
and $o_{\widehat X}$ take values in the same normed physical observation
space and use the same units. Define
\begin{equation}
 d_I(x,\widehat x)=
 \sup_{t\in I}
 \left\|o_X(x_t)-o_{\widehat X}(\widehat x_t)\right\|.
 \label{eq:path-metric}
\end{equation}
For two paths $y$ and $\widetilde y$ in the common observation-path space,
define
\begin{equation}
 d_I^{\mathrm{obs}}(y,\widetilde y)
 =\sup_{t\in I}\|y(t)-\widetilde y(t)\|.
 \label{eq:observation-path-metric}
\end{equation}
For c\`adl\`ag paths, either supremum equals the supremum over a fixed
countable dense subset of $I$ together with $b$, and is therefore measurable.
For scalar port obligations, the norm is absolute value. A joint vector
transfer is permitted only when the certificate declares a product observation
map and a norm on that product space. Otherwise, the checker transfers the
scalar output obligations separately and combines their failure events through
\cref{thm:composition}.

For a closed tube $Q$ in the common physical observation space and $\rho>0$,
let
\begin{equation}
 Q^{-\rho}=\{y\in Q\mid\overline B(y,\rho)\subseteq Q\},
 \qquad
 \overline B(y,\rho)=\{z\mid\|z-y\|\le\rho\}.
 \label{eq:erosion}
\end{equation}
Thus $Q^{-\rho}$ is the part of $Q$ that remains in $Q$ after every
observation perturbation of norm at most $\rho$. In particular,
$[H,\infty)^{-\rho}=[H+\rho,\infty)$ and
$(-\infty,L]^{-\rho}=(-\infty,L-\rho]$.

For a Borel set $C$ of observation paths, define its erosion with respect to
the synchronous observation discrepancy by
\begin{equation}
 C^{-\rho}
 =\left\{y\in C\;\middle|\;
 \overline B_{d_I^{\mathrm{obs}}}(y,\rho)\subseteq C\right\},
 \qquad
 \overline B_{d_I^{\mathrm{obs}}}(y,\rho)
 =\left\{\widetilde y\mid
 d_I^{\mathrm{obs}}(y,\widetilde y)\le\rho\right\}.
 \label{eq:path-set-erosion}
\end{equation}
The checker admits this construction only when both $C$ and $C^{-\rho}$
have checker-verifiable Borel representations in the canonical path sigma
algebra. Erosion uses the uniform observation discrepancy, not the Skorokhod
time-change metric. The transfer proof requires measurability and the defining
containment of the eroded set. It does not require $C$ to be closed in the
full-horizon $J_1$ topology. A closed-tube condition on a strict subinterval
need not be closed in that topology.

Let $\Omega_{\mathrm{ex}}$ and $\Omega_{\mathrm{surr}}$ be the exact and
surrogate local path spaces, and let $\mathsf E$ be a common standard Borel
space. Measurable maps
$E_{\mathrm{ex}}\colon\Omega_{\mathrm{ex}}\to\mathsf E$ and
$E_{\mathrm{surr}}\colon\Omega_{\mathrm{surr}}\to\mathsf E$ record exactly
the data required to agree in the coupling: the Boolean assignment, parameter,
invocation time, initialization datum, and declared local input and context
paths. A context component whose evolution depends on the exact or surrogate
state cannot be frozen as common exogenous data. It must be included in the
coupled state and discrepancy proof. Transfer is used only when the localized
start-and-assumption event is the pullback of one measurable predicate
$A_0\in\mathcal B(\mathsf E)$ under both maps.

For a transfer invoked at time $s$, let $\Xi_{v,c}$ be the common
initialization-data space and let $\pi_\xi\colon\mathsf E\to\Xi_{v,c}$ be its
measurable coordinate map. Define
\[
 \xi_s^{\mathrm{ex}}=\pi_\xi\circ E_{\mathrm{ex}},
 \qquad
 \xi_s^{\mathrm{surr}}=\pi_\xi\circ E_{\mathrm{surr}}.
\]
Checker-validated maps initialize the two continuations from this common datum
in their respective state spaces and verify that both start predicates are
pullbacks of the same common-data predicate.

Let $\Gamma(P,\widehat P)$ be the set of probability couplings of $P$ and
$\widehat P$. Define the synchronous couplings by
\begin{equation}
 \Gamma_{\mathrm{sync}}(P,\widehat P)
 =\left\{\gamma\in\Gamma(P,\widehat P)\;\middle|\;
 E_{\mathrm{ex}}\circ\operatorname{pr}_{\mathrm{ex}}
 =E_{\mathrm{surr}}\circ\operatorname{pr}_{\mathrm{surr}}
 \quad\gamma\text{-almost surely}\right\}.
 \label{eq:synchronous-couplings}
\end{equation}
In particular, every $\gamma\in\Gamma_{\mathrm{sync}}(P,\widehat P)$ satisfies
\[
 \xi_s^{\mathrm{ex}}\circ\operatorname{pr}_{\mathrm{ex}}
 =\xi_s^{\mathrm{surr}}\circ\operatorname{pr}_{\mathrm{surr}}
 \qquad\gamma\text{-almost surely}.
\]
Thus, a synchronous coupling places target and surrogate paths on one
probability space while forcing their declared assignment, parameters,
initialization data, input paths, and context paths to agree.
The surrogate may be restarted at $s$ from this shared datum. It need not be
the restriction of one global surrogate trajectory.

Let $\kappa$ index one fixed local obligation, with declared exact and
surrogate law families $\mathfrak P_\kappa^{\mathrm{ex}}$ and
$\widehat{\mathfrak P}_\kappa^{\mathrm{surr}}$. For a reusable gate contract
$k=(v,c)$ invoked at time $s$, take $\kappa=(k,s)$ and use
$\mathfrak P_\kappa^{\mathrm{ex}}=\mathfrak P_{v,c}^{s}$ and
$\widehat{\mathfrak P}_\kappa^{\mathrm{surr}}
=\widehat{\mathfrak P}_{v,c}^{s,\mathrm{surr}}$.

\begin{theorem}[Synchronous transfer of a local path obligation]
\label{thm:transfer}
Fix an interval $I$, a Borel exact success set $C$ in the common
observation-path space, and an erosion radius $\rho>0$ for which
$C^{-\rho}$ is Borel. Let
$P\in\mathfrak P_\kappa^{\mathrm{ex}}$ be an applicable exact local law. Suppose that
there exist a surrogate law
$\widehat P\in\widehat{\mathfrak P}_\kappa^{\mathrm{surr}}$ and a
synchronous coupling $\gamma\in\Gamma_{\mathrm{sync}}(P,\widehat P)$.
Define
\begin{equation}
 \mathcal A_{\mathrm{ex}}=E_{\mathrm{ex}}^{-1}(A_0),
 \qquad
 \mathcal A_{\mathrm{surr}}=E_{\mathrm{surr}}^{-1}(A_0).
 \label{eq:common-transfer-event}
\end{equation}
Their pullbacks to the product path space agree up to a $\gamma$-null set. Write $\mathcal A$ for this common event. If
\begin{align}
 \gamma\!\left[\mathcal A\cap
   \{o_{\widehat X}(\widehat X_\cdot)\notin C^{-\rho}\}\right]
 &\le\widehat\eps,
 \label{eq:surrogate-eroded-proof}\\
 \gamma[d_I(X,\widehat X)>\rho]&\le\zeta,
 \label{eq:coupling-bound}
\end{align}
then
\begin{equation}
 P\!\left[\mathcal A_{\mathrm{ex}}\cap
   \{o_X(X_\cdot)\notin C\}\right]
 \le\widehat\eps+\zeta.
 \label{eq:exact-transfer}
\end{equation}
Consequently, a reusable transfer certificate must establish
\begin{equation}
 \forall P\in\mathfrak P_\kappa^{\mathrm{ex}}\;
 \exists\widehat P\in\widehat{\mathfrak P}_\kappa^{\mathrm{surr}}\;
 \exists\gamma\in\Gamma_{\mathrm{sync}}(P,\widehat P)
 \label{eq:transfer-family-quantifiers}
\end{equation}
such that the two bounds hold uniformly with the declared $\widehat\eps$ and
$\zeta$.
\end{theorem}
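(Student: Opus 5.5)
The plan is to lift the exact failure event to the product path space through the coupling, apply a pathwise inclusion there, and finish with a union bound. Since $\gamma\in\Gamma(P,\widehat P)$ has exact marginal $P$, the exact event becomes
\[
 P\!\left[\mathcal A_{\mathrm{ex}}\cap\{o_X(X_\cdot)\notin C\}\right]
 =\gamma\!\left[\operatorname{pr}_{\mathrm{ex}}^{-1}(\mathcal A_{\mathrm{ex}})\cap\{o_X(X_\cdot)\notin C\}\right].
\]
Synchronicity, the defining property in \eqref{eq:synchronous-couplings}, gives $E_{\mathrm{ex}}\circ\operatorname{pr}_{\mathrm{ex}}=E_{\mathrm{surr}}\circ\operatorname{pr}_{\mathrm{surr}}$ $\gamma$-almost surely. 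Pulling back the common predicate $A_0$ along both sides therefore shows that the two pullbacks of \eqref{eq:common-transfer-event} coincide outside a $\gamma$-null set. This justifies the assertion in the statement, and from here on we may replace the exact-side event by the common event $\mathcal A$.

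The key step is a pathwise inclusion on the product space:
\[
 \mathcal A\cap\{o_X(X_\cdot)\notin C\}
 \subseteq
 \bigl(\mathcal A\cap\{o_{\widehat X}(\widehat X_\cdot)\notin C^{-\rho}\}\bigr)
 \cup\{d_I(X,\widehat X)>\rho\}.
\]
To prove it, take a pair of paths in the left-hand event that lies in neither set on the right. Then $y:=o_{\widehat X}(\widehat X_\cdot)\in C^{-\rho}$ and $d_I^{\mathrm{obs}}(y,o_X(X_\cdot))=d_I(X,\widehat X)\le\rho$, which identifies \eqref{eq:path-metric} with \eqref{eq:observation-path-metric} for the observed paths. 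Consequently $o_X(X_\cdot)\in\overline B_{d_I^{\mathrm{obs}}}(y,\rho)\subseteq C$ by the definition \eqref{eq:path-set-erosion}, which contradicts $o_X(X_\cdot)\notin C$. Subadditivity of $\gamma$ combined with \eqref{eq:surrogate-eroded-proof} and \eqref{eq:coupling-bound} then gives \eqref{eq:exact-transfer}. The final quantified clause \eqref{eq:transfer-family-quantifiers} is not a separate claim. It follows by applying the bound for each $P$ with the witness $\widehat P,\gamma$ supplied by the certificate, uniformly in the declared constants.

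I expect measurability, rather than the inequality, to be the main obstacle. All the sets involved must be events of the product sigma algebra for the argument to go through. $C$ and $C^{-\rho}$ are Borel by hypothesis, and the observation maps are continuous, so composing them with the c\`adl\`ag paths is measurable. The event $\{d_I>\rho\}$ is measurable because the supremum reduces to a countable dense set of times together with the endpoint $b$, as noted after \eqref{eq:observation-path-metric}. I would state these points briefly. I would also note that $C$ is only required to be Borel, with no closedness in the $J_1$ topology, because the argument uses just the defining containment of the erosion and never a limit.
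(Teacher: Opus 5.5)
Your proposal is correct and follows the paper's proof. Both proofs show that, on $\mathcal A$, exact failure is contained in the union of the eroded-surrogate failure event and $\{d_I(X,\widehat X)>\rho\}$, using the defining containment of $C^{-\rho}$, and then apply the union bound together with the fact that $P$ is the exact marginal of $\gamma$. Your derivation of the $\gamma$-null agreement of the pullbacks from synchronicity, and your measurability remarks, supply details the paper leaves implicit.
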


\begin{proof}
On $\mathcal A$, if the surrogate observation path lies in $C^{-\rho}$ and
$d_I(X,\widehat X)\le\rho$, then the exact observation path lies in $C$ by
\eqref{eq:path-set-erosion}. Exact failure is therefore contained in the union
of the two events bounded in the theorem. The union bound and the fact that
$P$ is the first marginal of $\gamma$ prove~\eqref{eq:exact-transfer}.
\end{proof}

\begin{corollary}[Uniform measurable-path-set transfer]
\label{cor:closed-path-transfer}
Suppose that the family-level quantifiers in
\eqref{eq:transfer-family-quantifiers} hold for every invocation time $s$, with
one Borel success set with a Borel erosion, erosion radius, surrogate charge,
and discrepancy charge declared for each obligation. Then every applicable exact law satisfies
\eqref{eq:exact-transfer} with the same obligation-specific charges.
\end{corollary}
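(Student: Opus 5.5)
This corollary follows from \cref{thm:transfer} applied pointwise, once for each obligation. Fix an obligation $\kappa=(k,s)$ with its declared success set $C$, erosion radius $\rho$, surrogate charge $\widehat\eps$ and discrepancy charge $\zeta$. Then fix an arbitrary applicable exact law $P\in\mathfrak P_\kappa^{\mathrm{ex}}$.

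The first step is to instantiate the family-level quantifiers in \eqref{eq:transfer-family-quantifiers}. They supply a surrogate law $\widehat P\in\widehat{\mathfrak P}_\kappa^{\mathrm{surr}}$ and a synchronous coupling $\gamma\in\Gamma_{\mathrm{sync}}(P,\widehat P)$. For this triple $(P,\widehat P,\gamma)$, the bounds \eqref{eq:surrogate-eroded-proof} and \eqref{eq:coupling-bound} hold with the declared $\widehat\eps$ and $\zeta$. Uniformity matters here: the charges are fixed before $P$ is chosen, so they cannot depend on $P$. We also use the standing hypotheses that $C$ and $C^{-\rho}$ have checker-verifiable Borel representations and that $\rho>0$. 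Together these give exactly the hypotheses of \cref{thm:transfer}.

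Before invoking the theorem, one point should be checked explicitly: the common event $\mathcal A$ is well defined. Synchrony gives $E_{\mathrm{ex}}\circ\operatorname{pr}_{\mathrm{ex}}=E_{\mathrm{surr}}\circ\operatorname{pr}_{\mathrm{surr}}$ $\gamma$-almost surely. Hence the pullbacks of $E_{\mathrm{ex}}^{-1}(A_0)$ and $E_{\mathrm{surr}}^{-1}(A_0)$ to the product path space differ only on a $\gamma$-null set. This is the only measure-theoretic step, and I expect it to be the main (mild) obstacle. It uses that $A_0$ is Borel in the standard Borel space $\mathsf E$ and that both maps are measurable. Applying \cref{thm:transfer} then yields \eqref{eq:exact-transfer} for this $P$. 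Since $P$ and the invocation time $s$ were arbitrary, the bound holds for every applicable exact law, with the obligation-specific charges unchanged.

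No interaction between different obligations arises, because each is handled separately. Combining several transferred obligations into a joint guarantee is left to \cref{thm:composition}, which counts them through the union bound.
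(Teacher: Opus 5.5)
Your proposal is correct and follows the paper's proof, which simply applies \cref{thm:transfer} to the surrogate law and synchronous coupling supplied by \eqref{eq:transfer-family-quantifiers} for the chosen exact law and invocation time, with the uniformly declared charges. Your separate check that the two pullbacks of $A_0$ agree $\gamma$-almost surely is correct, but it is not needed, since \cref{thm:transfer} already states this as part of its setup.
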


\begin{proof}
Apply \cref{thm:transfer} to the surrogate law and synchronous coupling
supplied for the chosen exact law and invocation time.
\end{proof}

\subsection{Synchronous transport certificates}
Comparing only unconditional exact and surrogate output distributions is
insufficient: it may pair an exact path generated under one input or context
with a surrogate path generated under another. Define
\begin{equation}
 \mathsf W_{1,\mathrm{sync}}^I(P,\widehat P)
 =\inf_{\gamma\in\Gamma_{\mathrm{sync}}(P,\widehat P)}
 \mathbb E_\gamma[d_I(X,\widehat X)],
 \label{eq:path-wasserstein}
\end{equation}
with value $+\infty$ if no synchronous coupling exists. The preferred proof
object exhibits a concrete synchronous coupling $\gamma$ and proves
\begin{equation}
 \mathbb E_\gamma[d_I(X,\widehat X)]\le\chi.
 \label{eq:coupling-first-moment}
\end{equation}
Markov's inequality then gives
\begin{equation}
 \gamma[d_I(X,\widehat X)>\rho]
 \le\min\{1,\chi/\rho\},
 \qquad
 \zeta=\min\{1,\chi/\rho\}.
 \label{eq:wasserstein-penalty}
\end{equation}
If only an infimum transport bound is known and no minimizing coupling has been
certified, then for every $\delta>0$ the checker may instead use an exhibited
coupling with expected discrepancy at most $\chi+\delta$ and set
$\zeta=\min\{1,(\chi+\delta)/\rho\}$. Thus
\begin{equation}
 \eps=\min\{1,\widehat\eps+\zeta\}
 \label{eq:transfer-epsilon}
\end{equation}
is a valid transferred charge only when the corresponding synchronous
coupling has been certified. Agreement of single-time distributions or autocorrelation functions is
diagnostic evidence, not a pathwise synchronous coupling proof. The order of
the supremum and expectation also matters: a bound on
$\sup_{t\in I}\mathbb E\|o_X(X_t)-o_{\widehat X}(\widehat X_t)\|$
does not by itself bound $\mathbb E[d_I(X,\widehat X)]$. In particular,
strong error estimates stated in the former order, such as the reaction-network
approximation bound of~\cite{Ganguly2015}, require an additional pathwise
argument before they can supply \eqref{eq:coupling-first-moment}.

\subsection{Transferred composition}
For each fixed-interval obligation $k$, let $C_k$ be its checked Borel success
set in the common observation-path space over $I_k$. A reusable gate
obligation $k=(v,c)$ instead has
\begin{equation}
 I_{k,s}=[s+d_{v,c},T],
 \qquad
 C_{k,s}=\{y\mid y(t)\in\mathsf T_v^b
 \text{ for every }t\in I_{k,s}\}.
 \label{eq:closed-obligation-set}
\end{equation}
The surrogate proof is checked on $C_k^{-\rho_k}$ or
$C_{k,s}^{-\rho_k}$. The checker verifies the observation maps, physical
units, threshold correspondence, Borel representations, and positivity of
$\rho_k$. A context predicate without a checked Borel erosion requires separate
transfer evidence. Every obligation proved only on
a surrogate receives its own synchronous discrepancy penalty. Set
\begin{equation}
 \eps_k^{\mathrm{exact}}
 =\min\{1,\eps_k^{\mathrm{surr}}+\zeta_k\}.
 \label{eq:transferred-local-risk}
\end{equation}

For every invoked surrogate-backed obligation $(v,c,s)$ and every exact law
$P\in\mathfrak P_{v,c}^{s}$, the transfer evidence must supply a surrogate law
$\widehat P\in\widehat{\mathfrak P}_{v,c}^{s,\mathrm{surr}}$ and a coupling
$\gamma\in\Gamma_{\mathrm{sync}}(P,\widehat P)$ for which the checked
discrepancy bound holds with the same declared radius $\rho_k$ and charge
$\zeta_k$. This requirement is uniform over
$s\in[0,T-d_{v,c}]$.

For a surrogate-backed gate, let $Z_v^{\mathrm{ex}}$ and $\widehat Z_v$ be the
exact and surrogate gate-state coordinates. Let $\widehat o_v$ map into the
same physical output space as $o_v$. Apply \cref{thm:local-contract} to the
surrogate family with
\begin{equation}
 \mathcal Q_{v,c}^{b}
 =\widehat o_v^{-1}\!\left((\mathsf T_v^b)^{-\rho_k}\right).
 \label{eq:surrogate-state-tube}
\end{equation}
Thus erosion is performed in the common observation space before it is pulled
back to the surrogate state space. For each $s$, the checker also supplies
$A_{0,k,s}\in\mathcal B(\mathsf E)$ such that
\begin{equation}
 \begin{aligned}
 E_{\mathrm{ex}}^{-1}(A_{0,k,s})
 &=B_{v,c}^{\mathrm{ex}}(s)\cap A_{v,c}^{\mathrm{ex}}(s,T),\\
 E_{\mathrm{surr}}^{-1}(A_{0,k,s})
 &=B_{v,c}^{\mathrm{surr}}(s)\cap A_{v,c}^{\mathrm{surr}}(s,T).
 \end{aligned}
 \label{eq:transfer-common-assumption-event}
\end{equation}
The local theorem gives the uniform eroded-surrogate bound, and
\cref{thm:transfer} gives the exact gate contract
\eqref{eq:gate-contract} with
$\eps_{v,c}=\eps_k^{\mathrm{exact}}$. If the other hypotheses of
\cref{thm:composition} hold, these exact-model charges enter
\eqref{eq:witness-risk}. The resulting bounds and the checked Boolean miter
yield exact-model refinement through \cref{cor:witness-refinement}.
The remaining task is to ensure that a finite proof object binds every premise
to the same request, design, and target model. The next section specifies that
checker boundary.

\section{Proof object, checker, and conditional soundness}
\label{sec:checker}
This section closes the producer--checker boundary. It states which finite
evidence the ideal checker reconstructs, distinguishes that checker from the
executable evaluation tracks, and gives the assumptions under which acceptance
entails model-level refinement.

\subsection{Certificate and checker}
The symbol $\checker$ denotes the ideal mathematical checker specified in this
section. Its theorem-relevant outcome is $\mathsf{accept}$. Within its finite
rational fragment, the restricted exact checker may return
$\mathsf{accept}$, $\mathsf{reject}$, $\mathsf{inconclusive}$,
$\mathsf{malformed\mbox{-}artifact}$, or
$\mathsf{unsupported\mbox{-}proof\mbox{-}backend}$. The binary64 replay
program, denoted $\checker_{64}$, reports only
$\mathsf{replay\mbox{-}pass}$ or $\mathsf{replay\mbox{-}fail}$. Neither
outcome entails the theorem below. The archived Cello track reports compiler
integration and sequence provenance. Without a checked sequence-to-model
derivation, that evidence cannot be submitted for theorem-level temporal
acceptance.

The proof object is
\begin{equation}
 \cert=(\cert_{\mathrm{seq}},\cert_{\mathrm{bool}},
 \cert_{\mathrm{map}},\cert_{\mathrm{local}},
 \cert_{\mathrm{wit}},\cert_{\mathrm{tr}},\cert_{\mathrm{ctx}},
 \cert_{\mathrm{sem}}).
 \label{eq:proof-object}
\end{equation}
Its components have the following roles:
\begin{itemize}
 \item[$\cert_{\mathrm{seq}}$:] source-byte provenance, canonical sequence and
 SBOL data, and sequence-binding evidence,
 \item[$\cert_{\mathrm{bool}}$:] the derived admissible-domain encoding and a
 DRAT refutation of the checker-reconstructed Boolean miter,
 \item[$\cert_{\mathrm{map}}$:] technology-mapping, typing, parameter, and
 resource witnesses,
 \item[$\cert_{\mathrm{local}}$:] local exact or surrogate certificates,
 \item[$\cert_{\mathrm{wit}}$:] sufficient-cube and joint-witness decision
 diagrams,
 \item[$\cert_{\mathrm{tr}}$:] synchronous discrepancy evidence,
 \item[$\cert_{\mathrm{ctx}}$:] envelope and start-set evidence,
 \item[$\cert_{\mathrm{sem}}$:] the semantic-assumption manifest, local
 restriction maps, event pullbacks, and law-inclusion evidence.
\end{itemize}

The compilation instance $\mathcal I$ and technology library are independent
checker inputs. The proof object cannot replace them. The restricted
implementation does not implement this full language. It uses explicit domain
enumeration, proofs containing only RUP clause additions, exact finite tables,
full cubes, and exact rational arithmetic.

Whenever a resource bound is present, $\cert_{\mathrm{map}}$ proves
$\mathsf{usage}(D)\preceq B_{\max}$, where $\preceq$ denotes componentwise
comparison. The context component establishes both
\eqref{eq:envelope-contract} and every start-set inclusion in
\eqref{eq:start-set-coverage}. The semantic component identifies each complete
local restriction map and supplies an admitted model-interface theorem for the
event pullbacks in \eqref{eq:global-local-event-pullbacks} and the law inclusion
in \eqref{eq:global-local-law-inclusion}. The checker validates these hypotheses
against the reconstructed global model, parameter projection, and local law
family.

The checker reconstructs the source-bound artifacts rather than trusting
compiler-supplied encodings. It validates exact sequence and model binding,
gate truth tables and mapping constraints, every local interval inequality,
every sufficient cube and requested output bit, every transfer margin, and the
exact rational risk and delay maxima. The proof format declares one fixed
variable order and a canonical reduced node representation for binary and
algebraic decision diagrams. The checker reconstructs every diagram's
semantics and compares it with the source-bound recurrences for cube selection,
witness membership, risk, and delay. A producer-supplied diagram hash is
provenance metadata, not semantic evidence.

The ideal checker first verifies
$m\ge1$, $\thetaSet\ne\varnothing$, $\context\ne\varnothing$,
$\Uadm\ne\varnothing$, and that every resolved representation
$\mathcal I_{0,D}(u)$ contains at least one syntactically valid candidate
initial law. This syntactic check does not prove weak-solution existence. The
universal existence requirement in Assumption~\ref{ass:wellposed} is either
discharged for the entire declared family by an admitted backend theorem or
listed explicitly as an external mathematical assumption in
$\cert_{\mathrm{sem}}$. The checker also verifies that any producer-derived
encoding of the admissible domain is equivalent to the verifier-supplied
predicate.

Acceptance requires the maximum joint delay over $\Uadm$ to be at most $\tau$
and the maximum exact-model risk to be at most $1-p$. The checker performs no
stochastic simulation. When a check fails, an interval backend may report a
box on which it could not verify an inequality, and the restricted checker may
report an admissible assignment maximizing the failed risk or delay expression.
These are diagnostic witnesses to failure of the supplied proof. They need not
be counterexamples to the underlying semantic property.

A transfer certificate is not an empirical discrepancy estimate. It
identifies the exact model, surrogate model, localized common-data map, common
initialization maps, observation-space path discrepancy, erosion radius,
and synchronous coupling rule. It also contains a checker-replayable proof
of the claimed tail or first-moment bound. Acceptance requires that the
checker derive the bound from admitted exact arithmetic or verified
outward-rounded inequalities. Simulation output alone is not accepted.

\subsection{Soundness theorem}

\begin{assumption}[Acyclic combinational operation]
\label{ass:acyclic}
The logical dependency graph used by the composition theorem is acyclic, and
the Boolean inputs remain fixed throughout the run. This interface condition
does not permit biochemical feedback or memory to be omitted from the
stochastic semantics. Every feedback interaction, hidden memory variable,
growth-mediated effect, and shared resource that can affect a certified gate
is included in the exact target model. Each such component also appears in the
relevant local restriction and is covered by the local uniform contract or a
certified context envelope.
\end{assumption}

\begin{assumption}[Trusted checker primitives]
\label{ass:checker}
The trusted components implement their stated specifications. They comprise
the exact-byte comparator, canonical source, SBOL and model parsers,
model-derivation and admitted exact-model proof kernels, exact rational
arithmetic, verified outward rounding, admitted transcendental-enclosure
kernels, fixed-order decision-diagram operations, canonical CNF generation,
the DRAT checker, and the proof-format parser. Here \emph{outward rounding} means that each
interval endpoint is rounded away from the enclosed exact value, so the
reported interval remains a sound enclosure.
\end{assumption}

\begin{theorem}[Conditional model-level checker soundness]
\label{thm:end-to-end}
Under Assumptions~\ref{ass:wellposed}, \ref{ass:acyclic}, and~\ref{ass:checker}, if
\begin{equation}
 \checker(\mathcal I,D,G_D,\mathsf{SBOL}_D,\mathcal M_D,\cert)
 =\mathsf{accept},
 \label{eq:checker-accepts}
\end{equation}
then the following statements hold.
\begin{enumerate}
 \item $(w_D,\mathsf{SBOL}_D,G_D,\mathcal M_D)$ satisfies the closed-world binding relation~\eqref{eq:sequence-binding},
 \item the Boolean output vector reconstructed from $G_D$ is equivalent to $\mathbf F$ for every $u\in\Uadm$,
 \item for every $u\in\Uadm$, $\theta\in\thetaSet$, and
 $\mathbb P\in\mathfrak P_D(u,\theta)$,
 \begin{equation}
  \mathbb P\!\left[
   \Good_{\mathbf y}^{\llbracket\mathbf F\rrbracket(u)}([\tau,T])
  \right]\ge p.
  \label{eq:end-to-end-conclusion}
 \end{equation}
\end{enumerate}
Consequently, $D\preceq_\eta\mathbf F$ in the model-relative sense of \cref{def:refinement}.
\end{theorem}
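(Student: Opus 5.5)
The proof reduces acceptance to the hypotheses of \cref{thm:composition} and \cref{cor:witness-refinement}, using the checker specification of Section~\ref{sec:checker} together with Assumption~\ref{ass:checker}. Everything then follows by assembling earlier results. Assumption~\ref{ass:checker} guarantees that every trusted primitive implements its specification. Hence each replayed check that returned success establishes the mathematical property it encodes, and nothing more is being claimed for it.

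\textbf{Step 1 (binding and Boolean equivalence).} Clause 1 follows from the checker specification. Acceptance requires reconstruction of $w_D$ from validated library records, comparison with the SBOL sequence, replayed graph extraction, and regeneration of $\mathcal M_D$. These are exactly the seven items of Definition~\ref{def:sequence-binding}, and each is performed by a trusted component.

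For clause 2, the checker builds the miter \eqref{eq:boolean-miter} itself, converts it to CNF by the trusted construction, and validates a DRAT refutation. Soundness of DRAT/RUP then yields unsatisfiability of $\Phi_D^{\mathrm{mit}}$. Since $A_{\mathrm{adm}}$ is checked equivalent to the verifier-supplied $\Uadm$, we get $A_{D,j}(u)=\llbracket F_j\rrbracket(u)$ for all $u\in\Uadm$. Because the mapped truth tables are validated, $A_{D,j}(u)=b_{r_j}(u)$ via \eqref{eq:node-boolean-value}. This gives \eqref{eq:root-boolean-equivalence}.

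\textbf{Step 2 (local charges).} For every selected obligation $(v,c_v(u),s_v(u))$, I would show that the accepted $\cert_{\mathrm{local}}$ establishes \eqref{eq:gate-contract} in one of two ways.
\begin{itemize}
\item \emph{Direct exact-model certificate.} Replay of \eqref{eq:reach-init}--\eqref{eq:reach-generator} and \eqref{eq:stay-init}--\eqref{eq:stay-generator}, with outward-rounded interval arithmetic and tail bounds, yields the certificate inequalities. The regularity clauses hold by Assumption~\ref{ass:wellposed}, and the interval-consistency conditions are checked. \cref{thm:local-contract} then gives $\eps_{v,c}=\min\{1,\alpha+\beta\}$.
\item \emph{Surrogate-backed certificate.} Apply \cref{thm:local-contract} to the eroded tube \eqref{eq:surrogate-state-tube}. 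Then use \cref{cor:closed-path-transfer} with the common-data event \eqref{eq:transfer-common-assumption-event} and the certified coupling bound (e.g. \eqref{eq:wasserstein-penalty}). This yields the charge \eqref{eq:transferred-local-risk}.
\end{itemize}

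The remaining hypotheses come from the other proof components:
\begin{itemize}
\item The source contracts \eqref{eq:sensor-contract}, the envelope bound \eqref{eq:envelope-contract}, the supply condition \eqref{eqenvsupplieslocal}, and start-set coverage \eqref{eq:start-set-coverage} come from $\cert_{\mathrm{ctx}}$ and the technology package.
\item The pullbacks \eqref{eq:global-local-event-pullbacks} and the law inclusion \eqref{eq:global-local-law-inclusion} come from the model-interface theorems in $\cert_{\mathrm{sem}}$.
\end{itemize}
Assumption~\ref{ass:acyclic} ensures that the local restrictions are complete. It also ensures that $G_D$ is acyclic with fixed inputs, as \cref{thm:composition} requires.

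\textbf{Step 3 (composition).} The checker validates each cube against truth tables, giving \eqref{eq:cube-selection-valid} and \eqref{eq:selected-contract-output}. It reconstructs the witness sets, and it checks the exact rational maxima of $\Delay$ and $\Risk$ against $\tau$ and $1-p$. Since $\tau<T$, this gives \eqref{eq:global-witness-bounds} and $\Delay(u)\le T$. Root binding \eqref{eq:root-output-process-binding}, with matching thresholds, is item 7 of Definition~\ref{def:sequence-binding}. \cref{cor:witness-refinement} then yields \eqref{eq:end-to-end-conclusion} for every $u$, $\theta$, and $\mathbb P$, and taking infima gives $D\preceq_\eta\mathbf F$.

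\textbf{Main obstacle.} The delicate step is Step 2. There one must argue that the finite, replayed inequalities on rational boxes imply the uniform generator inequalities, over all modes, parameters, and predictable inputs in the envelope, that the Dynkin arguments need. One must also show that the certified exit-landing sets really cover every reachable overshoot.

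My approach would be to treat these as exactly the semantics of the admitted certificate language. Cover verification and outward rounding (Assumption~\ref{ass:checker}) make every pointwise inequality hold on the entire region. Any generator-domain or integrability clause not discharged by a backend theorem is, by construction, listed in $\cert_{\mathrm{sem}}$ and covered by Assumption~\ref{ass:wellposed}. The same reasoning ensures that transfer charges are used only when an explicit synchronous coupling has been certified, so that the quantifier structure \eqref{eq:transfer-family-quantifiers} is actually met rather than only an infimum bound.
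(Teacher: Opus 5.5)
Your proposal is correct and follows the paper's proof essentially step for step. Item 1 comes from the binding checks. Item 2 and \eqref{eq:root-boolean-equivalence} come from the checker-built domain-restricted miter with a replayed DRAT refutation. Local charges come from \cref{thm:local-contract}, either directly or combined with \cref{thm:transfer} for surrogate-backed gates, and the reconstructed witness maxima are then fed into \cref{cor:witness-refinement}. You are slightly more explicit than the paper in discharging $\Delay(u)\le T$ from $\tau<T$ and the root-binding hypothesis \eqref{eq:root-output-process-binding}. The paper, in turn, additionally notes that surrogate-backed source or context obligations, and any other admitted backend, need the same transfer-style evidence before their charges enter the witness sum.
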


\begin{proof}
The checker first materializes the nucleotide word, canonicalizes the SBOL object, reconstructs the typed regulatory graph, and regenerates the model or replays a complete model-derivation proof. These checks establish item one.

The checker then reconstructs the admissible-input predicate and mapped Boolean output vector directly from the trusted request, accepted graph, and technology truth tables. It constructs the domain-restricted canonical miter and validates the DRAT refutation against those clauses. Unsatisfiability establishes item two and~\eqref{eq:root-boolean-equivalence}.

For each selected source, gate, and context obligation, an accepted
exact-model certificate establishes the corresponding semantic bound directly.
For a gate obligation proved on a switching jump-diffusion surrogate,
\cref{thm:local-contract} establishes the eroded surrogate bound and
\cref{thm:transfer} transfers it to the exact model. A surrogate-backed source
or context obligation likewise requires a Borel success set, a checked Borel
erosion, and the family-level synchronous-coupling quantifiers in
\eqref{eq:transfer-family-quantifiers}. Otherwise it requires a direct
exact-model certificate or a separately stated sound transfer backend. Any
other admitted backend must establish the same semantic obligation before its
charge enters the joint witness.

The accepted cube and joint-witness diagrams reconstruct $\Delay(u)$ and $\Risk(u)$ on $\Uadm$ without trusting compiler-supplied maxima. The checker verifies the two inequalities in~\eqref{eq:global-witness-bounds}. \cref{cor:witness-refinement} therefore gives~\eqref{eq:end-to-end-conclusion}. Taking the infima in \cref{def:refinement} proves the final statement.
\end{proof}

For $\mathbf b\in\Bool^m$ and a compact interval $I\subseteq[0,T]$, define the
output-path success set
\begin{equation}
 \mathcal G_{\mathbf b}(I)=
 \left\{
 y\in\mathbb D_T\!\left(\prod_{j=1}^{m}\mathsf Y_{y_j}\right)
 \;\middle|\;
 y_j(t)\in Q_{y_j}^{b_j}
 \text{ for every }j\in\{1,\ldots,m\}\text{ and }t\in I
 \right\}.
 \label{eq:output-path-success-set}
\end{equation}

\begin{corollary}[Physical-realization guarantee]
\label{cor:physical}
Under Assumptions~\ref{ass:wellposed}, \ref{ass:acyclic},
\ref{ass:checker}, and~\ref{ass:physical-adequacy}, suppose that the ideal
checker accepts. Then, for every $u\in\Uadm$ and every
$\mathbb P_{\mathrm{phys}}\in\mathfrak P_{\mathrm{phys}}(D,u)$,
\begin{equation}
 \mathbb P_{\mathrm{phys}}\!\left[
 \mathcal G_{\llbracket\mathbf F\rrbracket(u)}([\tau,T])
 \right]\geq p.
 \label{eq:physical-conclusion}
\end{equation}
\end{corollary}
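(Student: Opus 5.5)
The plan is to reduce the physical statement to the model-level conclusion of \cref{thm:end-to-end} through a change of variables along the output map $\mathbf Y_D$. Fix $u\in\Uadm$ and $\mathbb P_{\mathrm{phys}}\in\mathfrak P_{\mathrm{phys}}(D,u)$. Assumption~\ref{ass:physical-adequacy}, through the inclusion \eqref{eq:physical-law-inclusion}, supplies some $\theta\in\thetaSet$ and some $\mathbb P\in\mathfrak P_D(u,\theta)$ with $\mathbb P_{\mathrm{phys}}=\Law_{\mathbb P}(\mathbf Y_D)=\mathbb P\circ\mathbf Y_D^{-1}$. Since the checker accepted and Assumptions~\ref{ass:wellposed}, \ref{ass:acyclic}, and~\ref{ass:checker} hold, item three of \cref{thm:end-to-end} gives $\mathbb P[\Good_{\mathbf y}^{\llbracket\mathbf F\rrbracket(u)}([\tau,T])]\ge p$ for this particular $\mathbb P$.

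The key identity to establish is
\[
 \mathbf Y_D^{-1}\!\left(\mathcal G_{\llbracket\mathbf F\rrbracket(u)}([\tau,T])\right)
 =\Good_{\mathbf y}^{\llbracket\mathbf F\rrbracket(u)}([\tau,T]).
\]
It follows by unfolding definitions. By \eqref{eq:good-output} and \eqref{eq:joint-output-event}, a path $\omega$ lies in the right-hand side exactly when $Y_{D,j}(t,\omega)\in Q_{y_j}^{b_j}$ for every $j$ and every $t\in[\tau,T]$, with $b_j=\llbracket F_j\rrbracket(u)$. Here $Y_{D,j}(\cdot,\omega)$ is the $j$th coordinate of $\mathbf Y_D(\omega)$, and that condition is precisely the defining condition of \eqref{eq:output-path-success-set} evaluated at $y=\mathbf Y_D(\omega)$. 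With this identity in hand,
\[
 \mathbb P_{\mathrm{phys}}\bigl[\mathcal G_{\llbracket\mathbf F\rrbracket(u)}([\tau,T])\bigr]
 =\mathbb P\bigl[\Good_{\mathbf y}^{\llbracket\mathbf F\rrbracket(u)}([\tau,T])\bigr]\ge p,
\]
which is \eqref{eq:physical-conclusion}. Since $u$ and $\mathbb P_{\mathrm{phys}}$ were arbitrary, the conclusion holds for all of them.

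The only step needing care is measurability: the pushforward identity needs $\mathcal G_{\mathbf b}(I)$ to be a Borel subset of $\mathbb D_T(\prod_j\mathsf Y_{y_j})$. I would use the same argument the paper gives for \eqref{eq:good-output}. Each $Q_{y_j}^{b_j}$ is closed because $o_{y_j}$ is continuous, and each coordinate of a c\`adl\`ag path is right-continuous. Hence membership on the compact interval $[\tau,T]$ is equivalent to membership at the countably many rational times in $[\tau,T)$ together with the endpoint $T$. Each single-time evaluation map is Borel for the Skorokhod $J_1$ topology, so $\mathcal G_{\mathbf b}(I)$ is a countable intersection of measurable sets. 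Measurability of $\mathbf Y_D$ is given in the paper. I expect no genuine obstacle here; the only subtlety is that Assumption~\ref{ass:physical-adequacy} must supply a single global law $\mathbb P$, which is exactly what its global-consistency clause guarantees.
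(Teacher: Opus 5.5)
Your proposal is correct and follows the paper's proof. Assumption~\ref{ass:physical-adequacy} supplies $\theta\in\thetaSet$ and $\mathbb P\in\mathfrak P_D(u,\theta)$ with $\Law_{\mathbb P}(\mathbf Y_D)=\mathbb P_{\mathrm{phys}}$, item three of \cref{thm:end-to-end} gives the bound under $\mathbb P$, and equality of the output pushforward laws transfers it to $\mathbb P_{\mathrm{phys}}$. Your explicit checks that $\mathbf Y_D^{-1}\bigl(\mathcal G_{\llbracket\mathbf F\rrbracket(u)}([\tau,T])\bigr)=\Good_{\mathbf y}^{\llbracket\mathbf F\rrbracket(u)}([\tau,T])$ and that $\mathcal G_{\mathbf b}(I)$ is Borel only spell out what the paper leaves implicit in that last step.
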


\begin{proof}
By Assumption~\ref{ass:physical-adequacy}, there exist
$\theta\in\thetaSet$ and $\mathbb P\in\mathfrak P_D(u,\theta)$ such that
$\Law_{\mathbb P}(\mathbf Y_D)=\mathbb P_{\mathrm{phys}}$. The accepted
model-level theorem gives the required probability under $\mathbb P$, and
equality of the output pushforward laws gives the same probability under
$\mathbb P_{\mathrm{phys}}$.
\end{proof}

\begin{remark}[What the result does not claim]
The theorem is conditional on Assumptions~\ref{ass:wellposed}, \ref{ass:acyclic}, and~\ref{ass:checker}. It does not establish assembly-level model adequacy. The physical corollary does not cover an uncharacterized host, mutation, unmodeled crosstalk, or environmental regime outside the declared law family. An empirical discrepancy estimate is not a mathematical coupling bound.
\end{remark}

\subsection{Proof-checking complexity}
Let $N$ be the total bit length of the explicit checker input and certificate,
including the sequence, canonical graph and model representations, expression
DAGs, rational coefficients, rational boxes and endpoint precisions, decision
diagrams, transfer proofs, and DRAT proof. Numerical precision is measured by
the number of requested output bits.

For the fixed proof language, assume that the checker makes at most
polynomially many calls to its admitted primitives, every checker-generated
intermediate object has size polynomial in $N$, and each parser,
canonicalization procedure, decision-diagram operation, proof kernel,
enclosure primitive, and transfer kernel runs in time polynomial in the bit
lengths of its explicit inputs and outputs. Under these assumptions, checking
a supplied certificate takes time polynomial in $N$.

This is a verification-complexity statement, not a claim that certificates can
be found efficiently. Discovery may be combinatorial, nonconvex, or
exponential, and a valid certificate may itself be exponentially large in the
original synthesis instance.

The theorem states what acceptance would establish, while the complexity
claim concerns checking an already supplied certificate. Neither guarantees
that a producer can find the required biological mapping or stochastic proof.
The following example makes the witness construction concrete before we turn
to the restricted implementation evidence.

\section{Schematic witness construction for a three-input formula}
\label{sec:example}
For $F=(A\lor\neg B)\land C$, a complement-aware sensor library permits the two-gate mapping
\begin{equation}
  q=\Nor(A,\neg B),
  \qquad
  y=\Nor(q,\neg C)=F.
  \label{eq:example-y}
\end{equation}
The first NOR gate computes $q=\neg(A\lor\neg B)$, and the second computes
$y=\neg(q\lor\neg C)$. If complement rails are unavailable, the compiler
inserts explicit inverters. In a molecular implementation, one cassette
produces $R_q$ from a promoter repressed by $A$ and $\neg B$. A second cassette
produces the output and is repressed by $R_q$ and $\neg C$. Exact nucleotide
bases are selected only after a chassis-specific technology file is supplied.

The proof witness depends on the input assignment. Order the root inputs as
$(q,\neg C)$ and the $q$-gate inputs as $(A,\neg B)$. If $C=0$, the root cube
$(\starv,1)$ already establishes $y=0$. The branch containing $A$, $\neg B$,
and $q$ is therefore absent from the witness. If $C=1$ and $q=0$, both root
inputs are low. In that case, either a high value of $A$ or a high value of
$\neg B$ can establish $q=0$. Finally, if $(C,A,B)=(1,0,1)$, both inputs of
the $q$ gate are low, so $q=1$ and the root cube $(1,\starv)$ establishes
$y=0$. A one-input root contract is sound only if it is uniform over every
masked-input trajectory and shared-resource state in its declared envelope.
\Cref{app:example-full} gives the exact witness formulas.

The synthetic replay in \cref{sec:computational-evaluation} exercises a
corresponding synthetic mapping. It is separate from the Cello-derived
sequence and is used only to reconstruct proof-accounting quantities.

\section{Methods}
\label{sec:computational-methods}
The schematic example identifies the mathematical accounting objects. To
interpret the computational results, we must also fix the evaluated stochastic
model and numerical procedure. This section specifies the three reported
evidence tracks, their numerical semantics, and the descriptive summaries
used in Section~\ref{sec:computational-evaluation}.

\subsection{Executable evidence tracks and numerical backend}
\label{sec:evaluation-setup}
The paper distinguishes the ideal mathematical checker in
\cref{thm:end-to-end} from three executable tracks. The restricted exact
checker implements rational finite discrete-time Markov chain (DTMC)
calculations for an enforced depth-one, full-cube fragment. The synthetic
replay backend uses the IEEE~754 64-bit binary floating-point format, called
binary64 below. It reproduces the synthetic accounting studies but cannot
return formal acceptance. The archived Cello track records a version-locked
external invocation. Ordinary paper reproduction does not rerun Cello: it
verifies the immutable archive's pinned inputs and manifest-listed output
sizes and digests, and then independently reconstructs the selected mapping
and sequence from the output netlist, DNA Plotlib tables, UCF records, and
native SBOL~2 export. Passing one track does not transfer a verdict to another
track.

The synthetic replay path uses grammar~\eqref{eq:source-grammar}, lazy
dual-rail NOR compilation, exact-byte technology mapping, SBOL~3 Resource
Description Framework export, a checker-reconstructed Boolean miter, and a
proof containing only RUP clause additions. The archived Cello track imports
Cello's native SBOL~2 output. These are separate tracks. The paper does not
claim that one artifact is a version-converted form of the other.

The numerical backend declares a discrete-time immigration--death Markov chain
with step size $\Delta=1$ min. Let $X_k$ denote the chain at the minute grid.
Let $a_k\ge0$ be the production rate in molecules per minute and
$\delta_k\ge0$ the per-molecule degradation rate in inverse minutes, both
held fixed during step~$k$. Conditional on $X_k=x$, define
\begin{align}
s_k&=e^{-\delta_k\Delta},\\
\Lambda_k&=
\begin{cases}
\dfrac{a_k}{\delta_k}(1-s_k),&\delta_k>0,\\
a_k\Delta,&\delta_k=0.
\end{cases}
\end{align}
Before application of the finite-state cap, the endpoint count is
\begin{equation}
X_{k+1}^{\mathrm{uncap}}=S_k+I_k,
\qquad
S_k\sim\operatorname{Binomial}(x,s_k),
\quad
I_k\sim\operatorname{Poisson}(\Lambda_k),
\label{eq:immigration-death-step}
\end{equation}
where $S_k$ and $I_k$ are conditionally independent. For
$N=T/\Delta\in\mathbb N$, the path interpretation is
\begin{equation}
X(t)=X_k
\qquad
\text{for }t\in[k\Delta,(k+1)\Delta),
\end{equation}
for $k=0,\ldots,N-1$, with $X(N\Delta)=X_N$. The process is
right-continuous and piecewise constant. All evaluated reach-and-stay events
are events of this declared step process. This interpretation neither defines
nor rules out threshold crossings of an underlying continuous-time biochemical
process between grid points.

The finite state space is
\begin{equation}
 \mathsf S=\{0,\ldots,350\}\cup\{\dagger\},
\end{equation}
where $\dagger$ is an absorbing overflow state. One-step endpoint mass above
350 is accumulated in $\dagger$. For finite states, the observation is
$o(k)=k$ molecules. The overflow state has no molecular-count observation and
is declared a failure for both output classes. Index~351 is only its internal
array representation.

Transition matrices and dynamic-programming recurrences are evaluated in
binary64. The backend has no verified directed rounding or
interval enclosure. Its values are therefore numerical replay estimates rather
than rigorous probability bounds. We write $R_{64}(u)$ for its uncapped
assignment-wise risk-charge sum to distinguish it from the formal probability
bound $\Risk(u)$. The global envelope charge is supplied by the request. The
program does not construct a context-envelope certificate or prove start-set
coverage. It evaluates selected parameter-box corners as a diagnostic
heuristic. Because no comparison theorem covers the box interior, these values
do not bound all admitted parameter settings.

All molecule-count sets in this paragraph are subsets of the finite states
$\{0,\ldots,350\}$. The baseline low and high output sets are
$\{0,\ldots,30\}$ and $\{80,\ldots,350\}$, with inner cores
$\{0,\ldots,20\}$ and $\{100,\ldots,350\}$, respectively, over a 300-min
horizon. The five named synthetic circuits use $p=0.98$, equivalently the
diagnostic risk-charge threshold $1-p=0.02$, and $\tau=180$~min. Every Boolean
assignment is enumerated, and full-cone and sufficient-cube modes use identical
local replay inputs.

\subsection{Statistics and reproducibility}
\label{sec:statistics-reproducibility}
The computational evaluation uses deterministic reconstruction rather than
inferential hypothesis testing. The five named circuits enumerate every
admitted Boolean assignment. The formula suite contains 60 archived
deterministic instances: five instances for each combination of four input
dimensions and three requested source sizes. The fixed-input
structural-growth summaries report medians over three archived deterministic
instances at each size. These summaries describe the fixed reported cases.
They do not estimate a population effect over biological circuits or certify
parameter settings between the evaluated corners.

\section{Computational evidence}
\label{sec:computational-evaluation}
The evaluation examines three implementation tracks that exercise different
parts of the architecture. An \emph{archived Cello execution} tests external
compiler invocation and independent reconstruction of its mapping and
sequence. The \emph{restricted exact checker} verifies exact rational
consequences for a deliberately small finite-state fragment. The
\emph{binary64 replay} deterministically recomputes synthetic accounting
examples in 64-bit floating-point arithmetic. A pass in one track is not a verdict in another, and none of these experiments
is needed for the mathematical theorem. The supplied manuscript project
contains the reported figures and tables, but not the software, raw results,
invocation manifests, or archived Cello inputs and outputs described below.
Consequently, the implementation outcomes are retained as reported results,
not as independently reproduced results. Appendix~\ref{app:software-artifact}
identifies the separate evidence required for replay.

The reported experiments address three questions: whether external compiler
artifacts permit independent mapping and sequence reconstruction, whether the
restricted checker accepts its finite model under the declared premises, and
whether slicing changes proof accounting with the circuit and local numerical
inputs held fixed. Methods and
reproducibility procedures appear in
\cref{sec:evaluation-setup,sec:statistics-reproducibility}.

In this section, a \emph{risk-charge sum} is the additive quantity
reconstructed from a selected proof witness. It is a rigorous probability
upper bound only when all corresponding proof obligations have been
discharged. An \emph{accounting delay} is the value of the max-plus witness
recurrence. It is not a measured biochemical response time.
Here a \emph{model-relative verdict} is valid for the declared mathematical
model and its supplied premises. It does not by itself establish that the model
accurately describes an assembled biological system.

For clarity, \emph{reference closure} for an audited technology record means
that every identifier transitively referenced by that record resolves, with
the expected type, to exactly one object in the pinned input set. It is a claim
about the selected record, not necessarily about every record in the library.
\emph{Complete invocation-record closure} is stronger: every input, output,
path, identifier, size, and digest named by the top-level invocation record
must resolve to the archived byte object that it claims to identify, with no
dangling, ambiguous, or path-discrepant link. Thus, digest checks for all
manifest-listed outputs do not by themselves establish complete
invocation-record closure. A \emph{depth-one} mapped instance has no checked
gate-to-gate path: each mapped output is driven by a gate whose parents are
source leaves. A \emph{full cube} fixes every gate-input coordinate to either
zero or one and therefore contains no masked coordinate $\starv$. Finally,
\emph{bp} denotes base pairs, the conventional unit of double-stranded DNA
length. The reported value is the number of nucleotide positions in the
canonical represented sequence.

\begin{table}[t]
\centering
\small
\caption{What each evaluation track establishes and does not establish.}
\label{tab:evidence-levels}
\begin{tabularx}{\linewidth}{@{}>{\raggedright\arraybackslash}p{0.22\linewidth}YY@{}}
\toprule
Evidence track & Established by this track & Not established by this track\\
\midrule
Pinned A1\_AmtR UCF audit &
Exact source bytes, reference closure, metadata, and an 836-bp cassette reconstruction &
A Cello execution, temporal stochastic certification, or physical adequacy\\
Archived external Cello execution &
Successful execution, per-file size and digest checks, selected devices, and a
verifier-reconstructed 1,962-bp design sequence &
Complete invocation-record closure, wet-lab evidence, a calibrated stochastic
model, or temporal acceptance\\
Restricted exact checker &
Exact rational reconstruction and model-relative acceptance for a depth-one,
full-cube finite-state test instance &
Real-sequence binding, general masked-cube composition, or the complete
ideal-checker theorem\\
BCD and rational examples &
Exact source truth values, source-count slicing, and one exact two-state recurrence &
A mapped decoder, sequence, model, or joint decoder reliability bound\\
Synthetic binary64 replay &
Deterministic reconstruction of synthetic mappings and accounting terms &
A rigorous probability bound or formal acceptance\\
Mutation and release checks &
Rejection of the tested semantic mutations and the reported
software-installation workflows &
Exhaustive security assurance, unobserved CI configurations, or biological
validation\\
\bottomrule
\end{tabularx}
\end{table}

\subsection{Static audit of one Cello UCF entry}
\label{sec:cello-ucf-case}
The static audit uses the Eco1C2G2T2 \emph{Escherichia coli} record from
Cello-UCF release v1.0~\cite{CelloUCF2021,Jones2022}. The pinned commit
identifier is
\texttt{b8147b33\allowbreak{}9a6c6189\allowbreak{}964c779c\allowbreak{}967d23bd\allowbreak{}b98dee32}.
The header identifies the host as \emph{E. coli} NEB 10-beta and records M9
medium and an operating temperature of $37\,{}^\circ\mathrm{C}$. The library
contains 18 NOR gates. We selected A1\_AmtR because its record includes a typed
NOR gate, a regulator, a structure, an output promoter, exact part words, and
stored response-model fields. The reported hash prefix is a prefix of a
SHA-256 digest. \Cref{tab:cello-ucf-audit} lists the independently reconstructed
fields.

\begin{table}[t]
\centering
\small
\caption{Verifier-reconstructed fields for the pinned Eco1C2G2T2 A1\_AmtR static audit. Stored response and timing values remain metadata with unresolved units.}
\label{tab:cello-ucf-audit}
\begin{tabularx}{\linewidth}{@{}>{\raggedright\arraybackslash}p{0.31\linewidth}Y@{}}
\toprule
Item & Value \\
\midrule
UCF snapshot & Eco1C2G2T2 \\
Host record & \emph{Escherichia coli} NEB 10-beta \\
Library logic gates & 18 NOR gates \\
Audited gate & A1\_AmtR \\
Cassette parts & BydvJ, A1, AmtR, L3S2P55 \\
Part lengths & 76, 34, 669, and 57 bp \\
Cassette length & 836 bp \\
Stored mixed-case sequence SHA-256 prefix & \path{5c095f5dccfb67d6} \\
Canonical uppercase sequence SHA-256 prefix & \path{d00c7e3bfca89736} \\
Raw UCF response fields (not interpreted here) &
$y_{\max}=3.13661101$, $y_{\min}=0.035221608$,
$K=0.047427884$, and $n=1.656171138$\\
Raw UCF tandem fields (not interpreted here) &
$\alpha=0.274797631$ and $\beta=1.0$\\
Raw UCF timing fields (units unresolved) &
$\tau_{\mathrm{on}}=0.9$ and $\tau_{\mathrm{off}}=2.5$\\
\bottomrule
\end{tabularx}
\end{table}

The audit records these UCF fields as uninterpreted metadata. The pinned files
do not, by themselves, determine the governing temporal equation, the unit of
each parameter, or a calibration from relative promoter units to the
molecule-count observation used by the synthetic stochastic backend.
Consequently, no temporal model or certificate is derived from these values.

\subsection{Archived Cello execution and independent reconstruction}
\label{sec:archived-cello-n1}
The archived case records an external Cello execution for a two-input,
one-output NOR netlist using mutually compatible Eco1C1G1T1 UCF, input-sensor,
and output-device files. The invocation records the pinned compiler, UCF, and
Java-archive versions.

The execution completed successfully in 21.9~s and produced 34 output files.
Cello selected \texttt{S3\_SrpR}, together with
\texttt{LacI\_sensor}, \texttt{TetR\_sensor}, and
\texttt{YFP\_reporter}. The verifier independently parsed the output netlist,
DNA Plotlib tables, UCF records, and SBOL~2 export. It reconstructed the
selected mapping and an ordered 1,962-bp design sequence. Every output listed
in the invocation manifest passed its size and digest check. These per-output
checks do not establish complete invocation-record closure, which is not used
in the semantic reconstruction.

Table~\ref{tab:real-cello-n1} reports the principal evidence. Standard paper
reproduction rehashes and reimports the immutable archive. It does not claim a
new Cello execution.

\begin{table}[t]
\centering
\small
\caption{Archived Cello one-NOR execution and independently reconstructed
design evidence.}
\label{tab:real-cello-n1}
\begin{tabularx}{\linewidth}{@{}>{\raggedright\arraybackslash}p{0.20\linewidth}YY@{}}
\toprule
Item & Reconstructed result & Evidence basis\\
\midrule
Version lock &
Cello~v2 commit
\texttt{1b93575b\allowbreak{}8aa7e40e\allowbreak{}132381e3\allowbreak{}a9b29b1a\allowbreak{}a7f7a8e4}.
Cello-UCF commit
\texttt{cfb4de1c\allowbreak{}f125a6fb\allowbreak{}5af19717\allowbreak{}30844ea5\allowbreak{}d0e91623} &
Invocation record and immutable input archive\\
Java archive &
SHA-256
\texttt{040eec95\allowbreak{}3d4344e5\allowbreak{}fdd75259\allowbreak{}29ce831b\allowbreak{}d3d043e4\allowbreak{}815c805a\allowbreak{}5cb659bf\allowbreak{}895c3fc6} &
Compatibility binary used by the archived invocation\\
Execution &
Return code~0 in 21.9~s. 34 output files totaling 310,232 bytes &
Archived invocation and output manifest\\
Mapped devices &
\texttt{S3\_SrpR}, \texttt{LacI\_sensor}, \texttt{TetR\_sensor}, and
\texttt{YFP\_reporter} &
Output netlist and locked input records agree\\
Design sequence &
1,962~bp. SHA-256
\texttt{2d90b6c6\allowbreak{}282ba354\allowbreak{}6d213d2d\allowbreak{}56f3144f\allowbreak{}07f6ea5e\allowbreak{}39989c69\allowbreak{}1f011805\allowbreak{}0a7e18c0} &
DNA Plotlib table, UCF records, and SBOL~2 export agree\\
Manifest outputs &
All 34 entries pass their recorded size and digest checks &
Per-output integrity. Complete invocation-record closure is not claimed\\
Scope &
Compiler execution, mapping reconstruction, and sequence provenance &
No sequence-bound stochastic model or temporal certificate\\
\bottomrule
\end{tabularx}
\end{table}

This case uses Eco1C1G1T1 and is distinct from both the Eco1C2G2T2 A1\_AmtR
static audit and the BCD source-level example. It shows that the integration
can execute the version-locked Cello compiler and independently reconstruct the
recorded mapping and sequence. Because it contains no sequence-bound
stochastic model or temporal proof, it lies outside the scope of temporal
acceptance. It is a single-case compiler-integration test, not a broad compiler
benchmark or wet-lab validation.

\subsection{Source-level decoder slicing illustration}
A binary-coded decimal (BCD) to seven-segment decoder provides a recognizable multi-output source example~\cite{Shin2020Display}. Let $b_3,b_2,b_1,b_0$ denote the 8, 4, 2, and 1 bits. Its admissible domain is
\begin{equation}
 \Uadm^{\mathrm{BCD}}
 =
 \left\{
 (b_3,b_2,b_1,b_0)\in\Bool^4
 \,\middle|\,
 8b_3+4b_2+2b_1+b_0\le9
 \right\}.
 \label{eq:bcd-domain}
\end{equation}
Each output $a,\ldots,g$ controls one display segment. Value~1 denotes an
illuminated segment. Input words are ordered as $(b_3,b_2,b_1,b_0)$, from the
8's bit to the 1's bit. Assignments representing 10 through 15 are synthesis
don't-cares: they lie outside $\Uadm^{\mathrm{BCD}}$ and are not constrained by
the specification. The following active-high formulas were checked on all ten
admissible digits.
\begin{align}
 a &= b_1 \lor b_3 \lor (b_0\land b_2) \lor (\neg b_0\land\neg b_2),\nonumber\\
 b &= \neg b_2 \lor (b_0\land b_1) \lor (\neg b_0\land\neg b_1),\nonumber\\
 c &= b_0 \lor b_2 \lor \neg b_1,\nonumber\\
 d &= b_3 \lor (b_1\land\neg b_0) \lor (b_1\land\neg b_2) \lor (\neg b_0\land\neg b_2) \lor (b_0\land b_2\land\neg b_1),\nonumber\\
 e &= (b_1\land\neg b_0) \lor (\neg b_0\land\neg b_2),\nonumber\\
 f &= b_3 \lor (b_2\land\neg b_0) \lor (b_2\land\neg b_1) \lor (\neg b_0\land\neg b_1),\nonumber\\
 g &= b_3 \lor (b_1\land\neg b_0) \lor (b_1\land\neg b_2) \lor (b_2\land\neg b_1).
 \label{eq:bcd-seven-segment}
\end{align}

The plotted metric is a source-expression occurrence count. For a literal $\ell$, including a complemented literal, set $N(\ell)=1$. For a displayed $k$-ary connective, set
\begin{equation}
 N\!\left(\operatorname{op}(\varphi_1,\ldots,\varphi_k)\right)
 =1+\sum_{i=1}^{k}N(\varphi_i).
 \label{eq:source-node-count}
\end{equation}
Repeated literals are counted separately and no common-subexpression sharing is used. The sliced count applies the same recurrence after retaining only children selected by a sufficient cube. These counts are not Boolean supports, binary source-grammar sizes, mapped NOR counts, sensor counts, or witness sizes in \cref{thm:composition}.

For each digit and output formula, the reported sliced count is the minimum
over all valid sufficient-cube choices compatible with that assignment. If
several choices attain the same minimum, only their common count enters the
summary, so the plotted value does not depend on a further tie-breaking rule.

\Cref{fig:bcd-source-counts} reports the full count and the mean and maximum
retained counts over digits 0 through 9. The decoder is not mapped to either
Cello case. This source-level example compares formula occurrence counts. It
does not specify a mapped DNA decoder or claim stochastic reliability for such
a decoder.

\begin{figure}[t]
\centering
\includegraphics[width=0.85\linewidth]{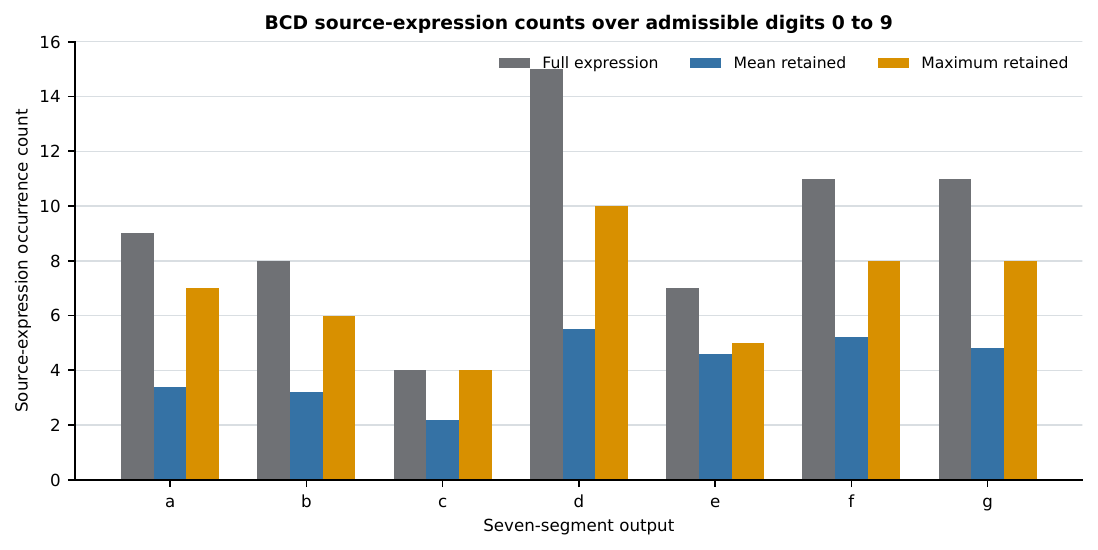}
\caption{Source-expression occurrence counts for the seven binary-coded-decimal output formulas over admissible digits 0 through 9. Bars show the full expression count and the mean and maximum counts retained by sufficient-cube slicing. These are source-level occurrence counts rather than mapped NOR counts, temporal witnesses, or sequence-bound results.}
\label{fig:bcd-source-counts}
\end{figure}

\subsection{Exact-arithmetic local illustration}
\label{sec:exact-rational-pilot}
The two-state grid chain in this illustration is independent of the Cello audit
and the synthetic immigration--death backend. Let one time step represent one
minute, and let state~1 be the desired state. With rows indexed by the current
state and columns by the next state, the transition matrix is
\[
P=\begin{pmatrix}
2/3 & 1/3\\
1/20000 & 19999/20000
\end{pmatrix}.
\]
If $q_k=\mathbb P[X_k=1]$ and $q_0=0$, then
\begin{equation}
q_{k+1}=\frac{1-q_k}{3}+\frac{19999}{20000}q_k.
\label{eq:rational-recurrence}
\end{equation}
The success event is that the chain is in state~1 at minute~20 and never
leaves that state during minutes 21 through 40. Hence its complement has exact
probability
\begin{align}
\varepsilon_{\mathrm{rat}}
&=\mathbb P\!\left[X_{20}=0\ \text{or}\
  X_k=0\ \text{for some }k\in\{21,\ldots,40\}\right]\\
&=1-\frac{20000}{20003}
\left[1-\left(\frac{39997}{60000}\right)^{20}\right]
\left(\frac{19999}{20000}\right)^{20}\nonumber\\
&\approx0.0014492855024366377.
\label{eq:rational-failure}
\end{align}

The reported software study also uses a deliberately restricted exact test instance.
Its source has two named outputs, $A\land B$ and $B\land A$, on the sole
admissible assignment $(A,B)=(1,1)$. Both outputs are bound to the same checked
gate identifier. The checker reconstructs the source formulas, mapped truth
table, domain-restricted miter, canonical conjunctive normal form (CNF), and a
proof containing only reverse-unit-propagation (RUP) clause additions. It then
replays the rational finite-state local obligation, transfer replacement,
source and context charges, sufficient cube, and joint witness.

The surrogate charge is $1/20$. As a verifier-controlled premise of this
restricted test, the finite model also supplies a rational discrepancy
distribution. The checker charges the mass at discrepancy at least the erosion
radius, which is $1/20$. It therefore uses the replacement charge
$1/20+1/20=1/10$. The context charge and each source-leaf charge are $1/100$.
The shared local node is counted once, giving the exact risk-charge sum
$13/100$, below the budget $1/5$. The accounting delay is one time step, below
the deadline of two time steps. This arithmetic is exact within the restricted
premise set, but the supplied distribution is not a checker-derived synchronous
coupling and does not instantiate \cref{thm:transfer}. \Cref{tab:restricted-checker}
records the checked result.

\begin{table}[t]
\centering
\small
\caption{Restricted exact-checker result for the synthetic finite-state test
instance.}
\label{tab:restricted-checker}
\begin{tabularx}{\linewidth}{@{}>{\raggedright\arraybackslash}p{0.25\linewidth}YY@{}}
\toprule
Checked item & Reconstructed result & Scope or evidence basis\\
\midrule
Implemented fragment &
Depth-one, full-cube rational finite-state Markov chain &
Synthetic test instance\\
Domain and outputs &
Sole assignment $(A,B)=(1,1)$. Outputs \texttt{and\_copy} and
\texttt{and\_main} &
One shared mapped root\\
Replacement charge &
$1/20+1/20=1/10$ &
Exact rational arithmetic on supplied premises\\
Risk-charge sum and budget &
$13/100$ and $1/5$ &
Shared node counted once\\
Accounting delay and deadline &
One and two time steps &
Exact integer comparison\\
Checker result &
\texttt{accept} &
Restricted model-relative fragment\\
Sequence binding &
Not provided by this test &
The finite-state model is supplied independently of any sequence\\
\bottomrule
\end{tabularx}
\end{table}

This is an exact-arithmetic verdict of the restricted checker under explicit
synthetic premises. The test instance is not linked to either Cello sequence
and cannot support the complete sequence-bound theorem or a physical-realization
claim.

\subsection{Prototype replay and the accounting effect of slicing}
The binary64 replay program receives the source request, numerical thresholds,
and synthetic technology input separately from the producer-controlled
software bundle. It reconstructs sequence bytes, SBOL structure, the mapped
synthetic model, the Boolean miter, local replay inputs, witness diagrams, and
final accounting maxima. Producer and replay checker are separate entry points
but share parsing and semantic reconstruction routines. The experiment
exercises deterministic recomputation and source-binding checks rather than
agreement between independent implementations.

\Cref{tab:contract-comparison} reports the worst-assignment accounting values
and replay outcomes for the five named synthetic circuits.

\begin{table}[t]
\centering
\caption{Worst assignment-wise binary64 accounting values. The final column gives full-cone and sliced replay outcomes under the numerical thresholds $p=0.98$ and $\tau=180$ min.}
\label{tab:contract-comparison}
\resizebox{\linewidth}{!}{
\begin{tabular}{lrrrrc}
\toprule
Design & \shortstack{Full\\risk-charge sum} & \shortstack{Sliced\\risk-charge sum} & \shortstack{Full accounting\\delay} & \shortstack{Sliced accounting\\delay} & Replay outcome \\
\midrule
worked\_three\_input & 0.00533 & 0.00523 & 75 & 70 & pass / pass \\
mux2                  & 0.00996 & 0.00906 & 95 & 90 & pass / pass \\
alarm4                & 0.00789 & 0.00720 & 75 & 70 & pass / pass \\
selector7             & 0.02404 & 0.01635 & 165 & 160 & fail / pass \\
guarded9              & 0.02267 & 0.01832 & 145 & 135 & fail / pass \\
\bottomrule
\end{tabular}}
\end{table}

Slicing lowers the worst-assignment risk-charge sum by $1.9\%$, $9.0\%$,
$8.8\%$, $32.0\%$, and $19.2\%$ in table order. It also lowers the maximal
accounting delay in every case. The two affected designs move from above to
below the $0.02$ risk-charge threshold. Neither crosses the 180-min
accounting-delay threshold. Their mapped structures and all local replay inputs
are unchanged.
\Cref{fig:contract-frontier} shows these changes relative to the two replay
thresholds. They are reductions in a conservative accounting expression rather
than measured improvements in biochemical reliability or latency.

\begin{figure}[t]
\centering
\includegraphics[width=0.85\linewidth]{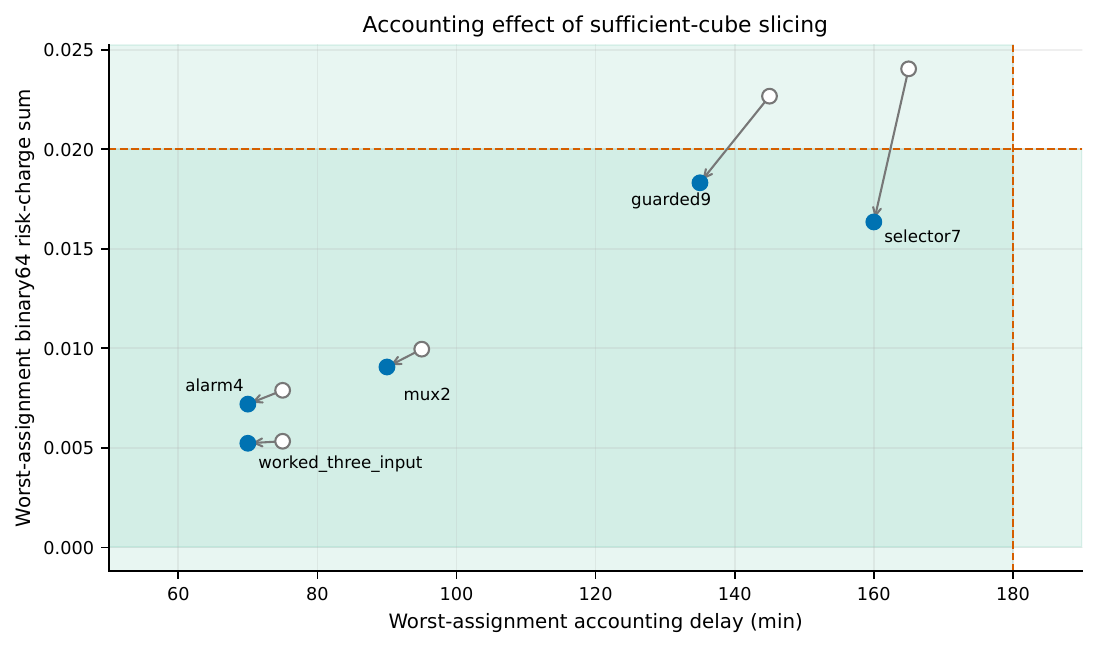}
\caption{Binary64 replay effect of sufficient-cube slicing. Open points use full-cone accounting and filled points use sufficient-cube accounting with identical local replay inputs. The overlap of the shaded regions meets both prototype thresholds. The arrows for \texttt{selector7} and \texttt{guarded9} cross the numerical risk-charge line.}
\label{fig:contract-frontier}
\end{figure}

\Cref{fig:assignment-distributions} resolves the aggregate result assignment by
assignment. Points below the diagonal have lower sufficient-cube accounting
values than full-cone values with identical local replay inputs. The
risk-charge-reduction histogram pools raw assignments across designs, so
designs with more inputs contribute more observations. The plot supports an
accounting claim rather than a probability certificate.

\begin{figure}[t]
\centering
\includegraphics[width=\linewidth]{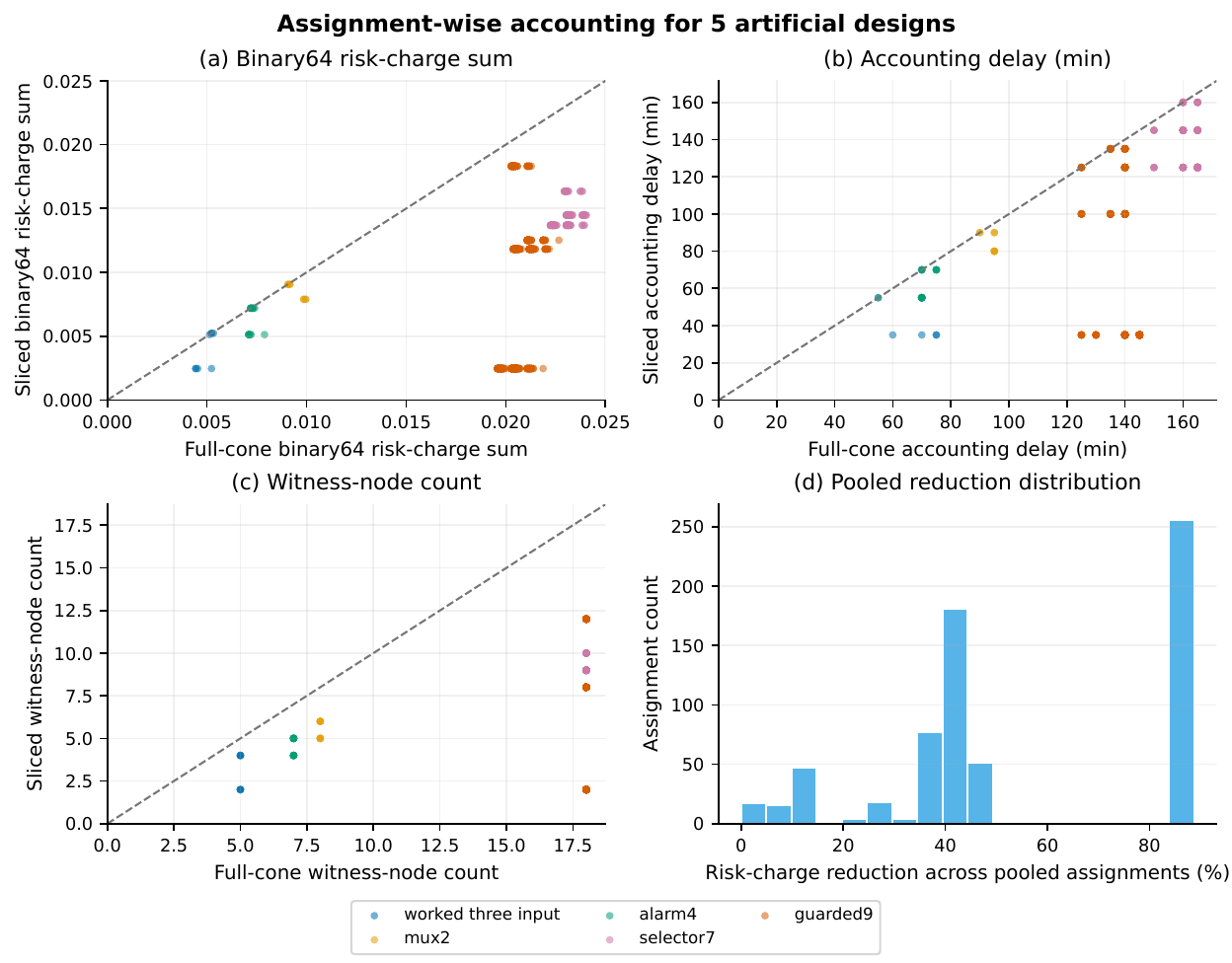}
\caption{Assignment-wise effect of sufficient-cube accounting for five
synthetic designs. Each point represents one admitted assignment. Points below
the diagonal have smaller sufficient-cube risk-charge sum, accounting delay, or
witness size than full-cone accounting with identical local replay inputs. The
histogram pools raw assignments and therefore weights designs by their number
of assignments. The displayed binary64 quantities are replay values rather
than probability certificates.}
\label{fig:assignment-distributions}
\end{figure}

\subsection{Deterministic formula stress suite}
The archived reconstruction contains 60 formula DAGs with 4, 6, 8, or 10
primary inputs and 8, 16, or 32 requested internal nodes. There are five
archived instances for each input-count and requested-size pair. The synthetic
technology and local replay inputs are fixed. Slicing lowers the
worst-assignment risk-charge sum in 55 of 60 formulas and maximal accounting
delay in 54. The median reductions are $11.5\%$ for worst-assignment
risk-charge sum, $40.3\%$ for its assignment mean, $7.8\%$ for maximal
accounting delay, and $49.4\%$ for mean witness size. Five full-cone replay
failures become sliced replay passes. \Cref{fig:random-formula-suite} shows the
four distributions, and Appendix~\ref{app:proof-slicing-distributions} treats
the 60 formulas as a fixed archived corpus and gives an additional
assignment-mean summary. This suite is a deterministic software stress set
rather than a probability model over biological designs.

\begin{figure}[ht]
\centering
\includegraphics[width=0.8\linewidth]{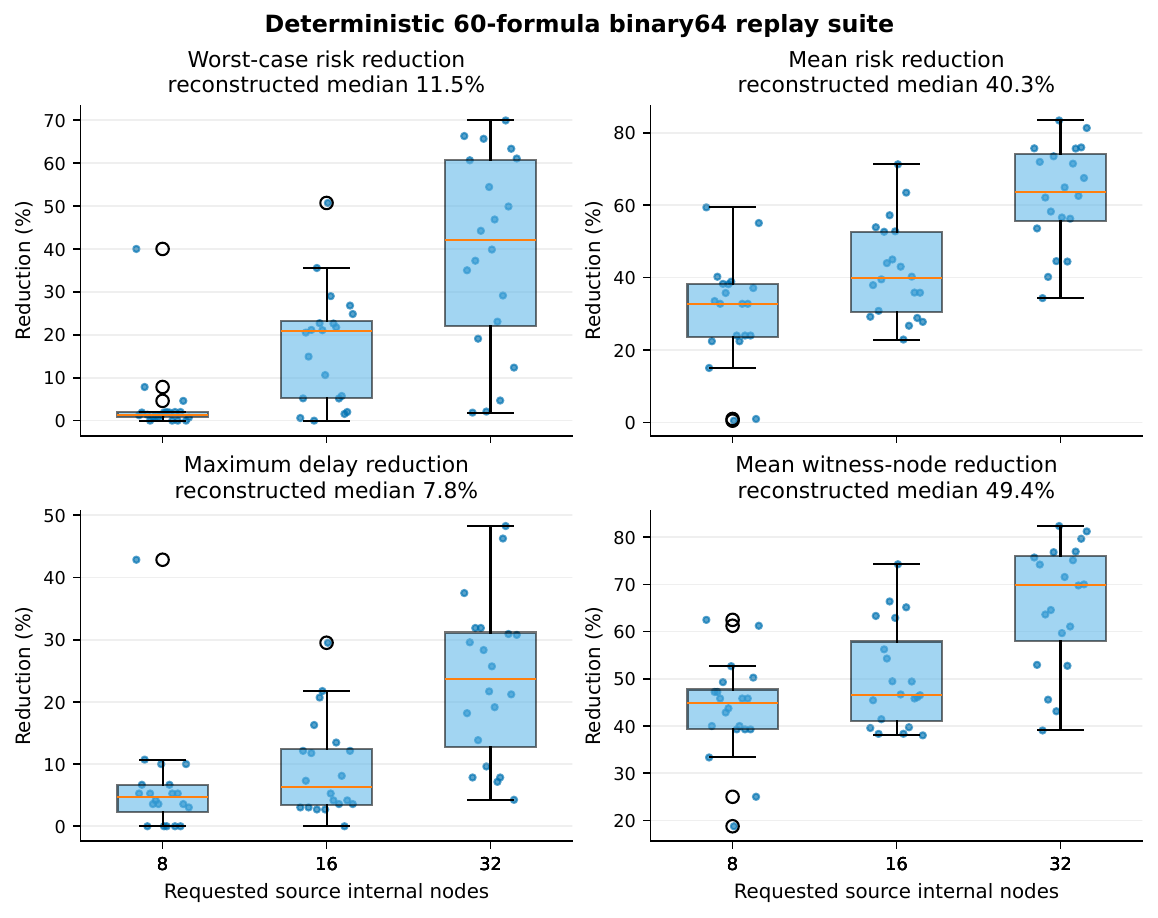}
\caption{Deterministic 60-formula binary64 replay suite. Each panel groups 20
archived formulas by requested internal-node count, pooling four input
dimensions and five archived instances per input-count--size pair. Colored
points denote individual deterministic
instances. Open circles are the boxplots' $1.5$-interquartile-range fliers.
Boxes summarize the distributions. The reductions concern proof accounting and
do not measure biochemical performance or constitute formal probability
bounds.}
\label{fig:random-formula-suite}
\end{figure}

\subsection{Trust-boundary and release checks}

The test suite mutates sequence bytes, UCF references, SBOL graphs, model
identifiers, context evidence, sufficient cubes, witness risk charges,
transfer penalties, archived-execution inputs, and archive paths. The checker
rejects each tested alteration in the validation layer responsible for the
modified field. Transfer tests show
that changing a penalty changes the applicable joint risk charge, removing a
required transfer object is rejected, and a shared transfer-backed node is
charged once. These results establish fail-closed behavior for the tested
cases rather than exhaustive security.

The internal Boolean checker reconstructs the restricted test-instance miter
and validates its proof containing only RUP clause additions. Clean Linux
installations from both the wheel and source distribution reproduced the
restricted exact verdict, the two offline Cello audits, dependency checks, and
request validation.

The supplementary computational section reports assignment-mean proof-slicing
summaries, results for the fixed archived 60-formula corpus, a fixed-input
structural-growth benchmark, and software-bundle mutation tests.
Appendix~\ref{app:software-artifact} specifies the material required for
independent replay. Together, these studies exercise provenance, exact
finite-state arithmetic, and synthetic proof accounting. Their separation
sets the limits of the conclusions discussed next.

\section{Discussion}
\label{sec:limitations}
PCS-Bio is a conditional model-level architecture. Under the stated semantic,
acyclicity, and trusted-checker assumptions, ideal-checker acceptance entails
joint temporal refinement of every named output, uniformly over the admitted
inputs, parameters, initial laws, context policies, and stochastic dependence
structures. The proof combines Boolean equivalence, closed-world
sequence--graph--model binding, local stochastic obligations,
assignment-specific joint witnesses, and, when used, synchronous
surrogate-to-target transfer.

Sufficient-cube slicing changes the proof witness, not the physical circuit.
It can reduce the union-bound risk budget by omitting nodes with positive
charges, and it can reduce accounting delay when an omitted branch lies on a
maximizing full-cone path. Masked molecular inputs remain present and must stay
inside envelopes over which the selected local contract is uniform. The joint
composition counts reconvergent nodes once and does not assume that molecular
events are independent.

The present evidence exercises three separate interfaces. The archived Cello
case establishes one external compiler execution, reconstruction of the
selected mapping, and provenance for a 1,962-bp sequence. The restricted exact
checker establishes exact rational consequences for one depth-one, full-cube
finite-state test instance under verifier-supplied premises. The binary64
studies show how assignment-sensitive slicing changes synthetic accounting
examples, but they do not provide rigorous probability bounds. These outcomes
are not combined into one acceptance verdict.

Several limitations follow directly from that separation. No supplied model
is checker-bound to the Cello-derived sequence. The restricted exact instance
does not exercise general masked-cube composition, continuous-time chemical
dynamics, or the complete proof language. The archived formula-corpus study holds the
synthetic technology fixed, and the size-growth benchmark holds the number of
primary inputs fixed. Neither supports a claim of asymptotic scalability or
physical performance. Any future surrogate-backed acceptance would still have
to supply the jointly measurable synchronous coupling required by
\cref{thm:transfer}.

A physical-realization claim would additionally require a calibrated
sequence-bound stochastic model, native-unit observation maps, uncertainty and
initial-condition coverage, local and interface certificates, and empirical
evidence for Assumption~\ref{ass:physical-adequacy}. Feedback or memory at the
Boolean interface would require extending the present theorem beyond acyclic
clamped-input computation. These are substantive future steps, not conclusions
of the current experiments.

The main contribution is therefore an explicit, independently checkable
contract connecting evidence that is usually reported separately. Once a
technology package supplies the missing sequence-to-model and local evidence,
\cref{thm:end-to-end} yields the model-level guarantee and \cref{cor:physical}
transfers it to the declared physical regime under the adequacy assumption.
The current results support a prototype of complementary interfaces. They do
not establish integrated biological certification, and independent reproduction
also requires the missing software archive.

\section{Conclusion}
\label{sec:conclusion}

PCS-Bio provides a proof-carrying architecture for acyclic, clamped-input
genetic logic. Under its stated assumptions, ideal-checker acceptance implies
a joint temporal-refinement guarantee for all named outputs without assuming
independence among molecular events. The reported evaluation exercises selected
components of this architecture: an archived Cello execution yielded a
verifier-reconstructed 1,962-bp sequence. A separate exact rational checker
accepted one restricted finite-state instance, and sufficient-cube slicing
reduced the worst-assignment binary64 risk-charge sum in 55 of 60 deterministic
formula DAGs and the maximum accounting delay in 54. An integrated
sequence-bound temporal certificate and biological validation remain future
work.


\section*{Competing interests}
\noindent
The authors declare no financial or non-financial competing interests.

\section*{Acknowledgements}
\noindent
The research of Arman Ferdowsi was funded by the Austrian Science Fund (FWF) through grant \href{https://www.fwf.ac.at/en/research-radar/10.55776/ESP1705325}{10.55776/ESP1705325} and the \href{https://ucrisportal.univie.ac.at/en/projects/symbolische-zeitanalyse-asynchroner-schaltungen/}{associated project}. The research of Laura Kovács was funded in whole or in part by the Vienna Science and Technology Fund Grant ForSmart 10.47379/ICT22007, and the SBA Research COMET Center SBA-K1 NGC managed by the Austrian Research Promotion Agency.

\bibliographystyle{unsrtnat}
\bibliography{references}

@article{Nielsen2016,
  author = {Nielsen, Alec A. K. and Der, Bryan S. and Shin, Jonghyeon and Vaidyanathan, Prashant and Paralanov, Vanya and Strychalski, Elizabeth A. and Ross, David and Densmore, Douglas and Voigt, Christopher A.},
  title = {Genetic Circuit Design Automation},
  journal = {Science},
  year = {2016},
  volume = {352},
  number = {6281},
  pages = {aac7341},
  doi = {10.1126/science.aac7341}
}

@article{Jones2022,
  author = {Jones, Timothy S. and Oliveira, Samuel M. D. and Myers, Chris J. and Voigt, Christopher A. and Densmore, Douglas},
  title = {Genetic Circuit Design Automation with {Cello} 2.0},
  journal = {Nature Protocols},
  year = {2022},
  volume = {17},
  pages = {1097--1113},
  doi = {10.1038/s41596-021-00675-2}
}

@article{Kobiela2026,
  author = {Kobiela, Michal and Oyarz{\'u}n, Diego A. and Gutmann, Michael U.},
  title = {Risk-Averse Optimization of Genetic Circuits Under Uncertainty},
  journal = {Cell Systems},
  year = {2026},
  volume = {17},
  number = {1},
  pages = {101476},
  doi = {10.1016/j.cels.2025.101476}
}

@article{McLaughlin2020,
  author = {McLaughlin, James Alastair and Beal, Jacob and M{\i}s{\i}rl{\i}, G{\"o}ksel and Gr{\"u}nberg, Raik and Bartley, Bryan A. and Scott-Brown, James and Vaidyanathan, Prashant and Fontanarrosa, Pedro and Oberortner, Ernst and Wipat, Anil and Gorochowski, Thomas E. and Myers, Christopher J.},
  title = {The {Synthetic Biology Open Language} Version 3: Simplified Data Exchange for Bioengineering},
  journal = {Frontiers in Bioengineering and Biotechnology},
  year = {2020},
  volume = {8},
  pages = {1009},
  doi = {10.3389/fbioe.2020.01009}
}

@article{Poole2022,
  author = {Poole, William and Pandey, Ayush and Shur, Andrey and Tuza, Zoltan A. and Murray, Richard M.},
  title = {{BioCRNpyler}: Compiling Chemical Reaction Networks from Biomolecular Parts in Diverse Contexts},
  journal = {PLOS Computational Biology},
  year = {2022},
  volume = {18},
  number = {4},
  pages = {e1009987},
  doi = {10.1371/journal.pcbi.1009987}
}

@article{Sequeiros2023,
  author = {Sequeiros, Carlos and V{\'a}zquez, Carlos and Banga, Julio R. and Otero-Muras, Irene},
  title = {Automated Design of Synthetic Gene Circuits in the Presence of Molecular Noise},
  journal = {ACS Synthetic Biology},
  year = {2023},
  volume = {12},
  number = {10},
  pages = {2865--2876},
  doi = {10.1021/acssynbio.3c00033}
}

@article{Konur2023,
  author = {Konur, Savas and Gheorghe, Marian and Krasnogor, Natalio},
  title = {Verifiable Biology},
  journal = {Journal of the Royal Society Interface},
  year = {2023},
  volume = {20},
  number = {202},
  pages = {20230019},
  doi = {10.1098/rsif.2023.0019}
}

@article{Konur2021,
  author = {Konur, Savas and Mierla, Laurentiu and Fellermann, Harold and Ladroue, Christophe and Brown, Bradley and Wipat, Anil and Twycross, Jamie and Dun, Boyang Peter and Kalvala, Sara and Gheorghe, Marian and Krasnogor, Natalio},
  title = {Toward Full-Stack In Silico Synthetic Biology: Integrating Model Specification, Simulation, Verification, and Biological Compilation},
  journal = {ACS Synthetic Biology},
  year = {2021},
  volume = {10},
  number = {8},
  pages = {1931--1945},
  doi = {10.1021/acssynbio.1c00143}
}

@article{Buecherl2021,
  author = {Buecherl, Lukas and Roberts, Riley and Fontanarrosa, Pedro and Thomas, Payton J. and Mante, J. and Zhang, Zhen and Myers, Chris J.},
  title = {Stochastic Hazard Analysis of Genetic Circuits in {iBioSim} and {STAMINA}},
  journal = {ACS Synthetic Biology},
  year = {2021},
  volume = {10},
  number = {10},
  pages = {2532--2540},
  doi = {10.1021/acssynbio.1c00159}
}

@inproceedings{Yordanov2011,
  author = {Yordanov, Boyan and Belta, Calin},
  title = {A Formal Verification Approach to the Design of Synthetic Gene Networks},
  booktitle = {Proceedings of the 50th IEEE Conference on Decision and Control and European Control Conference},
  year = {2011},
  pages = {4873--4878},
  doi = {10.1109/CDC.2011.6160969}
}

@inproceedings{Abate2024,
  author = {Abate, Alessandro and Giacobbe, Mirco and Roy, Diptarko},
  title = {Stochastic Omega-Regular Verification and Control with Supermartingales},
  booktitle = {Computer Aided Verification},
  series = {Lecture Notes in Computer Science},
  volume = {14683},
  year = {2024},
  pages = {395--419},
  doi = {10.1007/978-3-031-65633-0_18}
}

@inproceedings{Abate2025,
  author = {Abate, Alessandro and Giacobbe, Mirco and Roy, Diptarko},
  title = {Quantitative Supermartingale Certificates},
  booktitle = {Computer Aided Verification},
  series = {Lecture Notes in Computer Science},
  volume = {15932},
  year = {2025},
  pages = {3--28},
  doi = {10.1007/978-3-031-98679-6_1}
}

@article{Lavaei2022,
  author = {Lavaei, Abolfazl and Soudjani, Sadegh and Abate, Alessandro and Zamani, Majid},
  title = {Automated Verification and Synthesis of Stochastic Hybrid Systems: A Survey},
  journal = {Automatica},
  year = {2022},
  volume = {146},
  pages = {110617},
  doi = {10.1016/j.automatica.2022.110617}
}

@article{Anand2024,
  author = {Anand, Mahathi and Lavaei, Abolfazl and Zamani, Majid},
  title = {Compositional Synthesis of Control Barrier Certificates for Networks of Stochastic Systems Against Omega-Regular Specifications},
  journal = {Nonlinear Analysis: Hybrid Systems},
  year = {2024},
  volume = {51},
  pages = {101427},
  doi = {10.1016/j.nahs.2023.101427}
}

@inproceedings{Anand2022,
  author = {Anand, Mahathi and Murali, Vishnu and Trivedi, Ashutosh and Zamani, Majid},
  title = {k-Inductive Barrier Certificates for Stochastic Systems},
  booktitle = {Proceedings of the 25th ACM International Conference on Hybrid Systems: Computation and Control},
  year = {2022},
  articleno = {12},
  pages = {1--11},
  doi = {10.1145/3501710.3519532}
}

@article{Nejati2022,
  author = {Nejati, Ameneh and Zamani, Majid},
  title = {From Dissipativity Theory to Compositional Construction of Control Barrier Certificates},
  journal = {Leibniz Transactions on Embedded Systems},
  year = {2022},
  volume = {8},
  number = {2},
  pages = {06:1--06:17},
  doi = {10.4230/LITES.8.2.6}
}

@inproceedings{Bartocci2013,
  author = {Bartocci, Ezio and Bortolussi, Luca and Nenzi, Laura},
  title = {A Temporal Logic Approach to Modular Design of Synthetic Biological Circuits},
  booktitle = {Computational Methods in Systems Biology},
  series = {Lecture Notes in Computer Science},
  volume = {8130},
  year = {2013},
  pages = {164--177},
  doi = {10.1007/978-3-642-40708-6_13}
}

@inproceedings{Necula1997,
  author = {Necula, George C.},
  title = {Proof-Carrying Code},
  booktitle = {Proceedings of the 24th ACM SIGPLAN-SIGACT Symposium on Principles of Programming Languages},
  year = {1997},
  pages = {106--119},
  doi = {10.1145/263699.263712}
}

@inproceedings{Wetzler2014,
  author = {Wetzler, Nathan and Heule, Marijn J. H. and Hunt, Jr., Warren A.},
  title = {{DRAT-trim}: Efficient Checking and Trimming Using Expressive Clausal Proofs},
  booktitle = {Theory and Applications of Satisfiability Testing},
  series = {Lecture Notes in Computer Science},
  volume = {8561},
  year = {2014},
  pages = {422--429},
  doi = {10.1007/978-3-319-09284-3_31}
}

@article{Gillespie1977,
  author = {Gillespie, Daniel T.},
  title = {Exact Stochastic Simulation of Coupled Chemical Reactions},
  journal = {The Journal of Physical Chemistry},
  year = {1977},
  volume = {81},
  number = {25},
  pages = {2340--2361},
  doi = {10.1021/j100540a008}
}

@article{Beal2012EndToEnd,
  author  = {Jacob Beal and Ron Weiss and Douglas Densmore and Aaron Adler and Evan Appleton and Jonathan Babb and Swapnil Bhatia and Nathan Davidsohn and Traci Haddock and Joseph Loyall and Richard Schantz and Vasileios Vasilev and Fusun Yaman},
  title   = {An End-to-End Workflow for Engineering of Biological Networks from High-Level Specifications},
  journal = {ACS Synthetic Biology},
  volume  = {1},
  number  = {8},
  pages   = {317--331},
  year    = {2012},
  doi     = {10.1021/sb300030d}
}

@article{Yaman2012Automated,
  author  = {Fusun Yaman and Swapnil Bhatia and Aaron Adler and Douglas Densmore and Jacob Beal},
  title   = {Automated Selection of Synthetic Biology Parts for Genetic Regulatory Networks},
  journal = {ACS Synthetic Biology},
  volume  = {1},
  number  = {8},
  pages   = {332--344},
  year    = {2012},
  doi     = {10.1021/sb300032y}
}

@article{Rodrigo2013AutoBioCAD,
  author  = {Guillermo Rodrigo and Alfonso Jaramillo},
  title   = {{AutoBioCAD}: Full Biodesign Automation of Genetic Circuits},
  journal = {ACS Synthetic Biology},
  volume  = {2},
  number  = {5},
  pages   = {230--236},
  year    = {2013},
  doi     = {10.1021/sb300084h}
}

@article{Roehner2014DAG,
  author  = {Nicholas Roehner and Chris J. Myers},
  title   = {Directed Acyclic Graph-Based Technology Mapping of Genetic Circuit Models},
  journal = {ACS Synthetic Biology},
  volume  = {3},
  number  = {8},
  pages   = {543--555},
  year    = {2014},
  doi     = {10.1021/sb400135t}
}

@article{Madsen2014SMC,
  author  = {Curtis Madsen and Zhen Zhang and Nicholas Roehner and Chris Winstead and Chris J. Myers},
  title   = {Stochastic Model Checking of Genetic Circuits},
  journal = {ACM Journal on Emerging Technologies in Computing Systems},
  volume  = {11},
  number  = {3},
  articleno = {23},
  pages   = {23:1--23:21},
  year    = {2014},
  doi     = {10.1145/2644817}
}

@article{Roehner2015SBOLSBML,
  author  = {Nicholas Roehner and Zhen Zhang and Tramy Nguyen and Chris J. Myers},
  title   = {Generating {Systems Biology Markup Language} Models from the {Synthetic Biology Open Language}},
  journal = {ACS Synthetic Biology},
  volume  = {4},
  number  = {8},
  pages   = {873--879},
  year    = {2015},
  doi     = {10.1021/sb5003289}
}

@article{Vaidyanathan2015Framework,
  author  = {Prashant Vaidyanathan and Bryan S. Der and Swapnil Bhatia and Nicholas Roehner and Ryan Silva and Christopher A. Voigt and Douglas Densmore},
  title   = {A Framework for Genetic Logic Synthesis},
  journal = {Proceedings of the IEEE},
  volume  = {103},
  number  = {11},
  pages   = {2196--2207},
  year    = {2015},
  doi     = {10.1109/JPROC.2015.2443832}
}

@article{Watanabe2019iBioSim,
  author = {Watanabe, Leandro and Nguyen, Tramy and Zhang, Michael and Zundel, Zach and Zhang, Zhen and Madsen, Curtis and Roehner, Nicholas and Myers, Chris J.},
  title = {{iBioSim} 3: A Tool for Model-Based Genetic Circuit Design},
  journal = {ACS Synthetic Biology},
  year = {2019},
  volume = {8},
  number = {7},
  pages = {1560--1563},
  doi = {10.1021/acssynbio.8b00078}
}

@article{Schladt2021,
  author = {Schladt, Tobias and Engelmann, Nicolai and Kubaczka, Erik and Hochberger, Christian and Koeppl, Heinz},
  title = {Automated Design of Robust Genetic Circuits: Structural Variants and Parameter Uncertainty},
  journal = {ACS Synthetic Biology},
  year = {2021},
  volume = {10},
  number = {12},
  pages = {3316--3329},
  doi = {10.1021/acssynbio.1c00193}
}

@article{Engelmann2023,
  author = {Engelmann, Nicolai and Schwarz, Tobias and Kubaczka, Erik and Hochberger, Christian and Koeppl, Heinz},
  title = {Context-Aware Technology Mapping in Genetic Design Automation},
  journal = {ACS Synthetic Biology},
  year = {2023},
  volume = {12},
  number = {2},
  pages = {446--459},
  doi = {10.1021/acssynbio.2c00361}
}

@article{Ganguly2015,
  author = {Ganguly, Arnab and Alt{\i}ntan, Derya and Koeppl, Heinz},
  title = {Jump-Diffusion Approximation of Stochastic Reaction Dynamics: Error Bounds and Algorithms},
  journal = {Multiscale Modeling \& Simulation},
  year = {2015},
  volume = {13},
  number = {4},
  pages = {1390--1419},
  doi = {10.1137/140983471}
}

@article{Kurtz1978,
  author = {Kurtz, Thomas G.},
  title = {Strong Approximation Theorems for Density Dependent {Markov} Chains},
  journal = {Stochastic Processes and their Applications},
  year = {1978},
  volume = {6},
  number = {3},
  pages = {223--240},
  doi = {10.1016/0304-4149(78)90020-0}
}

@article{Pedersen2009GEC,
  author = {Pedersen, Michael and Phillips, Andrew},
  title = {Towards Programming Languages for Genetic Engineering of Living Cells},
  journal = {Journal of the Royal Society Interface},
  year = {2009},
  volume = {6},
  number = {Suppl 4},
  pages = {S437--S450},
  doi = {10.1098/rsif.2008.0516.focus}
}

@article{Czar2009GenoCAD,
  author = {Czar, Michael J. and Cai, Yizhi and Peccoud, Jean},
  title = {Writing {DNA} with {GenoCAD}},
  journal = {Nucleic Acids Research},
  year = {2009},
  volume = {37},
  number = {Web Server issue},
  pages = {W40--W47},
  doi = {10.1093/nar/gkp361}
}

@article{Bilitchenko2011Eugene,
  author = {Bilitchenko, Lesia and Liu, Adam and Cheung, Sherine and Weeding, Emma and Xia, Bing and Leguia, Mariana and Anderson, J. Christopher and Densmore, Douglas},
  title = {{Eugene}: A Domain Specific Language for Specifying and Constraining Synthetic Biological Parts, Devices, and Systems},
  journal = {PLOS ONE},
  year = {2011},
  volume = {6},
  number = {4},
  pages = {e18882},
  doi = {10.1371/journal.pone.0018882}
}

@book{Hachtel1996,
  author = {Hachtel, Gary D. and Somenzi, Fabio},
  title = {Logic Synthesis and Verification Algorithms},
  publisher = {Kluwer Academic Publishers},
  address = {Boston, MA},
  year = {1996},
  doi = {10.1007/b117060}
}

@article{Pandey2022,
  author = {Pandey, Ayush and Incer, Inigo and Sangiovanni-Vincentelli, Alberto and Murray, Richard M.},
  title = {From Specification to Implementation: Assume-Guarantee Contracts for Synthetic Biology},
  journal = {bioRxiv},
  year = {2022},
  doi = {10.1101/2022.04.08.487709}
}

@inproceedings{Incer2024,
  author = {Incer, Inigo and Pandey, Ayush and Nolan, Nicholas and Peterman, Emma L. and Galloway, Kate E. and Sontag, Eduardo D. and Del Vecchio, Domitilla},
  title = {Guaranteeing System-Level Properties in Genetic Circuits Subject to Context Effects},
  booktitle = {63rd IEEE Conference on Decision and Control},
  pages = {5558--5565},
  year = {2024},
  doi = {10.1109/CDC56724.2024.10886081}
}

@article{Incer2025,
  author = {Incer, Inigo and Badithela, Apurva and Graebener, Josefine B. and Mallozzi, Piergiuseppe and Pandey, Ayush and Rouquette, Nicolas and Yu, Sheng-Jung and Benveniste, Albert and Caillaud, Benoit and Murray, Richard M. and Sangiovanni-Vincentelli, Alberto L. and Seshia, Sanjit A.},
  title = {Pacti: Assume-Guarantee Contracts for Efficient Compositional Analysis and Design},
  journal = {ACM Transactions on Cyber-Physical Systems},
  volume = {9},
  number = {1},
  pages = {3:1--3:35},
  year = {2025},
  doi = {10.1145/3704736}
}

@misc{CelloUCF2021,
  author       = {{CIDARLAB}},
  title        = {{Cello-UCF}: user constraint files, input sensor files, output device files, and {JSON} schemas for {Cello}},
  year         = {2021},
  howpublished = {\url{https://github.com/CIDARLAB/Cello-UCF/releases/tag/v1.0}},
  note         = {Release v1.0 at commit b8147b339a6c6189964c779c967d23bdb98dee32, accessed 26 June 2026}
}

@article{Shin2020Display,
  author  = {Shin, Jonghyeon and Zhang, Shuyi and Der, Bryan S. and Nielsen, Alec A. K. and Voigt, Christopher A.},
  title   = {Programming {Escherichia coli} to function as a digital display},
  journal = {Molecular Systems Biology},
  year    = {2020},
  volume  = {16},
  number  = {3},
  pages   = {e9401},
  doi     = {10.15252/msb.20199401}
}

@article{Raj2008,
  author = {Raj, Arjun and van Oudenaarden, Alexander},
  title = {Nature, Nurture, or Chance: Stochastic Gene Expression and Its Consequences},
  journal = {Cell},
  year = {2008},
  volume = {135},
  number = {2},
  pages = {216--226},
  doi = {10.1016/j.cell.2008.09.050}
}

@article{Golding2005,
  author = {Golding, Ido and Paulsson, Johan and Zawilski, Scott M. and Cox, Edward C.},
  title = {Real-Time Kinetics of Gene Activity in Individual Bacteria},
  journal = {Cell},
  year = {2005},
  volume = {123},
  number = {6},
  pages = {1025--1036},
  doi = {10.1016/j.cell.2005.09.031}
}

@book{EthierKurtz1986,
  author = {Ethier, Stewart N. and Kurtz, Thomas G.},
  title = {Markov Processes: Characterization and Convergence},
  publisher = {Wiley},
  year = {1986},
  doi = {10.1002/9780470316658}
}

@book{Applebaum2009,
  author = {Applebaum, David},
  title = {L{\'e}vy Processes and Stochastic Calculus},
  edition = {2},
  publisher = {Cambridge University Press},
  year = {2009},
  doi = {10.1017/CBO9780511809781}
}

\clearpage

\appendix
\noindent
The appendices collect the formal and computational material needed to audit the construction without interrupting the main argument. They specify the proof object and checker, give the complete worked example and sequence-binding obligations, record the extended model and certificate conditions, and report the supplementary experiments and the software-artifact description.

\section{Proof object format and checker procedure}
\label{app:format}
The following abstract grammar summarizes the ideal proof object. It is neither
a serialization standard nor the serialization used by the restricted exact
checker.

\begin{verbatim}
PCSProof = {
  source_hash
  derived_admissible_domain_encoding
  output_roots
  technology_hash
  exact_dna_word
  sbol_range_table
  model_binding_table
  drat_proof
  mapping_assignment
  mapping_constraint_witnesses
  local_certificates
  decision_diagram_variable_order
  cube_decision_diagrams
  joint_witness_risk_delay_diagrams
  exact_rational_terminals
  diagram_semantics_proofs
  transfer_certificates
  context_certificate
  semantic_assumption_manifest
  claimed_max_delay
  claimed_max_risk
}
\end{verbatim}

\begin{verbatim}
LocalCertificate = {
  gate_id
  cube
  requested_output
  parameter_box
  input_envelopes
  context_envelope
  operating_domain_boxes
  core_boxes
  tube_boxes
  reach_function
  stay_function
  derivative_enclosures
  diffusion_enclosures
  switching_enclosures
  jump_integral_enclosures
  reachable_jump_boundary_bounds
  alpha
  beta
  delay
}
\end{verbatim}

\begin{verbatim}
TransferCertificate = {
  obligation_id
  exact_model_id
  surrogate_model_id
  common_data_space
  exact_common_data_map
  surrogate_common_data_map
  common_start_assumption_predicate
  exact_initialization_map
  surrogate_initialization_map
  surrogate_certificate_id
  observation_path_metric
  borel_success_set
  erosion_radius
  synchronous_coupling_construction
  discrepancy_tail_or_first_moment_bound
  uniformity_domain
  bound_proof
}
\end{verbatim}

Exact and surrogate initial states need not be equal: their declared maps
construct both states from common initialization data. Cryptographic digests
establish artifact identity and integrity. They do not establish semantic
correctness.

For an unstructured artifact, the checker compares its exact bytes and trusted
digest. For a structured artifact, it parses the declared canonical
representation and verifies the identifier and digest bound to that
representation. Semantic identifiers need not themselves be digests, but the
checker cross-checks them within the bound objects. The checker rejects a
certificate that references a technology model different from the model linked
by the SBOL design. This prevents a proof for one gate characterization from
being attached to another sequence with the same informal name.

The ideal checker performs the following steps.
\begin{enumerate}
 \item Materialize and compare exact sequence bytes. Validate topology, canonical SBOL structure, graph extraction, and closed-world model derivation.
 \item Verify $m\geq1$, $\thetaSet\ne\varnothing$, $\context\ne\varnothing$, nonemptiness of the verifier-supplied $\Uadm$, and syntactic nonemptiness of $\mathcal I_{0,D}(u)$ for every $u\in\Uadm$. Reconstruct $\mathbf F$, check any derived admissible-domain encoding for equivalence, reconstruct the mapped Boolean output vector, and construct the domain-restricted canonical miter CNF. Validate the DRAT proof against those clauses.
 \item Validate gate truth tables, port typing, mapping constraints, parameter aliases, threshold compatibility, complete local restriction maps, global-to-local law inclusion, and initial and context interfaces.
 \item Recompute every interval enclosure and derive exact-model local risks, including every synchronous transfer penalty.
 \item Check every sufficient cube against the reconstructed local truth table and verify that its selected contract requests the reconstructed Boolean node value.
 \item Reconstruct fixed-order joint-witness membership, exact rational risk, and max-plus delay diagrams on $\Uadm$.
 \item Verify that source-leaf, context, and shared-node charges are included with the declared one-time counting convention.
 \item Recompute the claimed maxima and accept only when they meet $\tau$ and $1-p$.
\end{enumerate}
The ideal trusted numerical kernel comprises exact rational arithmetic, verified outward rounding, and any explicitly admitted transcendental-enclosure kernels. The ideal symbolic kernel comprises canonical parsing and normalization, exact-byte comparison, graph and model reconstruction, fixed-order decision-diagram operations, canonical CNF generation, and a DRAT checker. The restricted exact implementation has a smaller trusted base consisting of strict parsing, canonical serialization, Python integer and rational arithmetic, checker-local reconstruction algorithms, and an internal validator for proofs containing only RUP clause additions.

\section{Alternative exact-model proof back ends}
\label{app:backends}
The main text uses switching jump-diffusion certificates because they connect naturally to fast transcription surrogates. The ideal proof architecture also supports other backends. The reported implementation has two distinct finite discrete-time paths. The restricted checker reconstructs exact rational consequences of verifier-supplied finite transition tables. The separate synthetic replay backend evaluates a related model in binary64. Both finite-state backends operate on supplied benchmark transition tables independently of the archived Cello mapping-and-sequence reconstruction track.

\paragraph{Finite discrete-time Markov chain.}
For a finite transition matrix $P$ and declared grid-time path semantics, reach
and stay failure vectors are obtained by backward dynamic programming. An ideal
proof object can store the finite-state partition, a checked transition
derivation, and rationally enclosed recurrences. The restricted exact checker
instead treats the finite transition matrices as verifier-controlled
mathematical premises and checks their consequences using exact rational
arithmetic. The separate binary64 replay track approximates its own synthetic
backend numerically.

\paragraph{Finite continuous-time abstraction.}
A continuous-time Markov chain (CTMC) is a jump process on a discrete state
space whose transition rates depend on the current state. A finite-state
projection or interval abstraction can produce a probabilistic
transition system. Given a sound finite-state abstraction, Continuous
Stochastic Logic (CSL) model checking can establish reach-and-stay probability
bounds for the abstract system. A certificate must also contain evidence
transferring those bounds to the original process.

\paragraph{Direct CTMC supermartingales.}
For a countable-state chemical reaction network (CRN) with generator
\[
(\gen V)(x)=\sum_r a_r(x)\bigl(V(x+\nu_r)-V(x)\bigr),
\]
where $a_r$ is the propensity and $\nu_r$ is the stoichiometric change vector
of reaction $r$,
the reach and stay lemmas apply directly. Polynomial or rational certificates
can be checked over a truncated region together with a drift or tail
certificate.

\paragraph{Statistical model checking.}
A confidence interval from simulation is useful for screening but is not accepted as a universal proof over a continuous parameter box. It may be used only when the specification itself is statistical and the certificate records the sampling theorem, confidence allocation, random seed commitments, and a finite hypothesis family. That mode gives a different guarantee from \cref{thm:end-to-end}.

\paragraph{Deterministic ODE backend.}
An ordinary differential equation (ODE) specifies continuous state evolution
without stochastic forcing. When noise is intentionally excluded,
deterministic reachability and
invariance can be proved by barrier certificates or by a $\delta$-complete
satisfiability-modulo-theories (SMT) solver when it returns \textsc{unsat} for
the negated obligation. A $\delta$-\textsc{sat} result satisfies only a
numerically relaxed formula and is not a proof of exact validity. The global
sufficient-cube composition is unchanged, with zero stochastic failure for
exact deterministic obligations and explicit uncertainty over parameters and
context.

\section{Full sequence-binding obligations}
\label{app:sequence}

The ideal sequence-binding judgment requires a technology mapping to satisfy Boolean compatibility, molecular port compatibility, regulator orthogonality, fan-out limits, threshold compatibility, host and medium constraints, copy-number constraints, resource and burden bounds, and assembly grammar. A typical cassette is
\[
\iota_v^{-}\cdot P_v(\operatorname{op}_{v,1},\ldots,\operatorname{op}_{v,k})\cdot
 \operatorname{RBS}_v\cdot\operatorname{CDS}_v\cdot T_v\cdot\iota_v^{+}.
\]
Here $P_v$ is the promoter, $\operatorname{op}_{v,j}$ is the operator site for input $j$, $T_v$ is the terminator, and $\iota_v^{-}$ and $\iota_v^{+}$ are the left and right insulation elements. The ideal checker materializes the complete nucleotide word, checks exact bytes, topology, orientation, coordinate coverage, and every grammar-specific overlap constraint. Different SBOL feature objects are allowed to overlap when permitted by SBOL and the selected grammar. Only internally inconsistent or grammar-forbidden overlaps are rejected. The checker compares a canonical SBOL graph rather than a noncanonical serialization, reconstructs every graph interaction, and regenerates the model or replays a complete model-derivation proof. Every relevant model component must have a declared origin, and no relevant SBOL interaction or model reaction may be omitted from the binding relation. The current evidence tracks verify complementary subsets of this judgment and can be combined only through a checked sequence-to-model derivation.

\section{Extended generator and model conditions}
\label{app:generator}
This appendix supplies the generator used by the reach and stay inequalities.
Its event conventions and state coordinates must match the local certificate,
including any resets or jointly changing coordinates.

For a function $V(t,x,\mu)$ in the extended-generator domain, the
finite-activity switching jump-diffusion~\eqref{eq:sjd} has the following
generator under the simple-event and disjoint-event conventions stated there.
It acts on the certificate coordinate $(X,M)$, with predictable inputs and
context entering its coefficients:
\begin{align}
 (\partial_t+&\gen^{\theta,U,H})V(t,x,\mu)
 =\nonumber\\
& \partial_tV(t,x,\mu)
 +\nabla_xV(t,x,\mu)^{\mathsf T}f_\mu(t,x,U,H,\theta)\nonumber\\
 &+\frac12\operatorname{tr}\!\left(
 B_\mu(t,x,U,H,\theta)
 B_\mu(t,x,U,H,\theta)^{\mathsf T}
 \nabla_x^2V(t,x,\mu)
 \right)\nonumber\\
 &+\int_{\mathcal Z}\!
 \left[
 V\bigl(t,x+g_\mu(t,x,U,H,\theta,z),\mu\bigr)-V(t,x,\mu)
 \right]
 \nu_\mu(t,x,U,H,\theta,dz)\nonumber\\
 &+\sum_{\mu'\ne\mu}
 \lambda_{\mu\mu'}(t,x,U,H,\theta)
 \bigl(V(t,x,\mu')-V(t,x,\mu)\bigr).
 \label{eq:extended-generator}
\end{align}
The displayed switching term assumes that a mode transition changes only the
discrete mode. If it also resets the continuous state, replace
$V(t,x,\mu')-V(t,x,\mu)$ by
\[
V\!\left(t,R_{\mu\mu'}(t,x,U,H,\theta),\mu'\right)-V(t,x,\mu),
\]
where $R_{\mu\mu'}$ is the declared reset map. If a model permits a joint
continuous-state jump and mode change, the generator must instead use the
joint increment $V(t,x',\mu')-V(t,x,\mu)$ integrated against its declared
joint transition kernel. Adding separate increments for the two coordinates
would in general give a different generator.
All inequalities are checked uniformly over the declared parameter representation, fixed-input tubes, masked-input envelopes, initial set, and the values admitted by the certified context envelope. That envelope is generated by admissible predictable context policies. The certificate function is defined on the verification domain, its closure, and every certified reachable jump-landing set on which it is evaluated. Unbounded marks require a certified finite-region decomposition and a rigorous tail or moment bound.

\section{Local certificate language and resource treatment}
\label{app:local-format}
The proof object stores the reach and stay functions, rational box decompositions, interval bounds for derivatives, drift, diffusion, switching, jump expectations, and all boundary sets. Polynomial and rational Hill propensities are checked by outward-rounded interval evaluation on positive boxes. For unbounded geometric bursts, the jump expectation is split at a certified cutoff and the remainder is bounded by a probability-generating-function or moment inequality.

Shared resources can be explicit joint state or can be supplied by an
invariant-envelope certificate. In the latter case, every gate inequality is
uniform over the envelope and the envelope failure probability is charged once
globally.
Correlation is permitted only when every shared effect is represented in the
joint model or a certified envelope. Otherwise the composition theorem does
not apply.

On the state space $[0,\infty)$ with its relative topology, consider the
scalar process
\begin{equation}
dX_t=(a-\delta X_{t^-})\,dt+\sigma(X_{t^-})\,dW_t
+\int_{\mathcal J}j\,N(dt,dj),
\end{equation}
where $a\ge0$ is the production rate, $\delta\ge0$ is the per-molecule
degradation rate, $\sigma$ is the diffusion coefficient, and $N$ is a simple, finite-activity raw
integer-valued random measure with predictable compensator
$\nu_{X_{t^-}}(dj)\,dt$. Assume $\mathcal J\subseteq[0,\infty)$ and the existence, nonexplosion,
and state-preservation conditions of Assumption~\ref{ass:wellposed}. At zero,
the drift is $a\ge0$, the diffusion coefficient satisfies $\sigma(0)=0$,
and nonnegative jump marks preserve the state space. These boundary
conditions do not replace the required regularity and existence hypotheses.

Let the low safe domain be $D=[0,L)$ and its core be
$K=[0,L-\Delta_L]$, where $0<\Delta_L\le L$. For $\lambda\ge0$,
$\gamma>0$, and $\beta\in(0,1]$, define
\begin{equation}
S(t,x)=\beta\exp\!\bigl(-\lambda t+\gamma(x-L+\Delta_L)\bigr).
\label{eq:exponential-stay}
\end{equation}
For $t\in[0,T]$ and $x\in K$, one has $S(t,x)\le\beta$. Every continuous
exit or nonnegative jump overshoot from $D$ lands at $x\ge L$. Consequently,
\begin{equation}
\beta\exp(-\lambda T+\gamma\Delta_L)\ge1
\label{eq:exponential-boundary}
\end{equation}
ensures $S(t,x)\ge1$ at every possible exit landing. Dividing the generator
inequality by $S>0$ yields
\begin{equation}
-\lambda+\gamma(a-\delta x)
+\frac12\gamma^2\sigma^2(x)
+\int_{\mathcal J}(e^{\gamma j}-1)\nu_x(dj)
\le0
\quad\text{for }(t,x)\in[0,T]\times D.
\label{eq:exponential-generator}
\end{equation}
The jump moment-generating function must be finite at $\gamma$. If the
certificate depends on promoter mode, include the switching term from
\eqref{eq:extended-generator}. It vanishes for a mode-independent $S$ when mode changes leave $X$ unchanged.

\section{Schematic worked example}
\label{app:example-full}

If the characterized sensor cassettes are supplied elsewhere in the same cell,
set $w_{\mathrm{in}}=\varepsilon$, the empty word. If they are encoded by this
construct, let $w_{\mathrm{in}}$ denote their declared concatenated sequence.
Let $\mathsf{backbone}_{\ell}$ and $\mathsf{backbone}_{r}$ denote the declared
left and right backbone segments under the selected canonical linearization. A
symbolic sequence is
\begin{align}
 w_D={}&\mathsf{backbone}_{\ell}\cdot w_{\mathrm{in}}\cdot
 \iota_q^{-}\cdot P_q(\operatorname{op}_A,\operatorname{op}_{\neg B})\cdot
 \operatorname{RBS}_q\cdot\operatorname{CDS}(R_q)\cdot T_q\cdot\iota_q^{+}\nonumber\\
 &\cdot\iota_y^{-}\cdot P_y(\operatorname{op}_{R_q},\operatorname{op}_{\neg C})\cdot
 \operatorname{RBS}_y\cdot\operatorname{CDS}(Y)\cdot T_y\cdot\iota_y^{+}
 \cdot\mathsf{backbone}_{r}.
\end{align}
All symbols expand to exact library bytes after technology selection. When
supplied by a separate construct, the sensor producers and their dynamics
remain covered by the declared source or context contracts. The Boolean values
are listed once in \cref{tab:example-truth}.

\begin{table}[t]
\centering
\caption{Complete Boolean evaluation of the worked formula.}
\label{tab:example-truth}
\begin{tabular}{ccccccc}
\toprule
$A$ & $B$ & $C$ & $\neg B$ & $q=\neg(A\lor\neg B)$ & $\neg C$ & $y$\\
\midrule
0&0&0&1&0&1&0\\
0&0&1&1&0&0&1\\
0&1&0&0&1&1&0\\
0&1&1&0&1&0&0\\
1&0&0&1&0&1&0\\
1&0&1&1&0&0&1\\
1&1&0&0&0&1&0\\
1&1&1&0&0&0&1\\
\bottomrule
\end{tabular}
\end{table}
\FloatBarrier

For each row, the root NOR selects a one-input high cube when its output is zero
and the two-input low cube when its output is one. When the root selects $q$,
the same rule is applied recursively. The assignment-mean summaries in
Appendix~\ref{app:proof-slicing-distributions} use the uniform distribution on
$\Uadm$. The worst-case results use no averaging distribution.

Order the root inputs as $(q,\neg C)$ and the $q$-gate inputs as $(A,\neg B)$. The assignment classes have the following exact witnesses.

\paragraph{$C=0$.}
Choose the root cube $(\starv,1)$. Then
\begin{align}
 \witness(u)&=\{\neg C,y\},\\
 \Risk(u)&=\epsenv+\eps_{\neg C,1}+\eps_{y,(\starv,1)},\\
 \Delay(u)&=\tau_{\neg C,1}+d_{y,(\starv,1)}.
 \label{eq:example-c0}
\end{align}

\paragraph{$C=1$ and $q=0$.}
The root uses the two-low cube $(0,0)$. Let $j\in\{A,\neg B\}$ be the high controlling input selected by the certificate, with corresponding $q$-gate cube $c_j\in\{(1,\starv),(\starv,1)\}$. Then
\begin{align}
 \witness(u)&=\{\neg C,j,q,y\},\\
 \Risk(u)&=\epsenv+\eps_{\neg C,0}+\eps_{j,1}
 +\eps_{q,c_j}+\eps_{y,(0,0)},\\
 \Delay(u)&=d_{y,(0,0)}+
 \max\{\tau_{\neg C,0},\ d_{q,c_j}+\tau_{j,1}\}.
 \label{eq:example-q0}
\end{align}
If both $A$ and $\neg B$ are high, the certificate records the chosen controller explicitly.

\paragraph{$C=1,A=0,B=1$.}
Here $(A,\neg B)=(0,0)$, so $q=1$ and the root uses $(1,\starv)$. Thus
\begin{align}
 \witness(u)&=\{A,\neg B,q,y\},\\
 \Risk(u)&=\epsenv+\eps_{A,0}+\eps_{\neg B,0}
 +\eps_{q,(0,0)}+\eps_{y,(1,\starv)},\\
 \Delay(u)&=d_{y,(1,\starv)}+d_{q,(0,0)}
 +\max\{\tau_{A,0},\tau_{\neg B,0}\}.
 \label{eq:example-q1}
\end{align}

For illustration, if $\eps_{\neg C,1}=0.004$ and
$\eps_{y,(\starv,1)}=0.007$, the $C=0$ witness risk is
$\epsenv+0.011$. It equals $0.011$ only when $\epsenv=0$.
These values illustrate aggregation and are not experimental claims. This example is schematic because it does not include a concrete technology file, interval certificate, synchronous coupling, or checker transcript.

\section{Computational evaluation details}
\label{app:computational-details}
The following summaries supplement the main accounting comparison with
assignment means, structural growth at fixed input dimension, and the
reported semantic-mutation outcomes. They retain the evidence limitations
stated in Section~\ref{sec:computational-evaluation}.

\FloatBarrier
\subsection{Additional proof-slicing summaries}
\label{app:proof-slicing-distributions}
\paragraph{Assignment-mean savings for the named circuits.}
For completeness, \Cref{fig:sufficient-cube-savings} gives an
assignment-mean summary for the same five named circuits. Let
$R_{\mathrm{s}}(u)$ and $R_{\mathrm{f}}(u)$ denote the sufficient-cube and
full-cone risk-charge sums for assignment $u$. Define
\[
\overline R_{\mathrm{s}}
=\frac{1}{|\Uadm|}\sum_{u\in\Uadm}R_{\mathrm{s}}(u),
\qquad
\overline R_{\mathrm{f}}
=\frac{1}{|\Uadm|}\sum_{u\in\Uadm}R_{\mathrm{f}}(u).
\]
The reported reduction in assignment-mean risk charge is
\[
100\left(1-\frac{\overline R_{\mathrm{s}}}
                    {\overline R_{\mathrm{f}}}\right),
\]
where $\overline R_{\mathrm{f}}>0$ for every reported circuit. This is the
fractional reduction between two assignment means, not the mean of pointwise
percentage reductions. The witness-size metric is defined analogously. For
these synthetic circuits $\Uadm=\Bool^n$, so the means use the uniform
distribution over admitted assignments. Across the five circuits, the
mean risk-charge reduction ranges from $10.9\%$ to $61.1\%$, while the mean witness-size reduction ranges from $31.2\%$ to $69.5\%$. These quantities
describe average replay accounting. They are not used for the replay outcome, which remains
determined by the worst-assignment risk-charge sum and maximal accounting delay in the main comparison.

\begin{figure}[t]
\centering
\includegraphics[width=0.98\linewidth]
{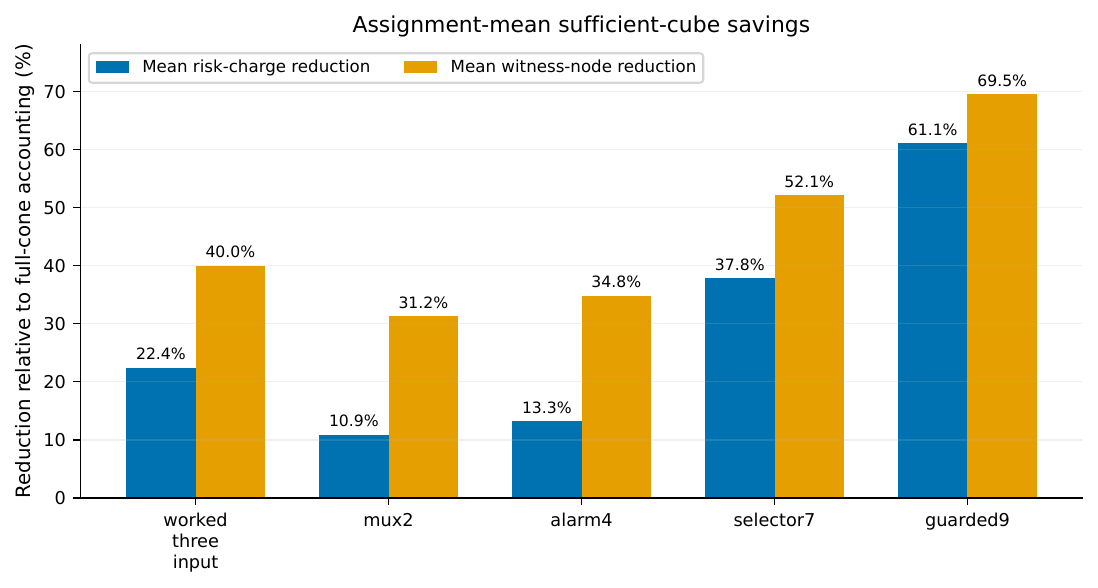}
\caption{Assignment-mean savings from sufficient-cube slicing relative to
full-cone accounting for the five named circuits. Risk-charge reduction is computed
from the sums over all admitted assignments, and witness-size reduction is
defined analogously. Reconvergent nodes are counted once, and the global
envelope charge follows the same counting convention under both methods.
These assignment-mean quantities complement the main worst-assignment replay values.}
\label{fig:sufficient-cube-savings}
\end{figure}

The archived formula suite comprises 60 deterministic expression DAGs: five
archived instances for every pair of input count in $\{4,6,8,10\}$ and
requested internal-node count in $\{8,16,32\}$. Every archived formula is
compiled with the same lazy dual-rail NOR procedure and analyzed over all
assignments. The results are descriptive of this fixed corpus. They are not
estimates for a random formula-generating distribution.

\FloatBarrier
\subsection{Fixed-input structural growth benchmark}
To study growth in circuit representation while keeping exhaustive input
enumeration constant, we fixed the Boolean input dimension at eight and
increased the number of balanced source clauses. Three deterministic archived
instances were used at each size, producing mapped circuits with 13--160 NOR
gates. The replay thresholds were $p=0.90$ and $\tau=300$~min. All 21 bundles
returned \texttt{replay-pass}. The archived proof producer enumerates all
$2^8=256$ assignments. The CNF, sequence, local mapping table, and witness
diagrams grow with the circuit. Table~\ref{tab:fixed8-scalability} and
Figure~\ref{fig:scalability} report medians over the three instances.

\begin{table}[t]
\centering
\small
\caption{Representative fixed-eight-input structural-growth points from the
archived run. Each row reports medians over three deterministic instances.}
\label{tab:fixed8-scalability}
\begin{tabular}{@{}rrrrrrr@{}}
\toprule
\shortstack{Mapped\\NOR gates} &
Depth &
\shortstack{Sequence\\(bp)} &
\shortstack{Bundle\\(KiB)} &
\shortstack{Build\\(s)} &
\shortstack{Check\\(s)} &
\shortstack{Risk-diagram\\nodes}\\
\midrule
14  & 9 & $10{,}724$ & $583.7$     & $0.392$ & $0.275$ & 44\\
60  & 7 & $36{,}152$ & $1{,}665.5$ & $0.632$ & $0.606$ & 65\\
160 & 9 & $87{,}728$ & $4{,}010.2$ & $1.519$ & $1.392$ & 265\\
\bottomrule
\end{tabular}

\medskip
\begin{minipage}{0.96\linewidth}
\footnotesize
Bundle size is reported in kibibytes (KiB), where one KiB is 1,024 bytes.
Build and checking times are reported in seconds. The risk-diagram-node column
reports the number of decision-diagram nodes. Timings describe the archived
platform and are not intended for cross-system performance comparisons.
\end{minipage}
\end{table}

\begin{figure}[t]
\centering
\includegraphics[width=0.98\linewidth]{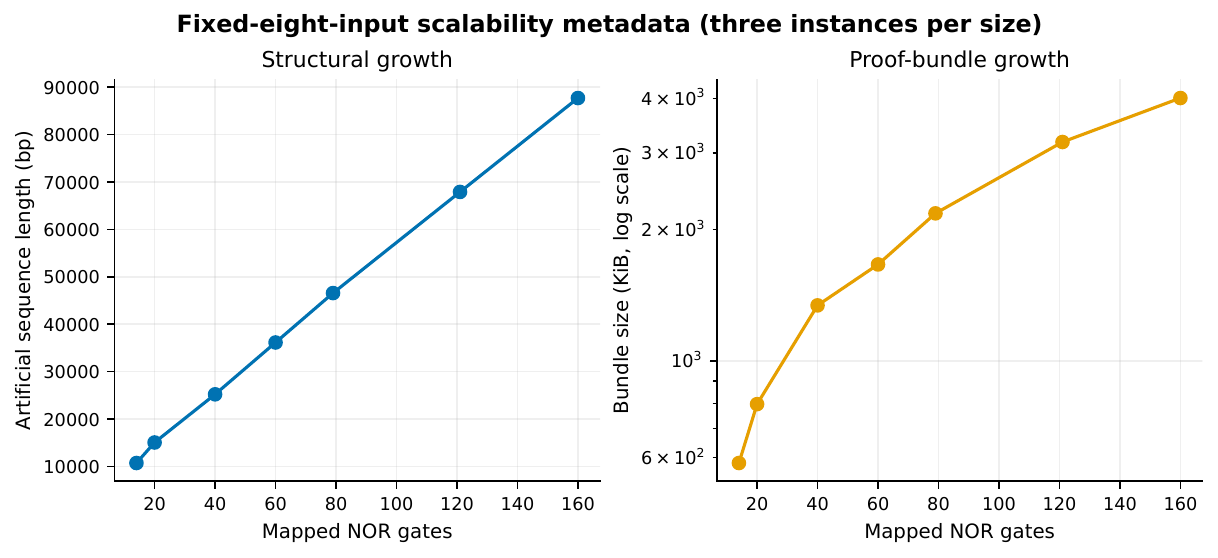}
\caption{Synthetic sequence length and proof-bundle size versus mapped NOR count at fixed eight-input dimension. Points report archived median summaries over three deterministic instances per size. Runtime values are reported in the accompanying table. The producer and replay checker use separate entry points but share semantic reconstruction routines.}
\label{fig:scalability}
\end{figure}

At 160 NOR gates, the median sequence length across the three instances is
87,728~bp and the median bundle size is approximately $3.92$ mebibytes (MiB).
The largest
numerical risk-charge sum is $0.0836$, and the largest accounting delay is
225~min. Both meet the prototype replay thresholds. Build and checking times
are archived descriptive measurements read from stored summaries. The current
reproduction metadata describe the plotting environment rather than a timing
rerun. These data support neither cross-system performance comparisons nor
claims about asymptotic scalability or growth in the number of primary inputs.

\FloatBarrier
\subsection{Software-bundle mutation test}
We recomputed the outer bundle hash after each mutation so that the replay failure could
not be attributed merely to a stale checksum. The program returned replay-pass for the unmodified
bundle and replay-fail for all eight semantically altered variants. The variants contain one changed DNA
byte, a changed feature range, a changed model identifier, a truncated proof
containing RUP clause additions,
an understated local risk charge, a changed sufficient cube, an understated
global risk-charge sum,
and an understated numerical discrepancy charge. \Cref{tab:checker-stress}
reports the resulting outcomes.

\begin{table}[t]
\centering
\caption{Software-bundle mutation test. Each semantic mutation is evaluated
after recomputing the outer bundle hash.}
\label{tab:checker-stress}
\begin{tabular}{lr}
\toprule
Case & Replay outcome \\
\midrule
Unmodified bundle & replay-pass \\
One DNA byte changed & replay-fail \\
SBOL range table changed & replay-fail \\
Model identifier changed & replay-fail \\
Final RUP empty-clause addition removed & replay-fail \\
Local risk charge understated & replay-fail \\
Sufficient cube changed & replay-fail \\
Global risk-charge sum understated & replay-fail \\
Numerical discrepancy charge understated & replay-fail \\
\bottomrule
\end{tabular}
\end{table}
\FloatBarrier
\section{Artifact and replay interface}
\label{app:software-artifact}
This section records the reported software interface and the strongest claim
associated with each command family. The separate software archive is absent
from the supplied manuscript project. The commands and expected outcomes below
therefore document the reproduction procedure rather than a replay performed
from this project. Reproduction requires the installable Python package,
versioned schemas, verifier-controlled examples, archived Cello inputs and
outputs, restricted finite-state test instance, synthetic replay cases, tests,
experiment manifests, machine-readable raw results, and release reports.
The manuscript's figures and tables alone do not establish their provenance.

On Linux, a fresh editable source installation uses
\begin{lstlisting}
python3 -m venv .venv
source .venv/bin/activate
python -m pip install -r requirements-dev-lock.txt
python -m pip install --no-deps -e .
python -m pip check
pcs-bio doctor
\end{lstlisting}
The reported software archive also includes PowerShell instructions and a Windows
continuous-integration matrix. The clean installation and release tests cited
above were observed on Linux. The supplied paper-output regeneration records
Windows~11 on ARM64 with CPython~3.13.2.

The restricted exact example is compiled and checked with
\begin{lstlisting}
pcs-bio validate-request \
  --request examples/rigorous_accept/request.json
pcs-bio compile \
  --request examples/rigorous_accept/request.json \
  --model examples/rigorous_accept/formal_model.json \
  --circuit examples/rigorous_accept/mapped_circuit.json \
  --output build/rigorous_accept
pcs-bio check \
  build/rigorous_accept/untrusted_producer_outputs/proof_bundle.json \
  --request examples/rigorous_accept/request.json \
  --model examples/rigorous_accept/formal_model.json \
  --report build/rigorous_accept/checker_report.json
\end{lstlisting}
The expected result is the restricted synthetic model-relative verdict
described in the evaluation. This command does not bind a Cello sequence. The
following excerpt shows the final verdict gate after request validation,
semantic reconstruction, and backend-eligibility checks.

\begin{lstlisting}[language=Python]
risk_ok = witnesses.max_joint_risk <= 1 - trusted_request.target_probability
delay_ok = witnesses.max_delay <= trusted_request.settling_deadline
verdict = (
    CheckerVerdict.ACCEPT
    if boolean_correct and risk_ok and delay_ok
    else CheckerVerdict.REJECT
)
_validate_claims(data["claims"], witness=witnesses, verdict=verdict)
\end{lstlisting}
The excerpt is illustrative rather than a proof. Before these lines execute,
the checker reconstructs Boolean correctness and the witnesses from the
trusted request and finite model.

The synthetic binary64 regression uses a separate command family.
\begin{lstlisting}
pcs-bio replay compile \
  --example worked_three_input \
  --name worked_three_input \
  --output build/replay_worked \
  --with-transfer
pcs-bio replay check \
  build/replay_worked/proof_bundle.json \
  --request build/replay_worked/trusted_request.json \
  --technology build/replay_worked/trusted_technology.yaml \
  --report build/replay_worked/checker_report.json
\end{lstlisting}
The successful diagnostic result is \texttt{replay-pass}, never formal
acceptance. Without additional numerical flags, the displayed command uses
five decimal places and 120 steps. Numerical settings for any other archived
replay must be taken from its manifest. They are not implied by this command.

The static and archived external Cello cases can be reconstructed without
network access or a new Cello invocation.
\begin{lstlisting}
pcs-bio cello audit \
  --kind static \
  --root examples/frozen_cello \
  --schema-root schemas/cello/upstream_v2
pcs-bio cello audit \
  --kind archived-live \
  --root examples/live_cello/nor_smoke \
  --schema-root schemas/cello/upstream_v2 \
  --pins external/cello_pins.json
\end{lstlisting}
The separate Cello results index names 20 optional presentation artifacts, of
which 18 are present. The absent files are \path{nor_gate_dpl.pdf} and
\path{nor_gate_dpl.png}. Neither export is used in mapping or sequence
reconstruction.
Both commands establish provenance and reconstruction results only. Neither
contains the temporal certificate required for ideal-checker acceptance. A new
external execution requires the pinned Cello runtime. The optional
\texttt{pcs-bio pipeline} command keeps formal and Cello evidence in one output
tree but returns \texttt{inconclusive} whenever no checked sequence-to-model
derivation connects them.

Paper-facing outputs are reconstructed with
\begin{lstlisting}
pcs-bio reproduce \
  --root . \
  --manifest experiments/manifest.yaml \
  --output paper_outputs
\end{lstlisting}
The command recomputes the BCD analysis, exact recurrence, named replay cases,
assignment distributions, archived formula corpus, the two Cello
reconstruction audits, and the restricted exact-checker result. It validates
the archived structural-growth
and mutation records and regenerates their tables and figures without
rerunning those experiments. It revalidates the immutable external Cello
archive instead of claiming a new invocation. The output manifest records a
byte count and SHA-256 digest for each listed artifact.

Trust boundaries differ by track. In the restricted exact checker, the request
and finite transition model are verifier-controlled premises, whereas the
mapped circuit and proof bundle are producer-controlled. In binary64 replay,
the request and synthetic technology file are verifier-controlled. In the
Cello track, the schema, source records, dependency lock, and external version
pins are trusted inputs. In every track, the checker reconstructs the
applicable semantic claims rather than trusting producer-supplied maxima or
digests. Verdicts from the three tracks are never combined.

\end{document}